\tikzset{>=stealth}
\crefname{figure}{Figure}{Figure}
\theoremstyle{plain}
\newtheorem{remark}{Remark}
\DeclareMathAlphabet{\mathpzc}{OT1}{pzc}{m}{it}
\mathchardef\mhyphen="2D 
\newcommand{\Nats}{\mathbb{N}}
\newcommand{\Natsplus}{\Nats^{+}}
\newcommand{\ov}{\overline}
\newcommand{\PointsToSet}[1]{\llbracket#1\rrbracket}
\newcommand{\Path}{\rightsquigarrow}
\newcommand{\DPath}[1]{\mathrel{\stackon[1pt]{$\rightsquigarrow$}{$\scriptscriptstyle#1$}}}
\newcommand{\APA}{\operatorname{APA}}
\newcommand{\SPAPA}{\operatorname{On-demand~APA}}
\newcommand{\APAPA}{\operatorname{Exhaustive~APA}}
\newcommand{\DReachability}[1]{\operatorname{D_{#1}-Reachability}}
\newcommand{\Label}{\lambda}
\newcommand{\Paragraph}[1]{\smallskip\noindent{\bf #1}}
\newcommand{\SubParagraph}[1]{\smallskip\noindent{\em #1}}
\newcommand{\Alphabet}{\Sigma}
\newcommand{\StartNonTerminal}{\mathcal{S}}
\newcommand{\OpenParenthesis}{\alpha}
\newcommand{\CloseParenthesis}{\ov{\alpha}}
\newcommand{\Otilde}{\tilde{O}}
\newcommand{\MonotoneCVP}{\operatorname{Monotone~CVP}}
\newcommand{\OV}{\operatorname{OV}}
\newcommand{\StackHeight}{\operatorname{SH}}
\newcommand{\MaxStackHeight}{\operatorname{MSH}}
\newcommand{\NumOpen}{\#_{\&}}
\newcommand{\NumClose}{\#_{*}}
\newcommand{\Dyck}{\mathcal{D}}
\newcommand{\Distance}{\delta}
\newcommand{\MaximaDistance}{\gamma}
\newcommand{\BellReachAlgo}{\operatorname{BellReachAlgo}}
\newcommand{\DOneAlgo}{\operatorname{D_{1}-ReachAlgo}}
\newcommand{\BoundedAPAAlgo}{\operatorname{BoundedAPAAlgo}}
\newcommand{\ceil}[1]{\lceil #1 \rceil}
\newcommand{\All}{\operatorname{All}}
\newcommand{\Grammar}{\mathcal{G}}
\newcommand{\Store}[1]{[#1]}
\newcommand{\Program}{\mathcal{P}}
\newcommand{\DEdge}[1]{\xrightarrow{#1}}
\newcommand{\AlgoCubic}{\operatorname{AndersenAlgo}}
\newcommand{\Worklist}{\mathcal{W}}
\newcommand{\Establish}{\operatorname{Establish}}
\newcommand{\Done}{\operatorname{Done}}
\newcommand{\ProcessEpsilon}{\operatorname{ProcessEps}}
\newcommand{\ProcessRef}{\operatorname{ProcessRef}}
\newcommand{\NC}{\operatorname{NC}}
\preto\tabular{\setcounter{magicrownumbers}{0}}
\newcounter{magicrownumbers}
\newcommand*{\centerfloat}{%
\parindent \z@
\leftskip \z@ \@plus 1fil \@minus \textwidth
\rightskip\leftskip
\parfillskip \z@skip}
\definecolor{mGreen}{rgb}{0,0.6,0}
\definecolor{mGray}{rgb}{0.5,0.5,0.5}
\definecolor{mPurple}{rgb}{0.58,0,0.82}
\definecolor{backgroundColour}{rgb}{0.95,0.95,0.92}
\lstdefinestyle{CStyle}{
backgroundcolor=\color{backgroundColour},   
commentstyle=\color{mGreen},
keywordstyle=\color{magenta},
numberstyle=\tiny\color{mGray},
stringstyle=\color{mPurple},
basicstyle=\footnotesize,
breakatwhitespace=false,         
breaklines=true,                 
captionpos=b,                    
keepspaces=true,                 
numbers=left,                    
numbersep=5pt,                  
showspaces=false,                
showstringspaces=false,
showtabs=false,                  
tabsize=2,
language=C
}
\newcounter{listtotal}\newcounter{listcntr}%
\NewDocumentCommand{\names}{o}{%
  \setcounter{listtotal}{0}\setcounter{listcntr}{0}%
  \renewcommand*{\do}[1]{\stepcounter{listtotal}}%
  \expandafter\docsvlist\expandafter{\namesarray}%
  \IfNoValueTF{#1}
    {\namesarray}
    {
     \renewcommand*{\do}[1]{\stepcounter{listcntr}\ifnum\value{listcntr}=#1\relax##1\fi}%
     \expandafter\docsvlist\expandafter{\namesarray}}%
}
\g@addto@macro\bfseries{\boldmath}
\begin{document}

\title{The Fine-Grained and Parallel Complexity of Andersen's Pointer Analysis}

\author{Anders Alnor Mathiasen}
\affiliation{
\institution{Aarhus University}            
\streetaddress{Aabogade 34}
\city{Aarhus}
\postcode{8200}
\country{Denmark}                    
}
\email{au611509@uni.au.dk}          

\author{Andreas Pavlogiannis}
\affiliation{
\institution{Aarhus University}            
\streetaddress{Aabogade 34}
\city{Aarhus}
\postcode{8200}
\country{Denmark}                    
}
\email{pavlogiannis@cs.au.dk}          

 \begin{abstract}
Pointer analysis is one of the fundamental problems in static program analysis.
Given a set of pointers, the task is to produce a useful over-approximation of the memory locations that each pointer may point-to at runtime.
The most common formulation is Andersen's Pointer Analysis (APA), defined as an inclusion-based set of $m$ pointer constraints over a set of $n$ pointers.
Scalability is extremely important, as points-to information is a prerequisite to many other components in the static-analysis pipeline.
Existing algorithms solve APA in $O(n^2\cdot m)$ time,
while it has been conjectured that the problem has no truly sub-cubic algorithm,
with a proof so far having remained elusive.
It is also well-known that $\APA$ can be solved in $O(n^2)$ time under certain sparsity conditions that hold naturally in some settings.
Besides these simple bounds, the complexity of the problem has remained poorly understood.

In this work we draw a rich fine-grained and parallel complexity landscape of APA, and present upper and lower bounds.
First, we establish an $O(n^3)$ upper-bound for general APA, improving over $O(n^2\cdot m)$ as $n=O(m)$.
Second, we show that even \emph{on-demand} APA (``may a \emph{specific} pointer $a$ point to a \emph{specific} location $b$?'') has an $\Omega(n^3)$ (combinatorial) lower bound under standard complexity-theoretic hypotheses.
This formally establishes the long-conjectured ``cubic bottleneck'' of APA, and shows that our $O(n^3)$-time algorithm is optimal.
Third, we show that under mild restrictions, APA is solvable in $\tilde{O}(n^{\omega})$ time, where $\omega<2.373$ is the matrix-multiplication exponent.
It is believed that $\omega=2+o(1)$, in which case this bound becomes quadratic.
Fourth, we show that even under such restrictions, even the on-demand problem has an $\Omega(n^2)$ lower bound under standard complexity-theoretic hypotheses, and hence our algorithm is optimal when $\omega=2+o(1)$.
Fifth, we study the parallelizability of APA and establish lower and upper bounds:
(i)~in general, the problem is P-complete and hence unlikely parallelizable, whereas
(ii)~under mild restrictions, the problem is parallelizable.
Our theoretical treatment formalizes several insights that can lead to practical improvements in the future.
\end{abstract}

\begin{CCSXML}
<ccs2012>
<concept>
<concept_id>10011007.10011074.10011099</concept_id>
<concept_desc>Software and its engineering~Software verification and validation</concept_desc>
<concept_significance>500</concept_significance>
</concept>
<concept>
<concept_id>10003752.10010070</concept_id>
<concept_desc>Theory of computation~Theory and algorithms for application domains</concept_desc>
<concept_significance>300</concept_significance>
</concept>
<concept>
<concept_id>10003752.10010124.10010138.10010143</concept_id>
<concept_desc>Theory of computation~Program analysis</concept_desc>
<concept_significance>300</concept_significance>
</concept>
</ccs2012>
\end{CCSXML}

\ccsdesc[500]{Software and its engineering~Software verification and validation}
\ccsdesc[300]{Theory of computation~Theory and algorithms for application domains}
\ccsdesc[300]{Theory of computation~Program analysis}

\keywords{static pointer analysis, inclusion-based pointer analysis, fine-grained complexity, Dyck reachability}  

\maketitle

\section{Introduction}\label{sec:introduction}
 Programs execute by allocating memory for storing data and manipulating pointers to that memory.
Pointer analysis takes a static view of a program's heap and asks the question ``given a pointer $a$, what are the memory locations that $a$ may point-to at program runtime?''
Such information is vital to almost all questions addressed by static analyses in general~\cite{Ghiya01,Hind01},
hence many static analyzers begin with some form of pointer analysis.
In particular, for an analysis to be useful, it needs to be able to determine \emph{aliasing}, 
i.e., whether two pointers may be pointing to the same memory location.
For example, in the program of \cref{fig:intro}, the value of $c$ depends on whether $a$ and $b$ are aliases.
Naturally, aliasing is decided by (implicitly or explicitly) computing whether the intersection of the points-to space of the two pointers is empty.
\begin{figure}
\begin{subfigure}[b]{0.4\textwidth}
\centering
\begin{lstlisting}[style=CStyle]
   ...
   *a=42;
   *b=84;
   c = *a;
   //is c 42 or 84?
\end{lstlisting}
\end{subfigure}
\qquad
\qquad
\qquad
\begin{subfigure}[b]{0.4\textwidth}
\setlength\tabcolsep{5pt}
\centering
\begin{tabular}[b]{|c|c|}
\hline
Type & Statement \\
\hline
\hline
1 & $a=b$ \\
\hline
2 & $a=\&b$\\
\hline
3 &  $a=*b$\\
\hline
4 &  $*a=b$\\
\hline
\end{tabular}
\end{subfigure}
\caption{
A program where analysis depends on aliasing (left) and the four types of statements in $\APA$ (right).
}
\label{fig:intro}
\end{figure}

As usual in static analyses, points-to information can be modeled at various degrees of precision, which has consequences on the decidability and complexity of the problem.
Flow-sensitive formulations, which take into account the order of execution of pointer-manipulation statements,
are typically intractable, with results ranging from undecidability~\cite{Ramalingam94}, to PSPACE-completeness~\cite{Chakaravarthy03} and NP-/co-NP-hardness~\cite{Land91}.
In contrast, flow-insensitive formulations can be viewed as relational approaches that ignore the order of execution, and typically result in more tractable algorithmic problems.
Another feature that affects complexity is the level of indirection (i.e., how many nested dereferences can occur in a single statement)~\cite{Horwitz97}, and thus is typically kept small.
Flow-insensitive analyses are faster and achieve remarkable precision in practice~\cite{Shapiro97,Manuvir01,Blackshear11}.
This sweet spot between efficiency and precision has made flow-insensitive analyses dominant over alternatives.
Popular approaches in this domain are inclusion-based~\cite{Andersen94}, equality-based~\cite{Steensgaard96} and unification-based~\cite{Das00}.
We refer to~\cite{Smaragdakis15} for an excellent exposition.

\Paragraph{Andersen's pointer analysis.}
The most commonly used and actively studied formulation is Andersen's Pointer Analysis (APA)~\cite{Andersen94}.
The input is a set of $n$ pointers and $m$ statements of the four types shown in \cref{fig:intro}.
The solution to the analysis is the least fixpoint of a set of inclusion constraints between the points-to sets of the pointers (see \cref{subsec:apa} for details).
$\APA$ has been the subject of a truly huge body of work, ranging from 
adoptions to diverse programming languages~\cite{Sridharan09,Jang09,Lyde15},
extensions to incorporate various features (e.g., context/flow/field-sensitivity)~\cite{Whaley02,Pearce04,Hirzel04,Hardekopf11}
and implementations in various frameworks~\cite{Lhotak03,Soot,Wala}, to name a few.

\Paragraph{Complexity.}
The standard statement in the literature with regards to the complexity of $\APA$ is that it is cubic.
However, the parameter ($n$ or $m$, for $n$ pointers and $m$ statements) on which this cubic bound is expressed is often left unspecified, leading to a variety of statements.
The standard expression is an $O(m^3)$ bound~\cite{Melski00,SPA}, by reducing the problem to $m$ inclusion set constraints~\cite{McAllester99}.
Other works give a more refined bound of $O(n^2\cdot m)$~\cite{Pearce04,Kodumal04} which is an improvement over $O(m^3)$ as $n=O(m)$.
Note that, in general, $m$ can be as large as $\Theta(n^2)$, hence both types of statements result in worst-case dependency on $n$ that is at least quartic, as already observed in~\cite{Kodumal04}.
Finally, in most literature, the core algorithm constructs incrementally the closure of a flow graph by introducing edges dynamically.
However, the complexity analysis often ignores the cost for inserting edges.
As already noted by others~\cite{Heintze97,Sridharan09}, the cost of edge insertion needs to be accounted for when analyzing the complexity.
Under this consideration, the complexity of all these approaches is at least quartic in $n$.

The need for a faster algorithm is apparent from the long literature of heuristics~\cite{Rountev00,Su00,Heintze01,Berndl03,Pearce04,Hardekopf07,Xu09,Fahndrich98,Aiken97,Pek14,Dietrich15,Vedurada19}.
Despite all efforts, no algorithmic breakthrough below the cubic bound has been made for over 25 years.
In some cases, the complexity of $\APA$ can be reduced to quadratic~\cite{Sridharan09}.
This reduction holds when the instances adhere to certain sparsity conditions, which hold naturally in some settings.

\Paragraph{Exhaustive vs on-demand.}
One popular approach to reducing running time lies on the observation that we are typically interested in \emph{on-demand} variants of the problem.
That is, we want to decide whether $a$ may point to $b$ for a \emph{given} pointer $a$ and memory location $b$,
rather than the \emph{exhaustive} case that computes the points-to set of every pointer.
Under this restriction, many techniques devise analysis algorithms that aim to solve on-demand $\APA$ faster~\cite{Heintze01b,Sridharan05,Zheng08,Lu13,Zhang13,Sui16,Chatterjee18,Vedurada19}.
On the theoretical side, it is an open question whether on-demand analysis has lower complexity than exhaustive analysis.

\Paragraph{Lower bounds and cubic bottlenecks.}
Despite the complete lack of algorithmic improvements for $\APA$ for over 25 years, no lower bounds are known.
The two basic observations for exhaustive $\APA$ are that
(i)~the output has size $\Theta(n^2)$, which leads to a trivial similar lower bound for running time, and
(ii)~the problem is at least as hard as computing the transitive closure of a graph~\cite{Sridharan09,Zhang2020}.
For the on-demand case, no lower bound is known.
$\APA$ is often reduced to  a specific framework of set constraints~\cite{Heintze92,Su00}, which is computationally equivalent to CFL-Reachability~\cite{Melski00}.
Set constraints and CFL-Reachability are known to have cubic lower bounds~\cite{Heintze97}.
Unfortunately, these lower bounds do not imply any lower bound for $\APA$ as the reduction is only one way (i.e., \emph{from} $\APA$ \emph{to} set constraints).
The recurrent encounter of cubic complexity is frequently referred to as the ``cubic bottleneck in static analysis'',
though the bottleneck is only conjectured for $\APA$, with a proof so far having remained elusive.

\Paragraph{Parallelization.}
The demand for high-performance static analyses has lead various researchers to implement parallel solvers of $\APA$~\cite{MendezLojo2010,Mendez12,Su14,Wang17,Liu19,Blass19}.
Despite their practical performance, the parallelizability of $\APA$ has remained open on the theoretical level.
In contrast, the richer problem of set constraints is known to be non-parallelizable, via its reduction to CFL-Reachability~\cite{Melski00} which is known to be P-complete~\cite{Reps96}.

\subsection{Our Contributions}\label{subsec:contributions}
In this work, we draw a rich fine-grained complexity landscape of Andersen's Pointer Analysis,
by resolving open questions and improving existing bounds.
We refer to \cref{sec:summary} for a formal presentation of our main results as well as a discussion on their implications to the theory and practice of pointer analysis.

\Paragraph{Main contributions.}
Consider as input an $\APA$ instance $(A,S)$ of $n=|A|$ pointers and $m=|S|$ statements.
Our main contributions are as follows.
\begin{compactenum}
\item We show that $\APAPA$ is solvable in $O(n^3)$ time, regardless of $m$.
To our knowledge, this gives the sharpest cubic bound on the analysis, as it holds even when $m=\Theta(n^2)$.
\item We show that even $\SPAPA$ does not have a (combinatorial) sub-cubic algorithm (i.e., with complexity $O(n^{3-\epsilon}$ for some fixed $\epsilon>0$) under the combinatorial BMM hypothesis.
This formally proves the long-conjectured cubic bottleneck for $\APA$.
\item We consider a bounded version of $\APAPA$ where points-to information is witnessed by bounding the execution of type-4 (\cref{fig:intro}) statements by a poly-logarithmic bound.
We show that bounded $\APAPA$ is solvable in $\Otilde(n^{\omega})$ time, where $\Otilde$ hides poly-logarithmic factors and $\omega$ is the matrix-multiplication exponent.
It is known that $\omega< 2.373$~\cite{LeGall14}, hence our algorithm is sub-cubic.
\item It is believed that $\omega=2+o(1)$, in which case our previous bound becomes nearly quadratic.
We complement this result by showing that even $\SPAPA$ with witnesses that are logarithmically bounded (i.e., a simpler problem than that in Item~3) does not have a sub-quadratic algorithm (i.e., with complexity $O(n^{2-\epsilon})$ for some fixed $\epsilon>0$) under the Orthogonal Vectors hypothesis~\cite{Williams19}.
Hence, our algorithm for Item~3 is optimal when $\omega=2+o(1)$.
\item We show that $\APA$ is P-complete, and hence unlikely parallelizable.
On the other hand, we show that bounding $\APA$ as in Item~4 is in NC, and hence highly parallelizable.
\end{compactenum}

\SubParagraph{Technical contributions.}
Our main theoretical results rely on a number of technical novelties that might be of independent interest.
\begin{compactenum}
\item Virtually all existing algorithms for $\APA$ represent the analysis as  a \emph{flow-graph} that captures inclusion constraints between pointers.
In contrast, we develop a \emph{Dyck-graph} representation over the Dyck language of 1 parenthesis type $\Dyck_1$, which allows us to develop new insights for the problem, and establish upper and lower bounds.
\item We show that Dyck-Reachability over $\Dyck_1$ can be solved in time $\Otilde(n^{\omega})$, where $\omega$ is the matrix-multiplication exponent,
by a purely combinatorial reduction of the problem to $O(\log^2 n)$ matrix-multiplications.
\item Our lower bounds are based on \emph{fine-grained complexity}, an emerging field in complexity theory that establishes relationships between problems in P.
We believe that this field can have an important role in understanding and optimizing static program analyses.
Our work makes some of the first steps in this direction.
\end{compactenum}

\section{Preliminaries}\label{sec:prelim}

In this section we give a formal presentation of Andersen's pointer analysis and develop some general notation.
We also define Dyck graphs and show how reachability relationships in such graphs can be used to represent points-to relationships between pointers.
Finally, we present the main theorems of this paper.
Given a number $n\in \Nats$, we denote by $[n]$ the set $\{1,2,\dots, n\}$.


\subsection{Andersen's Pointer Analysis}\label{subsec:apa}

We begin with giving the formal definition of Andersen's pointer analysis, as well as a bounded version of the problem.

\Paragraph{Andersen's pointer analysis ($\APA$).}
An instance of $\APA$ is a pair $(A,S)$, where $A$ is a set of $n$ \emph{pointers}\footnote{Although in practice not all variables are pointers, we will use this term liberally for simplicity of presentation.} and $S$ is a set of $m$ \emph{statements}.
Each statement has one of the four types shown in \cref{tab:apa_rules}.
Conceptually, the pointers may reference memory locations during the runtime of a program which uses the statements to manipulate the pointers.
\begin{compactenum}
\item A type~1 statement $a=b$ represents pointer assignment.
\item A type~2 statement $a=\&b$ represents making $a$ point to the location of $b$.
\item A type~3 statement $a=*b$ represents an indirect assignment $a=c$, where $c$ is pointed by $b$.
\item A type~4 statement $*a=b$ represents an indirect assignment $c=b$, where $c$ is pointed by $a$.
\end{compactenum}
\begin{table}
\setlength\tabcolsep{5pt}
\begin{tabular}{|c||c|c|c|}
\hline
Type & Statement & Inclusion Constraint & Operational Semantics \\
\hline
\hline
1 & $a=b$ & $\PointsToSet{b}\subseteq \PointsToSet{a}$ & $\Store{a}\gets \Store{a} \cup \Store{b}$\\
\hline
2 & $a=\&b$ & $b\in \PointsToSet{a}$ & $\Store{a}\gets \Store{a}\cup \{ b \}$ \\
\hline
3 &  $a=*b$ & $\forall c\in \PointsToSet{b}\colon \PointsToSet{c}\subseteq \PointsToSet{a}$ & $\Store{a}\gets \Store{a} \cup \left(\bigcup_{c\in \Store{b}} \Store{c}\right)$\\
\hline
4 &  $*a=b$ & $\forall c \in \PointsToSet{a}\colon \PointsToSet{b}\subseteq \PointsToSet{c}$ & $\forall c\in \Store{a}\colon \Store{c}\gets \Store{c}\cup \Store{b}$ \\
\hline
\end{tabular}
\caption{
The four types of statements of $\APA$, the inclusion constraints they generate and the associated operational semantics.
}
\label{tab:apa_rules}
\end{table}
As standard practice, more complex statements such as $*a=*b$ have been normalized by introducing slack pointers~\cite{Andersen94,Horwitz97,SPA}.
We assume wlog that $S$ does not contain the same statement twice, and hence $m=O(n^2)$
\footnote{Note that the analyzed program might indeed contain the same statement twice. Generating the $\APA$ instance $(A,S)$ from the program is performed in time linear in the size of the program, after which the input is in the assumed form.}.
Given some $i\in [4]$, we denote by $S_i$ the statements of $S$ of type $i$.
Given a pointer $a$, we let $\PointsToSet{a}\subseteq A$ be the \emph{points-to} set of $a$.
Typically in pointer analysis, $\PointsToSet{a}$ is an over-approximation of the locations that $a$ can point-to during the lifetime of a program.
In Andersen's inclusion-based pointer analysis, the sets $\PointsToSet{a}$ are defined as follows.
Each statement generates an inclusion constraint between various points-to sets, as shown  in \cref{tab:apa_rules}.
The solution to $\APA$ is the smallest assignment $\{\PointsToSet{a}\to 2^{A}\}_{a\in A}$ that satisfies all constraints.

\SubParagraph{Exhaustive vs on-demand.}
As standard in the literature, we distinguish between the \emph{exhaustive} and \emph{on-demand} versions of the problem.
In each case, the input is an instance of $\APA$.
The $\APAPA$ problem asks to compute the points-to set of every pointer $a\in A$.
The $\SPAPA$ problem asks to compute whether $b\in \PointsToSet{a}$ for a given pair of pointers $a,b\in A$.
Hence, $\SPAPA$ is a simplification of $\APAPA$ where the size of the output is a single bit, as opposed to $\Theta(n^2)$ bits required to output the points-to set of every pointer.
The two variants can be viewed as analogues to the all-pairs and single-pair formulations of graph problems (e.g., reachability).

\SubParagraph{Operational semantics.}
Since the statements in $\APA$ come out of programs, it is convenient to consider them as executable instructions and assign simple operational semantics to them.
The semantics are over a global store $\Store{}\colon A\to 2^{A}$ that maps every pointer to its points-to set, which is initially empty.
Executing one statement corresponds to updating the store as shown in \cref{tab:apa_rules}.
This operational view already hints an (albeit inefficient) algorithm for solving $\APA$, namely, by iteratively executing some statement until no execution modifies the store.

\begin{figure*}
\centerline{
\begin{subfigure}[b]{0.25\textwidth}
\setlength\tabcolsep{7pt}
\begin{tabular}{ll}
$b=\& a$, & $d=\&a$\\
$c=\&b$, & $d=\& c$\\
$*d=c$, & $d=*e$ \\
$e=*d$, & $e=\&f$ \\
\end{tabular}
\end{subfigure}
\quad
\begin{subfigure}[b]{0.28\textwidth}
\setlength\tabcolsep{7pt}
\begin{tabular}{ll}
1.~$b=\&a$ & 5.~$d=*e$\\
2.~$c=\&b $ & 6.~$*d=c$\\
3.~$d=\&c$ & 7.~$ d=a$\\
4.~$e=*d$ & 8.~$ *d=c $\\
\end{tabular}
\end{subfigure}
\quad
\begin{subfigure}[b]{0.3\textwidth}
\setlength\tabcolsep{7pt}
\begin{tabular}{ll}
1.~$\Store{b}\gets\{a\}$ & 5.~$\Store{d}\gets\{c,a\}$\\
2.~$\Store{c}\gets\{b\} $ & 6.~$\Store{a}\gets\{b\}$\\
3.~$\Store{d}\gets\{c\}$ & 7.~$ \Store{d}\gets\{ c,a,b \} $\\
4.~$\Store{e}\gets\{b\}$ & 8.~$ \Store{b}\gets\{a,b\} $\\
\end{tabular}
\end{subfigure}
}
\caption{
An instance of $\APA$ (left), a witness program for $b\in \PointsToSet{b}$ (middle),
and the updates to the store while executing the witness (right).
Note that the statement $e=\&f$ is not used, while $*d=c$ is used twice.
}
\label{fig:apa_example}
\end{figure*}

\SubParagraph{Witnesses.}
The operational semantics allow us to define witnesses of points-to relations.
Given two pointers $a,b\in A$, a \emph{witness program} (or simply, \emph{witness}) for $b\in\PointsToSet{a}$
is a sequence of statements from $S$ that results in $b\in \Store{a}$.
See \cref{fig:apa_example} for an illustration.

\Paragraph{Bounded $\APA$.}
Motivated by practical applications, we introduce a bounded version of $\APA$ that restricts the length of witnesses.
Consider two pointers $a,b\in A$ and a witness program $\Program$ for $b\in \PointsToSet{a}$.
Given some $i\in[4]$ and $j\in \Nats$, we say that $\Program$ is $(i,j)$\emph{-bounded}
if $\Program$ executes at most $j$ statements of type~$i$.
For example, the witness for $b\in \PointsToSet{b} $ in \cref{fig:apa_example} is $(3,2)$-bounded but not $(2,2)$-bounded.
The problem of $(i,j)$-bounded $\APA$ asks for a solution to the inclusion constraints of $\APA$
such that for any two pointers $a,b$ with $b\not \in \PointsToSet{a}$, any witness program that results in $b\in \Store{a}$ executes more than $j$ statements of type~$i$.

\smallskip
\begin{remark}\label{rem:bounded}
The boundedness of $(i,j)$-bounded $\APA$ is only one-way, i.e., for relationships of the form $b\not \in \PointsToSet{a}$ and not of the form $b \in \PointsToSet{a}$.
In particular, it is allowed to have $b\in \PointsToSet{a}$ even if this is witnessed only by programs that execute statements of type~$i$ more than $j$ times.
\end{remark}

Bounded versions of $\APA$ do not necessarily have a unique solution, e.g., if the shortest witness for $a$ pointing to $b$ exceeds the bound, we can have $b\in\PointsToSet{a}$ or $b\not \in \PointsToSet{a}$.
However, any solution suffices as long as 
(i)~every points-to relationship $b\in\PointsToSet{a}$ reported has a witness, and
(ii)~all points-to relationships that have a bounded witness are reported (wrt the given bound).
Similar techniques for witness bounding are used widely in practice in order to speed up static analyses.

\subsection{Dyck Reachability and Representation of Andersen's Pointer Analysis}\label{subsec:dyck_reachability}

Here we develop some notation on Dyck languages and Dyck reachability,
and use it to represent instances of $\APA$ as Dyck graphs.

\Paragraph{Dyck languages.}
Given a non-negative integer $k\in \Nats$, we denote by $\Alphabet_k=\{\epsilon\}\cup \{\OpenParenthesis_i, \CloseParenthesis_i\}_{i=1}^k$
a finite \emph{alphabet} of $k$ parenthesis types, together with a null element $\epsilon$.
We denote by $\Dyck_k$ the Dyck language over $\Alphabet_k$,
defined as the language of strings generated by the following context-free grammar $\Grammar_k$:
\[
\StartNonTerminal \to \StartNonTerminal ~ \StartNonTerminal ~ | ~ \OpenParenthesis_1 ~ \StartNonTerminal \CloseParenthesis_1 ~ | ~ \cdots ~ | ~ \OpenParenthesis_k ~ \StartNonTerminal \CloseParenthesis_k ~ | ~\epsilon
\]
In words, $\Dyck_k$ contains all strings where parentheses are properly balanced.
In this work we focus on the special case where $k=1$, i.e., we have only one parenthesis type.
To capture the relationship between $\Dyck_1$ and $\APA$, we will let $\OpenParenthesis_1$ be $\&$ and $\CloseParenthesis_1$ be $*$.
This relationship will become clearer later in this section.

\Paragraph{Dyck graphs.}
A Dyck graph $G=(V, E)$ is a digraph where edges are labeled with elements of $\Alphabet_1$, i.e.,
$E\subseteq V\times V\times \Alphabet_1$ and edges have the form $\tau=(a,b,\Label)$.
Often we will only be interested in the endpoints $a,b$ of an edge, in which case we represent $\tau=(a,b)$,
and we will denote by $\Label(\tau)$ the label of $\tau$.
The label $\Label(P)$ of a path in $G$ is the concatenation of the labels along the edges of $P$.
We often represent $P$ from $a$ to $b$ graphically as $a\DPath{\Label(P)}b$,
and, given some $i\in \Nats$, we write $a\DPath{i\& }b$ (resp.,$a\DPath{i* }b$ ) to denote a path $P\colon a\DPath{}b$ 
with label $i$ consecutive symbols $\&$ (resp., $*$), possibly interleaved with $\epsilon$ symbols.
We say that $b$ is \emph{Dyck-reachable} (or \emph{D-reachable}) from $a$ if there exists a path $P\colon x\Path y$ with $\Label(P)\in \Dyck_1$.
We say that $b$ \emph{flows into} $a$ via a node $c$ if 
(i)~we have $b\DEdge{\&}c$, and 
(ii)~$a$ is D-reachable from $c$.
The $\DReachability{1}$ problem takes as input a Dyck graph and asks to return all pairs of nodes $(b,a)$ such that $a$ is D-reachable from $b$.

\Paragraph{Graph representation of $\APA$.}
For convenience, we frequently represent instances of $\APA$ using Dyck graphs.
In particular, given an instance $(A,S)$ of $\APA$, we use a Dyck graph $G=(A,E)$ where $E$ represents all statements in $S\setminus S_4$.
In particular, we have the following edges.
\begin{compactenum}
\item For every type~1 statement $a=b$, we have $b\DEdge{\epsilon} a$ in $E$.
\item For every type~2 statement $a=\&b$, we have $b\DEdge{\&} a$ in $E$.
\item For every type~3 statement $a=*b$, we have $b\DEdge{*} a$ in $E$.
\end{compactenum}
The instance $(A,S)$ is represented as a pair $(G,S_4)$ where $G=(A,E)$ is the Dyck graph and $S_4$ is the type-4 statements of $S$.
See \cref{fig:dyck} for an illustration.
The motivation behind this representation comes from the following lemma,
which establishes a correspondence between paths in Dyck graphs and $\APA$ without statements of type~4.

\smallskip
\begin{restatable}{lemma}{lemapatodyck}\label{lem:apa_to_dyck}
Consider the Dyck graph representation $(G=(E,V),S_4)$ of $(A,S)$, and the modified $\APA$ instance $(A,S\setminus S_4)$.
For every two pointers $a, b\in A$, we have $b\in \PointsToSet{a}$ in $(A, S\setminus S_4)$ iff $b$ flows into $a$ in $G$.
\end{restatable}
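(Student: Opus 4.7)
I will prove the two directions separately. The forward direction ($b\in\PointsToSet{a}\Rightarrow b$ flows into $a$) will go by fixpoint induction on how $b$ is added to $\PointsToSet{a}$. The backward direction will go through the following auxiliary statement, which I prove by induction on Dyck-path length:

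\emph{Auxiliary lemma.} For any $x,y\in A$, if there is a Dyck path $x\DPath{}y$ in $G$, then $\PointsToSet{x}\subseteq\PointsToSet{y}$ in $(A,S\setminus S_4)$.

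From this, the backward direction is immediate: if $b\DEdge{\&}c$ (i.e., statement $c=\&b$ gives $b\in\PointsToSet{c}$ by the type~2 constraint) and there is a Dyck path $c\DPath{}a$, then the auxiliary lemma yields $b\in\PointsToSet{c}\subseteq\PointsToSet{a}$.

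\textbf{Forward direction.} The least fixpoint for $(A,S\setminus S_4)$ is reached by iterating the type~1, 2, 3 constraints. I induct on the iteration that first places $b$ into $\PointsToSet{a}$. In the type~2 base case, the statement $a=\&b$ produces the edge $b\DEdge{\&}a$, so $b$ flows into $a$ via $c=a$ using the empty (and therefore Dyck) path. In the type~1 step, some statement $a=a'$ with $b\in\PointsToSet{a'}$ triggers the addition; by IH there exist $c$ with $b\DEdge{\&}c$ and a Dyck path $c\DPath{}a'$, and appending the $\epsilon$-edge $a'\DEdge{\epsilon}a$ keeps the label in $\Dyck_1$. In the type~3 step, some $a=*b'$ and some $c'\in\PointsToSet{b'}$ with $b\in\PointsToSet{c'}$ triggered the addition; applying IH twice gives a $\&$-edge $b\DEdge{\&}c$ with Dyck $c\DPath{}c'$, and a $\&$-edge $c'\DEdge{\&}d$ with Dyck $d\DPath{}b'$. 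Concatenating with the $*$-edge $b'\DEdge{*}a$ yields the label (Dyck)$\,\&\,$(Dyck)$\,*$ on the full path from $c$ to $a$, which is in $\Dyck_1$ since $\&w*$ is Dyck whenever $w$ is.

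\textbf{Backward direction (auxiliary lemma).} I induct on the number of edges in the Dyck path $P\colon x\DPath{}y$. For length $0$, $x=y$ and the claim is trivial. For length $\geq 1$, I decompose $P$ according to its first edge. If the first edge is $x\DEdge{\epsilon}x'$, then $P'\colon x'\DPath{}y$ is also Dyck, and the edge stems from a statement $x'=x$ giving $\PointsToSet{x}\subseteq\PointsToSet{x'}$; combined with IH on $P'$ this yields $\PointsToSet{x}\subseteq\PointsToSet{y}$. Otherwise the label of $P$ is in $\Dyck_1$ and starts with $\&$, so by the standard structural decomposition of $\Dyck_1$ words we can write $P=(x\DEdge{\&}x_1)\cdot P_1\cdot(y_1\DEdge{*}x_2)\cdot P_2$, where $P_1\colon x_1\DPath{}y_1$ and $P_2\colon x_2\DPath{}y$ are both Dyck and strictly shorter than $P$. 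The edge $x\DEdge{\&}x_1$ comes from $x_1=\&x$, so $x\in\PointsToSet{x_1}$; IH on $P_1$ gives $\PointsToSet{x_1}\subseteq\PointsToSet{y_1}$, so $x\in\PointsToSet{y_1}$. The edge $y_1\DEdge{*}x_2$ comes from $x_2=*y_1$, whose type~3 constraint together with $x\in\PointsToSet{y_1}$ yields $\PointsToSet{x}\subseteq\PointsToSet{x_2}$. Finally IH on $P_2$ gives $\PointsToSet{x_2}\subseteq\PointsToSet{y}$, so altogether $\PointsToSet{x}\subseteq\PointsToSet{y}$.

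\textbf{Main obstacle.} The only subtle step is the structural decomposition in the last case: I must isolate the $\&$ opened by the first edge and its matching $*$ to split $P$ into two strictly shorter Dyck subpaths plus two bracket edges. Once this decomposition is in place the constraints line up exactly with the type~2/3 rules. The rest is bookkeeping.
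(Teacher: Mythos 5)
Your proof is correct, and its backward direction is organized quite differently from the paper's. The forward direction matches the paper's essentially step for step: the paper inducts on the length of a minimal operational witness while you induct on fixpoint iterations, but the case analysis on the last step (types~1, 2, 3) is the same, with the type-3 case concatenating a label of the form $w_1\,\&\,w_2\,*$ into a Dyck word. For the backward direction, the paper inducts on the Dyck derivation of $\Label(P)$ with cases $\epsilon$, $\&\,\StartNonTerminal\,*$, and $\StartNonTerminal\,\StartNonTerminal$; in the last case it invokes its induction hypothesis (stated in terms of a flows-into witness) on the suffix $P_2\colon d\DPath{}a$ after having shown $b\in\PointsToSet{d}$, which implicitly appeals to the fact that D-reachability entails points-to-set inclusion. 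That is exactly the auxiliary lemma you isolate and prove cleanly by induction on path length, peeling off the first edge of the path and, when it is $\&$-labeled, locating its matching $*$ to split $P$ into two strictly shorter Dyck subpaths plus the two bracketing edges. Making that invariant explicit is a genuine improvement: it removes the mismatch between what the paper's stated IH gives and what its $\StartNonTerminal\,\StartNonTerminal$ case actually needs. One minor remark: when the first edge is not $\epsilon$-labeled you assert the label starts with $\&$; it is worth one sentence that a $*$-labeled first edge would make a prefix of $\Label(P)$ have negative stack height, contradicting $\Label(P)\in\Dyck_1$, so that case is vacuous.
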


\Paragraph{Resolved Dyck graphs.}
Consider an $\APA$ instance $(A,S)$ and the corresponding Dyck-graph representation $(G=(A,E), S_4)$.
The \emph{resolved Dyck graph} $\ov{G}=(A, \ov{E})$ is a Dyck graph where $\ov{E}$ is the smallest set that satisfies the following conditions.
\begin{compactenum}
\item $E\subseteq \ov{E}$.
\item For every statement $*a=b$, for every node $c$ that flows into $a$ in $\ov{G}$, we have $b\DEdge{\epsilon}c$.
\end{compactenum}
Intuitively, all type~4 statements have been resolved as $\epsilon$-edges in $\ov{G}$.
See \cref{fig:dyck} for an illustration.
The following lemma follows directly from \cref{lem:apa_to_dyck}.

\smallskip
\begin{restatable}{lemma}{lemresolvedgraph}\label{lem:resolved_graph}
For every two pointers $a, b\in A$, we have $b\in \PointsToSet{a}$ iff $b$ flows into $a$ in $\ov{G}$.
\end{restatable}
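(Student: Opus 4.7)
The plan is to reduce the claim to Lemma \ref{lem:apa_to_dyck} by constructing an auxiliary $\APA$ instance $(A,S')$ in which every type-4 statement has been expanded into equivalent type-1 statements. Let $\{\PointsToSet{a}\}_{a\in A}$ denote the solution of $\APA$ on $(A,S)$, and define
\[
S'_1 \;=\; \{\, c = b \,:\, (*a=b)\in S_4,\; c\in\PointsToSet{a}\,\}, \qquad S' \;=\; (S\setminus S_4)\cup S'_1.
\]
Since the inclusion constraint generated by $*a=b$ is $\forall c\in\PointsToSet{a}\colon \PointsToSet{b}\subseteq\PointsToSet{c}$, which is precisely the conjunction of the constraints generated by the type-1 statements $c=b$ for each $c\in\PointsToSet{a}$, the two instances $(A,S)$ and $(A,S')$ have identical inclusion-constraint systems and hence the same least fixpoint solution.

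Next I would show that the Dyck-graph representation of $(A,S')$ coincides with $\ov{G}$. By construction, $S'_1$ contributes to $G$ the $\epsilon$-edges $\{b\DEdge{\epsilon}c : (*a=b)\in S_4,\; c\in\PointsToSet{a}\}$, while by the defining condition of $\ov{E}$ the extra edges $\ov{E}\setminus E$ are exactly $\{b\DEdge{\epsilon}c : (*a=b)\in S_4,\; c \text{ flows into } a \text{ in } \ov{G}\}$. Thus the equality of the two edge sets is equivalent to the biconditional we want to prove, exposing an apparent circularity between the definitions of $\ov{G}$ and $\PointsToSet{\cdot}$.

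I would break this circularity by a simultaneous stratification. Set $\ov{E}_0 = E$ and
\[
\ov{E}_{k+1} \;=\; \ov{E}_k \;\cup\; \{\, b\DEdge{\epsilon}c : (*a=b)\in S_4,\; c \text{ flows into } a \text{ in } (A,\ov{E}_k)\,\},
\]
so that $\ov{E}=\bigcup_k \ov{E}_k$ by monotonicity. In parallel, I would stratify the APA fixpoint on $(A,S)$ by the number of type-4 resolutions applied, obtaining increasing points-to sets $\PointsToSet{a}^{(k)}$. Applying Lemma \ref{lem:apa_to_dyck} at stage $k$ to the type-4-free instance in which each $*a=b\in S_4$ has been replaced by $\{c=b : c\in\PointsToSet{a}^{(k)}\}$ yields the synchronized correspondence $c\in\PointsToSet{a}^{(k+1)} \iff c \text{ flows into } a \text{ in } (A,\ov{E}_{k+1})$. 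Passing to the limit establishes that the Dyck graph of $(A,S')$ is exactly $\ov{G}$, and a final application of Lemma \ref{lem:apa_to_dyck} to $(A,S')$ completes the proof. The main obstacle is keeping the two stratifications in lockstep across iterations, but this is purely a bookkeeping matter that follows from monotonicity of both fixpoints with no new conceptual ingredient.
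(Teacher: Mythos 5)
Your proposal is correct and matches the paper's proof in its essential structure: both stratify the resolved Dyck graph into a finite chain $\ov{E}_0\subseteq\ov{E}_1\subseteq\cdots$ (the paper's $G_1,\dots,G_\ell$) in lockstep with a stratification of the points-to fixpoint by the number of type-4 resolutions (the paper phrases this via $(4,i)$-bounded witness programs), and then argue the two stratifications coincide level by level using \cref{lem:apa_to_dyck}. Your opening detour through the circular definition of $S'$ is stylistically longer than the paper's direct inductive construction, but the underlying argument is the same.
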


\Paragraph{Intuition behind the Dyck-graph representation.}
Virtually all algorithms for $\APA$ in the literature use a flow graph for representing inclusion relationships between pointers.
Pointer inclusion occurs when the analysis discovers that, for two pointers $a,b$ we have $\PointsToSet{b}\subseteq \PointsToSet{a}$, represented via an edge $b\to a$ in the flow graph.

Our Dyck graph $G$ is a richer structure compared to the standard flow graph.
In fact, we obtain the (initial) flow graph if we remove from $G$ edges representing pointer references (labeled with $\&$) and dereferences of type~3 (labeled with $*$).
The only information missing from $G$ is statements of type~4.
The analysis can be seen as iteratively discovering type~4 statements $*a=b$ and inserting an edge $b\DEdge{\epsilon}c$ in $G$.
This process terminates with the resolved graph $\ov{G}$.
\cref{lem:resolved_graph} implies that, at that point, all points-to information can be expressed as flows-into relationships in $\ov{G}$.

\begin{figure}
\newcommand{\xdisposition}{3.6}
\newcommand{\ydisposition}{0}
\newcommand{\xtstep}{0.9}
\newcommand{\ytstep}{0.5}
\newcommand{\xstep}{1.1}
\newcommand{\ystep}{0.9}
\newcommand{\gatex}{0.8}
\newcommand{\gatey}{0.4}

\centering
\begin{tikzpicture}[thick,
pre/.style={<-, thick, shorten >=0cm,shorten <=0cm},
post/.style={->, thick, shorten >=-0cm,shorten <=-0cm},
seqtrace/.style={->, line width=2},
postpath/.style={->, thick, decorate, decoration={zigzag,amplitude=1pt,segment length=2mm,pre=lineto,pre length=2pt, post=lineto, post length=4pt}},
node/.style={very thick, draw=black, circle, inner sep = 1, minimum size=5.75mm},
]

\begin{scope}[shift={(0*\xstep,0*\ystep)}]

\node[] at (2.5*\xstep, 1.5*\ystep) {$G$};

\node[] at(0*\xstep, 1*\ystep)	{
$\begin{aligned}
*d=c\\
\end{aligned}$
};

\node[node] (a) at (0*\xstep, 0*\ystep){$a$};
\node[node] (b) at (1*\xstep, 0*\ystep){$b$};
\node[node] (c) at (2*\xstep, 0*\ystep){$c$};
\node[node] (d) at (3*\xstep, 0*\ystep){$d$};
\node[node] (e) at (4*\xstep, 0*\ystep){$e$};
\node[node] (f) at (5*\xstep, 0*\ystep){$f$};

\draw[post] (a) to node[above, midway]{$\&$} (b);
\draw[post] (b) to node[above, midway]{$\&$} (c);
\draw[post] (c) to node[above, midway]{$\&$} (d);
\draw[post, bend left=20] (d) to node[above, midway]{$*$} (e);
\draw[post, bend left=20] (e) to node[below, midway]{$*$} (d);
\draw[post,] (f) to node[above, midway]{$\&$} (e);
\draw[post, bend right=30] (a) to node[below, midway]{$\&$} (d);

\end{scope}

\begin{scope}[shift={(6.75*\xstep,0*\ystep)}]

\node[] at (2.5*\xstep, 1.5*\ystep) {$\ov{G}$};

\node[node] (a) at (0*\xstep, 0*\ystep){$a$};
\node[node] (b) at (1*\xstep, 0*\ystep){$b$};
\node[node] (c) at (2*\xstep, 0*\ystep){$c$};
\node[node] (d) at (3*\xstep, 0*\ystep){$d$};
\node[node] (e) at (4*\xstep, 0*\ystep){$e$};
\node[node] (f) at (5*\xstep, 0*\ystep){$f$};

\draw[post] (a) to node[above, midway]{$\&$} (b);
\draw[post] (b) to node[above, midway]{$\&$} (c);
\draw[post] (c) to node[above, midway]{$\&$} (d);
\draw[post, bend left=20] (d) to node[above, midway]{$*$} (e);
\draw[post, bend left=20] (e) to node[below, midway]{$*$} (d);
\draw[post,] (f) to node[above, midway]{$\&$} (e);
\draw[post, bend right=30] (a) to node[below, midway]{$\&$} (d);

\draw[post, out=110, in =70, looseness=1.2] (c) to node[above, midway]{$\epsilon$} (a);
\draw[post, out=120, in =60, looseness=1] (c) to node[above, midway]{$\epsilon$} (b);

\draw[post, loop, out=80, in=100, looseness=15] (c) to node[above, midway]{$\epsilon$} (c);

\end{scope}

\end{tikzpicture}
\caption{
The Dyck graph for the $\APA$ instance of \cref{fig:apa_example} (left), and the resolved Dyck graph $\ov{G}$ (right).
}
\label{fig:dyck}
\end{figure}

\section{Summary of Main Results}\label{sec:summary}

We are now ready to present our main theorems, followed by a discussion on their implications to the theory and practice of pointer analysis.
In later sections we develop the proofs. 

\Paragraph{Cubic upper-bound of $\APA$.}
It is well stated that $\APA$ can be solved in cubic time.
However, ``cubic'' refers to the size of the input, and typically has the form $O(m^3)$~\cite{Melski00,SPA} or $O(n^2\cdot m)$~\cite{Pearce04,Kodumal04},  for $n$ pointers and $m$ statements.
Note that $m$ can be as large as $\Theta(n^2)$, which yields the bound $O(n^4)$, as already observed in~\cite{Kodumal04}.
Our first theorem shows that in fact, the problem is solvable in $O(n^3)$ time regardless of $m$.
Although we do not consider this our major result, we are not aware of a proven $O(n^3)$ bound with $n$ being the number of pointers.
We also hope that the theorem will provide a future reference for a formal $O(n^3)$ complexity statement for $\APA$.
\smallskip
\begin{restatable}{theorem}{themcubicupperbound}\label{them:cubic_upper_bound}
$\APAPA$ is solvable in $O(n^3)$ time, for any $m$, where $n$ is the number of pointers and $m$ is the number of statements.
\end{restatable}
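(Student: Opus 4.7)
The plan is to leverage the Dyck-graph representation developed in \cref{subsec:dyck_reachability} and design a worklist-based algorithm that simultaneously computes Dyck-1 reachability in the Dyck graph $G$ and resolves type-4 statements by inserting the appropriate $\epsilon$-edges. By \cref{lem:resolved_graph}, once the resolved graph $\ov{G}$ and its Dyck-reachability relation $D$ are available, the exhaustive points-to output is obtained by reading off flows-into relations, so the goal reduces to computing the closure $(\ov{G},D)$ in total time $O(n^3)$.

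First, I would state the standard closure rules for Dyck-1 reachability on $G$: \emph{base}, $(v,v)\in D$ and $(u,v)\in D$ for every $\epsilon$-edge; \emph{concat}, if $(u,w),(w,v)\in D$ then $(u,v)\in D$; and \emph{match}, if $u\DEdge{\&}u'$, $(u',v')\in D$ and $v'\DEdge{*}v$ then $(u,v)\in D$. To these I add a \emph{type-4 rule}: whenever a pair $(c',a)\in D$ is derived and $c\DEdge{\&}c'$ in $G$, then for every type-4 statement $*a=b$ the edge $b\DEdge{\epsilon}c$ is inserted (equivalently, $(b,c)$ is added to $D$), reflecting the definition of $\ov{G}$. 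The main loop is a standard worklist that pops newly derived pairs and saturates under these rules, supported by reverse/forward adjacency lists for $\&$- and $*$-edges and bitsets for $O(1)$ membership queries on $D$.

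The core of the proof is an amortized analysis showing $O(n)$ work per derived pair, which, since $|D|\le n^2$, yields the claimed $O(n^3)$ bound. The two concat extensions are straightforward: scanning $\{w:(w,u)\in D\}$ and $\{w:(v,w)\in D\}$ costs $O(n)$ per newly derived $(u,v)$. To implement the match rule within the same budget I would introduce an auxiliary \emph{partial-match} relation $P$ with $P(u',v)$ iff there exists $u$ with $u'\DEdge{\&}u$ and $(u,v)\in D$. Updating $P$ on a new pair $(u,v)$ scans only the $\&$-predecessors of $u$ (at most $n$), and extending each new $P(u',v)$ via the $*$-successors of $v$ (at most $n$) produces the matched pairs $(u',v')$ to push on the worklist; since $|P|\le n^2$, both phases together sum to $O(n^3)$. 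The type-4 rule is handled analogously: for each newly derived pair $(c',a)$, iterate over the at most $n$ $\&$-predecessors $c$ of $c'$, and for each newly discovered flows-into pair $(a,c)$ iterate over the at most $n$ type-4 statements $*a=b$ sharing $a$, inserting $b\DEdge{\epsilon}c$ once.

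The hard part will be the match rule: a naive eager version iterates over pairs of $\&$-predecessors of $u$ and $*$-successors of $v$, costing $\Theta(n^2)$ per derived pair and $\Theta(n^4)$ in total, which is precisely the source of the $O(n^2\cdot m)$ bound in prior formulations when $m=\Theta(n^2)$. The partial-match relation $P$ factors this quadratic bracket-matching step into two linear-per-pair passes, and is the device that tightens the bound to $O(n^3)$ uniformly in $m$. With this splitting in place, what remains is a routine verification of correctness (the worklist saturates to the least fixpoint of the rules, which by \cref{lem:resolved_graph} coincides with the exhaustive APA solution) together with bookkeeping to confirm the amortized $O(n)$ cost for each rule application.
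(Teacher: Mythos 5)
Your proof is correct, but it takes a genuinely different route from the paper's. The paper's algorithm ($\AlgoCubic$, \cref{algo:cubic}) never computes full $\Dyck_1$-reachability nor any transitive closure: it works on the flow-graph representation, maintaining via a worklist a set of pairs $(a,b,\epsilon)$ (inclusion constraints, \emph{not} transitively closed) and $(a,b,\&)$ (flows-into facts, closed under one-step extension along $\epsilon$-edges), and it resolves type-$3$ and type-$4$ statements on the fly by inserting new $\epsilon$-constraints whenever a flows-into fact reaches the indirection target of such a statement. Your auxiliary partial-match relation $P$ coincides exactly with the paper's $\&$-relation --- both encode ``$u'$ flows into $v$'' --- so the key bookkeeping that factors the quadratic bracket-matching step into two $O(n)$ passes is shared between the two proofs. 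The difference is that on top of $P$ you also maintain the full $\Dyck_1$-closure $D$ of $\ov{G}$ (via the concat rule), whereas the paper gets away with a smaller invariant: it never closes the $\epsilon$-edges transitively and never keeps $*$-edges around, since type-$3$ statements are compiled into derived $\epsilon$-edges immediately. Both variants are $O(n^3)$ because each of the $O(n^2)$ stored facts is extended in $O(n)$ amortized time. The paper's route is leaner and sits closer to existing difference-propagation solvers; yours is more tightly integrated with the Dyck-graph formalization of \cref{subsec:dyck_reachability} and \cref{lem:resolved_graph}, and makes explicit that computing the $\Dyck_1$-closure of $\ov{G}$ together with type-$4$ resolution is itself an $O(n^3)$ task, which is also the viewpoint the paper later adopts for the sub-cubic bounded algorithm.
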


%
%
%

\Paragraph{Cubic hardness of $\APA$.}
Given the cubic upper-bound of \cref{them:cubic_upper_bound}, it is natural to ask whether sub-cubic algorithms exist for the problem,
i.e., algorithms with running time $O(n^{3-\epsilon})$, for some fixed $\epsilon >0$.
Indeed, the rich literature of heuristics (e.g.,~\cite{Rountev00,Su00,Heintze01,Berndl03,Pearce04,Hardekopf07,Xu09,Fahndrich98,Aiken97,Pek14,Dietrich15,Vedurada19}) is indicative of the need for such an improvement.
On the other hand, no lower-bound has been known.
In fine-grained complexity, there is a widespread distinction between \emph{combinatorial} and \emph{algebraic} algorithms.
The most famous combinatorial lower bound is for Boolean Matrix Multiplication (BMM).
The respective hypothesis states that there is no combinatorial $O(n^{3-\epsilon})$ algorithm for multiplying two $n\times n$ Boolean matrices, for any fixed $\epsilon>0$.
The BMM hypothesis has formed the basis for many lower bounds in graph algorithms, verification, and static analysis~\cite{Bansal09,Williams18b,AbboudW14,Chatterjee16b,Chatterjee18}.

Given the combinatorial nature of $\APA$, we examine whether combinatorial sub-cubic improvements are possible.
First, note that the edge set $\{(x_i,x_j)\}$ of a digraph can be represented as a set of pointer assignments $\{(x_j=x_i)\}$.
This observation leads us to the following remark.
\smallskip
\begin{remark}\label{rem:bmm_hard}
Even without statements of type~3 and type~4 (i.e., without pointer dereferences),
$\APAPA$ is at least as hard as computing all-pairs reachability in a digraph.
Since, under the BMM hypothesis, 
all-pairs reachability does not have a combinatorial sub-cubic algorithm,
the same lower-bound follows for $\APAPA$.
\end{remark}

On the other hand, single-pair reachability is solvable in linear-time in the size of the graph,
and is thus considerably easier than its all-pairs version.
Hence, the relevant question is whether $\SPAPA$ (i.e., given pointers $a,b$, is it the case that $b\in\PointsToSet{a}$?) has sub-cubic complexity.
We answer this question in negative.
\smallskip
\begin{restatable}{theorem}{thembmmhard}\label{them:bmm_hard}
$\SPAPA$ has no sub-cubic combinatorial algorithm under the BMM hypothesis.
\end{restatable}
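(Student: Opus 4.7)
The plan is to establish a fine-grained reduction from Triangle Detection to $\SPAPA$. Under the BMM hypothesis, Triangle Detection on an $n$-vertex graph $G=(V,E)$ admits no combinatorial $O(n^{3-\epsilon})$ algorithm, since the two problems are sub-cubic equivalent via the standard 3-cycle reductions. Given $G$ with $|V|=n$, I would construct an $\APA$ instance $(A,S)$ with $|A|=O(n)$ pointers and two designated pointers $a^*, b^*$ such that $b^*\in \PointsToSet{a^*}$ if and only if $G$ contains a triangle. A combinatorial $O(n^{3-\epsilon})$ algorithm for $\SPAPA$ would then, via this reduction, yield a combinatorial $O(n^{3-\epsilon})$ algorithm for Triangle Detection, contradicting the BMM hypothesis.

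For the construction, I would work in the Dyck-graph representation of \cref{subsec:dyck_reachability}. The key tools are type-2 and type-3 statements (which contribute $\&$- and $*$-labeled edges) together with type-4 statements (which dynamically insert $\epsilon$-edges as intermediate points-to sets are discovered, turning the initial Dyck graph $G$ into the resolved graph $\ov{G}$). Intuitively, a triangle $(i,j,k)$ with $(i,j),(j,k),(k,i)\in E$ must be witnessed by composing three edge relations. For each vertex $v\in V$ I would introduce a constant number of pointer variables, one per ``layer'' of the triangle, plus a representative variable for $v$ itself; each edge $(u,v)\in E$ is encoded by a constant number of statements across the four types. The statements are arranged so that points-to sets at layer~$1$ collect in-neighborhoods of vertices, type-4 statements compose these neighborhoods through layer~$2$, and the single-pair query $b^*\in\PointsToSet{a^*}$ detects whether some layer-$3$ points-to set has wrapped around to contain its starting vertex. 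By \cref{lem:resolved_graph}, this is equivalent to a particular flows-into relation in $\ov{G}$ that realises the closed three-step composition.

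The main obstacle is the completeness direction: establishing that if $b^*\in\PointsToSet{a^*}$ holds, then some witness program really traces a closed triangle rather than an arbitrary walk of length three or some spurious pattern permitted by inclusion semantics. This will require the layered pointer design to be ``type-safe'', so that the points-to sets at different layers live in pairwise disjoint universes of variables and any chain of inclusions forced by the semantics can be decoded back into three genuine edges of $G$ that return to the starting vertex. Once this invariant is proved by induction on the execution of any witness (using the operational semantics from \cref{tab:apa_rules}), soundness (triangle $\Rightarrow$ $b^*\in\PointsToSet{a^*}$) follows by directly executing the $O(1)$ statements associated with the three edges of the triangle in order. Finally, the reduction runs in $O(n+|E|)=O(n^2)$ time and produces an instance with $|A|=O(n)$ pointers and $|S|=O(n^2)$ statements, so that a hypothetical combinatorial $O(|A|^{3-\epsilon})$-time algorithm for $\SPAPA$ would solve Triangle Detection in combinatorial $O(n^{3-\epsilon})$ time, which contradicts the BMM hypothesis and yields the claimed lower bound.
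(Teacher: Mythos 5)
Your high-level plan is the same as the paper's: reduce Triangle Detection to $\SPAPA$, argue via the Dyck-graph representation and \cref{lem:resolved_graph}, and conclude using the fact that Triangle Detection and BMM are combinatorially sub-cubic equivalent. That part is right. But what you have written is a plan, not a proof, and the part you have left out is precisely the part that carries all the difficulty.

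Two concrete gaps. First, the type-4 mechanism you describe (``type-4 statements compose these neighborhoods through layer~2'') is not quite how the composition works, and the distinction matters. A type-3 statement $a=*b$ is the ``gather'' that performs set composition, while a type-4 statement $*a=b$ is a ``scatter''. The paper instead uses type-4 statements as a \emph{conditional edge insertion gadget}: for each edge $(i,k)$ of the input graph it adds a statement $*a_i=d_k$, so that the $\epsilon$-edge $d_k\DEdge{\epsilon}c_k$ is inserted into $\ov{G}$ exactly when $c_k$ flows into $a_i$. The flows-into relation $c_k\DPath{}a_i$ is in turn engineered (via statements $b_i=\&c_j$ and $a_i=b_j$ for edges with $j<i$) to hold iff $k$ is a distance-2 neighbor of $i$ through some $j$. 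Thus the $\epsilon$-edge appears iff $i,j,k$ form a triangle. Your layered picture does not isolate this conditional-insertion role of type-4, and without it the reduction does not go through.

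Second, and more importantly, you never say how to aggregate ``does some triple form a triangle?'' into a single source--target query $b^*\in\PointsToSet{a^*}$. Saying that layer~3 ``wraps around to contain its starting vertex'' gives a per-vertex condition, not a single pair, and a single-pair query cannot range over all starting vertices without an explicit collector. The paper's collector is the dual chain
$d_1\DEdge{\&}d_2\DEdge{\&}\cdots\DEdge{\&}d_{n'}$ paired with
$c_{n'}\DEdge{*}c_{n'-1}\DEdge{*}\cdots\DEdge{*}c_1$ together with $s\DEdge{\&}d_1$: for any $k$, the path $s\DPath{k\&}d_k\DEdge{\epsilon}c_k\DPath{(k-1)*}c_1$ is Dyck-balanced after the first $\&$, so $s$ flows into $c_1$ iff some $\epsilon$-edge $d_k\DEdge{\epsilon}c_k$ exists. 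This balanced $\&$/$*$ chain is the non-obvious idea of the reduction; it is also what makes the completeness direction clean, since one can show that $\ov{G}$ adds to $G$ only edges from $d$-nodes to $c$-nodes, so any witness path for $s$ flowing into $c_1$ must pass through some $d_k\DEdge{\epsilon}c_k$. Your ``type-safe layered'' invariant gestures at this, but until you exhibit such a gadget and prove that the resolved graph cannot create spurious paths bypassing it, the argument is incomplete.
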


Note that \cref{them:bmm_hard} indeed relates a problem with output size $\Theta(1)$ ($\SPAPA$) to a problem with output size $\Theta(n^2)$ (BMM).
The theorem has four implications.
First, it establishes formally the long-conjectured ``cubic bottleneck'' for $\APA$.
Second, it shows that the algorithm of \cref{them:cubic_upper_bound} is optimal even for $\SPAPA$, as far as combinatorial algorithms are concerned.
Third, it indicates that the hardness of $\APA$ does not come from the requirement to produce large-sized outputs (i.e., of size $\Theta(n^2)$ for the points-to set of each pointer), as sub-cubic complexity is also unlikely for constant-size outputs.
Fourth, it shows that all on-demand analyses (e.g.,~\cite{Heintze01b,Sridharan05,Zheng08,Lu13,Zhang13,Chatterjee18,Vedurada19}),
which attempt to reduce complexity by avoiding the exhaustive computation of all points-to sets, can only provide heuristic improvements without any guarantees.

\Paragraph{Bounded $\APA$.}
Given the hardness of $\APA$ under \cref{them:bmm_hard}, we next seek  mild restrictions that allow for algorithmic improvements below the cubic bound.
Perhaps surprisingly, we show that bounding the number of times statements of type~4 are executed suffices.
\smallskip
\begin{restatable}{theorem}{themboundedapasubcubic}\label{them:bounded_apa_subcubic}
For all $j\in \Natsplus$,  $(4,j)$-bounded $\APAPA$ is solvable in $\Otilde(n^{\omega}\cdot j)$ time, where $n$ is the number of pointers and $\omega$ is the matrix-multiplication exponent.
In particular, $(4,\Otilde(1))$-bounded $\APAPA$ is solvable in $\Otilde(n^{\omega})$ time.
\end{restatable}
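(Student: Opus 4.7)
The plan is to iteratively resolve type-4 statements in the Dyck-graph representation, level by level, stopping after exactly $j$ levels. Starting from the representation $(G,S_4)$ of $(A,S)$, I maintain a sequence of graphs $G_0=G, G_1, \dots, G_j$, where $G_{k+1}$ is obtained from $G_k$ by adding an $\epsilon$-edge $b\DEdge{\epsilon}c$ for every type-4 statement $*a=b\in S_4$ and every node $c$ that flows into $a$ in $G_k$. After $j$ iterations, points-to queries are answered by reading off the flows-into relation in $G_j$; by \cref{lem:apa_to_dyck} (applied in the now type-4-free graph $G_j$), this yields the over-approximation demanded by $(4,j)$-bounded $\APAPA$.

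For correctness, I would prove by induction on $k$ that if $b\in\PointsToSet{a}$ admits a witness program that executes at most $k$ type-4 statements, then $b$ flows into $a$ in $G_k$. The base case $k=0$ is \cref{lem:apa_to_dyck}. In the inductive step, given a witness $W$ with $k+1$ type-4 executions, I isolate the last such execution $*a'=b'$ in $W$; its prefix is a witness with at most $k$ type-4 executions that establishes $c\in\Store{a'}$ for every $c\in C:=\Store{a'}$ at that moment, so by the induction hypothesis each such $c$ flows into $a'$ in $G_k$, hence $b'\DEdge{\epsilon}c\in G_{k+1}$ by construction. The suffix of $W$ uses no type-4 statements, so by \cref{lem:apa_to_dyck} applied in $G_{k+1}$, the final fact $b\in\PointsToSet{a}$ is captured by a flows-into relation in $G_{k+1}\subseteq G_j$.

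For the running time, each round does two things. First, it computes the flows-into relation in $G_k$: by definition, $b$ flows into $a$ iff there is some $c$ with $b\DEdge{\&}c$ and $a$ is $\Dyck_1$-reachable from $c$, so one invocation of the $\Otilde(n^{\omega})$ $\DReachability{1}$ algorithm (stated earlier in our technical contributions) yields the entire matrix $F_k[x][a]=\mathbb{1}[x\text{ flows into }a\text{ in }G_k]$. Second, it adds the new $\epsilon$-edges; this set is exactly the support of the Boolean product $T\cdot F_k^{\top}$, where $T[b][a]=1$ iff $*a=b\in S_4$, computable in $O(n^{\omega})$ time. Summing over $j$ rounds gives the claimed $\Otilde(j\cdot n^{\omega})$ bound, and for $j=\Otilde(1)$ this collapses to $\Otilde(n^{\omega})$.

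The main obstacle is the inductive correctness argument, specifically the peeling step that isolates the last type-4 execution and reconstructs a valid shorter sub-witness for every element of $\Store{a'}$ at that moment. The subtlety is that a single type-4 statement may fire multiple times in $W$ (as in \cref{fig:apa_example}), so I must argue that accounting for \emph{all} firings up to the last one requires only $k$ type-4 executions, and that combining the prefix sub-witnesses with the fresh $\epsilon$-edges correctly simulates the suffix in the Dyck-graph formalism. Once this structural argument is in place, the complexity follows directly from the earlier $\Otilde(n^{\omega})$ $\Dyck_1$-reachability algorithm and one Boolean matrix multiplication per round.
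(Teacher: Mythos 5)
Your proposal is correct and follows the same overall strategy as the paper: iteratively resolve type-4 statements via a sequence of Dyck graphs $G_0,\dots,G_j$, using the $\Otilde(n^\omega)$ algorithm for $\DReachability{1}$ once per round, and read the answer off the final flows-into relation. The correctness induction is also structurally the same; you actually spell out the peeling argument (isolate the last type-4 firing, apply the induction hypothesis to its prefix, convert the firing into fresh $\epsilon$-edges, then apply \cref{lem:apa_to_dyck} to the now type-4-free graph) more explicitly than the paper does, and the subtlety you flag about handling \emph{all} firings up to the last one resolves correctly: each such firing sits after a prefix with at most $k$ type-4 executions, so the induction hypothesis yields the required $\epsilon$-edges in $G_{k+1}$, and one can replace each firing by the corresponding synthetic type-1 executions to obtain a type-4-free witness in $G_{k+1}$.

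The one place where you genuinely deviate from the paper's proof is the per-round resolution step. The paper first applies \cref{lem:sfour_bound} to assume $|S_4|\le n$ without loss of generality (by introducing slack pointers), and then loops over each of the $\le n$ type-4 statements and each of the $\le n$ candidate nodes $c$, for $O(n^2)$ work per round, while computing flows-into via an auxiliary two-copy graph $H_i$ on which it runs a transitive closure. You instead encode $S_4$ as an $n\times n$ Boolean matrix $T$ and obtain the new $\epsilon$-edges as the support of a single Boolean product, sidestepping the normalization lemma entirely and handling $|S_4|=\Theta(n^2)$ for free. Both routes cost $\Otilde(n^\omega)$ per round and give the same bound; your BMM formulation is arguably cleaner. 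One small imprecision: a single call to $\DReachability{1}$ gives the D-reachability matrix $D_k$, not the flows-into matrix $F_k$ directly; you still need one further Boolean product $F_k = R_{\&}\cdot D_k$ with the adjacency matrix of $\&$-labeled edges. This does not affect the asymptotics, but it should be stated.
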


Here $\Otilde$ hides poly-logarithmic factors (i.e., factors of the form $\log^{c} n$, for some constant $c$).
It is known that $\omega<2.373$~\cite{LeGall14}, hence the bound is sub-cubic.
Besides its theoretical interest, \cref{them:bounded_apa_subcubic} also has practical relevance, as
it reduces the problem to  a small number of matrix multiplications.
First, some sub-cubic algorithms for matrix multiplication, like Strassen's~\cite{Strassen69}, often lead to observable practical speedups over simpler, cubic algorithms~\cite{Huang16,Lederman96}.
Second, this reduction can take advantage of both highly optimized practical implementations for matrix multiplication~\cite{Ladernam92,Kaporin99} and specialized hardware~\cite{Dave07}.
Third, the algorithm behind \cref{them:bounded_apa_subcubic} is an ``anytime algorithm'', in the spirit of~\cite{Boddy91,Chatterjee15a}.
The algorithm computes $(4,j)$-bounded $\APAPA$ iteratively for increasing values of $j$.
It can be terminated in any iteration $j$ according to the runtime requirements of the analysis.
At that point, the algorithm has executed for at most $\Otilde(n^{\omega}\cdot j)$ time, and is guaranteed to have computed all points-to relationships as witnessed by $(4,j)$-bounded programs.
Hence, 
(i)~a timeout  does not waste analysis time, and
(ii)~the obtained results provide measurable completeness guarantees.

It is believed that $\omega=2+o(1)$, in which case  \cref{them:bounded_apa_subcubic} yields a quadratic bound.
Given such an improvement, a natural question is whether sub-quadratic algorithms are possible when we restrict our attention to witnesses that are poly-logarithmically bounded.
Clearly this is not possible for $\APAPA$, as the size of the output can be $\Theta(n^2)$,
but the question becomes interesting in the case of $\SPAPA$ that has output size $\Theta(1)$.
We answer this question in negative.

\smallskip
\begin{restatable}{theorem}{themovhardness}\label{them:ovhardness}
$(\All, \Otilde(1))$-bounded $\SPAPA$ has no sub-quadratic algorithm under the Orthogonal Vectors hypothesis.
\end{restatable}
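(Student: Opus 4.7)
The plan is to reduce Orthogonal Vectors (OV) to $(\All, \Otilde(1))$-bounded $\SPAPA$ in such a way that an OV instance on $n$ vectors is mapped to an $\APA$ instance with $\Otilde(n)$ pointers and $\Otilde(n)$ statements, every witness program for the designated single-pair query executes $\Otilde(1)$ statements of every type, and the query evaluates to true iff OV has a solution. Recall the OV hypothesis: given sets $U, V \subseteq \{0,1\}^d$ with $|U| = |V| = n$ and $d = \omega(\log n)$, there is no $O(n^{2-\epsilon})$ algorithm deciding whether some $u \in U, v \in V$ satisfy $u \cdot v = 0$; the lower bound is known to remain even when $d = \text{polylog}(n)$. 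Given such a reduction, any hypothetical $O(n^{2-\epsilon})$ algorithm for $(\All, \Otilde(1))$-bounded $\SPAPA$ on $n$-pointer instances would, applied to the reduction's output, decide OV in time $\Otilde(n^{2-\epsilon})$, contradicting the hypothesis.

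The construction would proceed over the Dyck-graph representation of \cref{subsec:dyck_reachability}, invoking \cref{lem:apa_to_dyck} and \cref{lem:resolved_graph} to translate points-to queries into Dyck-reachability queries in the resolved graph $\ov{G}$. I would introduce a designated pair of pointers $(a,b)$, two peripheral gadgets (one for $U$ and one for $V$), and a shared \emph{checker} consisting of $d$ coordinate layers of constant width. For each $u \in U$, a single type-2 statement injects a $\&$-edge from $b$ into the $U$-side gadget; symmetrically, for each $v \in V$, a small block induces a $*$-matching edge feeding into $a$. A family of type-1 statements hard-wires the bit patterns of $u$ and $v$ into the coordinate layers. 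The intended property is that a Dyck-balanced walk from $b$'s $\&$-edge to $a$ exists iff some $u \in U, v \in V$ satisfy $u \cdot v = 0$: concretely, the $i$-th coordinate layer contains a tiny branching gadget that is traversable precisely when $\neg(u_i = v_i = 1)$, by introducing an unmatchable $\&$ in the forbidden case. Because the $d$ coordinate layers are shared across all $n^2$ pairs, the total size is $\Otilde(n)$, while every Dyck walk through the checker has length $O(d) = \Otilde(1)$, so a canonical witness program executes $\Otilde(1)$ statements of every type.

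The main obstacle is the correct design of the coordinate-layer gadget: it must (i)~be shared across all pairs $(u,v)$ to keep the total size near-linear in $n$; (ii)~faithfully encode the universally quantified condition $\forall i:\neg(u_i = v_i = 1)$ as a single Dyck-reachability property; and (iii)~avoid spurious walks that combine the $u$-choice from one vector with the $v$-choice from an unrelated vector, which would fabricate non-existent orthogonal pairs. Point (iii) is the delicate step and will typically require inserting an outer $\&/*$ bracket pair at each interface between a peripheral gadget and the shared checker, so that the Dyck matching is forced to commit to a single $u$ and a single $v$ per walk and cannot switch identities midway. Once the gadget is in place, correctness follows by a direct induction over coordinate layers, the $\Otilde(1)$ witness bound per type follows from the $O(d)$ walk length, and the claimed $\Omega(n^{2-o(1)})$ lower bound is immediate under the OV hypothesis.
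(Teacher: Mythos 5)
Your high-level framing is right---reduce $\OV$ to an $\SPAPA$ instance of near-linear size with polylogarithmic witnesses, then invoke the $\OV$ hypothesis---but the specific gadget design you sketch has a gap that I do not see how to repair, and it differs in an essential way from what the paper actually does.

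You propose a \emph{shared} checker ``consisting of $d$ coordinate layers of constant width,'' so that the checker has only $O(d)=\Otilde(1)$ nodes in total, and you hope that an outer $\&/*$ bracket pair at each interface forces a Dyck walk to commit to a single $u$ and a single $v$. The difficulty is that once the walk is inside a shared, polylogarithmic-size gadget, the only state a $\Dyck_1$ walk carries is the current stack height, which is itself $O(d)=\Otilde(1)$ since the witness must be $\Otilde(1)$-bounded. That is far too little information to remember \emph{which} of the $n$ vectors $u$ (or $v$) was chosen at the entry; an outer bracket pair adds one unit of stack height, not $\log n$ bits of identity. Consequently a spurious walk can happily enter the checker having ``chosen'' $u$, pass the first few coordinates consistently with some \emph{other} vector $u'$, and still balance at the exit, producing a false positive. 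This is precisely the failure mode your point~(iii) worries about, and I don't think the proposed bracket fix addresses it.

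The paper avoids this by \emph{not} sharing the coordinate layers. For each $x^{i}\in X$ it introduces a private chain of coordinate pointers $a^{i}_1,\dots,a^{i}_D$ (and auxiliary $b^{i}_j$), and likewise $u^{i}_1,\dots,u^{i}_D$ (and $v^{i}_j$) for each $y^{i}\in Y$. The total number of pointers is $O(n'\cdot D)$, which is still $\Otilde(n')$ because the $\OV$ hypothesis holds already for $D=\mathrm{polylog}(n')$, so the near-linear size you need is obtained without sharing. The heart of the construction is then the coordinate-to-coordinate propagation mechanism, which your sketch leaves unspecified: at coordinate $j$, a type-4 statement (namely $*a^{i_1}_j=b^{i_1}_j$ or $*u^{i_2}_j=v^{i_2}_j$) fires and inserts an $\epsilon$-edge into the resolved Dyck graph $\ov{G}$ \emph{only if} a flows-into relationship created at coordinate $j-1$ is already present, and the local gadget at coordinate $j$ (an extra $\&$-hop when the $x$-bit is $0$, or an $\epsilon$-shortcut when the $y$-bit is $0$) lets the required flows-into relationship form exactly when $x^{i_1}[j]\cdot y^{i_2}[j]=0$. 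The ``pass/fail'' signal thus propagates sequentially through the resolved graph via repeated type-4 resolutions, one per coordinate, giving a witness of length $O(D)=\Otilde(1)$. This iterated interaction between type-4 statements and the resolved graph is what makes the reduction sound, and it is missing from your proposal. Your observation that one can instead insert an \emph{un}matchable $\&$ in the forbidden case is a reasonable dual of the paper's shortcut-in-the-allowed-case design and would likely also work, but only once the per-vector chains and the type-4 propagation mechanism are in place.
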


Recall that in the $(\All, \Otilde(1))$-bounded version of the problem, we restrict our attention to witness programs of poly-logarithmic length (i.e., on all statement types).
The above bound holds even when $m=O(n)$.
Orthogonal Vectors is a well-studied problem with a long-standing quadratic worst-case upper bound.
The corresponding hypothesis states that there is no sub-quadratic algorithm for the problem~\cite{Williams19}.
It is also known that the strong exponential time hypothesis (SETH) implies the Orthogonal Vectors hypothesis~\cite{Williams05}.
Under \cref{them:ovhardness}, our algorithm from \cref{them:bounded_apa_subcubic} is \emph{optimal}  when $\omega=2+o(1)$.

Finally, to establish \cref{them:bounded_apa_subcubic}, we solve $\DReachability{1}$ in nearly matrix-multiplication time.

\smallskip
\begin{restatable}{theorem}{themd1}\label{them:d1}
All-pairs $\DReachability{1}$ is solvable in $\Otilde(n^{\omega})$ time, where $n$ is the number of nodes and $\omega$ is the matrix-multiplication exponent.
\end{restatable}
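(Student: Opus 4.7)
The plan is to reduce all-pairs $\DReachability{1}$ to $O(\log^{2} n)$ Boolean matrix multiplications, after which \cref{them:d1} is immediate from any $O(n^{\omega})$ Boolean matrix-multiplication algorithm. Let $A$, $B$, $E$ be the $n\times n$ Boolean adjacency matrices of the $\&$-, $*$-, and $\epsilon$-edges of the input Dyck graph, and let $T$ be the Boolean matrix whose $(u,v)$-entry is $1$ iff $v$ is D-reachable from $u$. From the grammar of $\Dyck_{1}$, the matrix $T$ is the least fixpoint of
\[
T \;=\; \bigl(I + E + A\cdot T\cdot B\bigr)^{*},
\]
where $(\cdot)^{*}$ denotes reflexive--transitive closure in the Boolean semiring.

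I will compute $T$ in two phases. First, a preprocessing phase computes $S := (I+E)^{*}$ in $\lceil\log n\rceil$ squarings and then, by repeated squaring, builds for every $k = 0, 1, \ldots, \lceil\log n\rceil$ the power matrices $P_{k} := (SA)^{2^{k}}\,S$ and $Q_{k} := S\,(BS)^{2^{k}}$, so that $P_{k}(u,v)$ (resp.\ $Q_{k}(u,v)$) is $1$ iff there is a walk from $u$ to $v$ whose label consists of exactly $2^{k}$ $\&$-symbols (resp.\ $*$-symbols) freely interleaved with $\epsilon$-symbols; preprocessing uses $O(\log n)$ products. Then an iterative phase initialises $T_{0} := S$ and repeats, for $O(\log n)$ outer rounds, the update
\[
T_{i} \;:=\; \Bigl( T_{i-1} \;+\; \sum_{k=0}^{\lceil\log n\rceil} P_{k} \cdot T_{i-1} \cdot Q_{k} \Bigr)^{*}.
\]
Each round uses $O(\log n)$ products for the enclosure terms and $O(\log n)$ more for the closure, yielding $O(\log^{2} n)$ products overall.

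Correctness reduces to a combinatorial parse-depth lemma: every $\Dyck_{1}$-word $w$ admits a parse tree of depth $O(\log|w|)$ under two operations, (a) concatenation of an arbitrary number of Dyck words, absorbed by the outer $(\cdot)^{*}$, and (b) enclosure of a Dyck word by $2^{k}$ $\&$s on the left and $2^{k}$ $*$s on the right, captured by the term $P_{k}\cdot T_{i-1}\cdot Q_{k}$. A primitive $\&^{k} u *^{k}$ is built from $u$ by $O(\log k)$ power-of-two enclosures (one per set bit of $k$), while any non-primitive word is a single-step concatenation of primitives. A simple induction then shows that $T_{i}$ contains every pair whose witness admits a parse tree of depth at most $i$; since every D-reachable pair has a witness of length polynomial in $n$ (by a pigeonhole argument on the configurations $\langle$node, stack-depth$\rangle$ of the pushdown automaton recognising $\Dyck_{1}$), the required parse depth is $O(\log n)$ and $O(\log n)$ outer rounds suffice.

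The main obstacle is precisely this parse-depth lemma. A naive iteration $T\gets(T + A T B)^{*}$ only grows the captured nesting depth by $1$ per round, forcing $\Omega(n)$ rounds on inputs whose only witness is of the form $\&^{n}*^{n}$; the crux is that the precomputed power matrices $P_{k}, Q_{k}$ each absorb up to $2^{k}$ layers of nesting in a single step, and a structural induction on $\Dyck_{1}$-words, together with the fact that $(\cdot)^{*}$ collapses any concatenation chain to depth one, is what yields logarithmic parse depth for every witness and, in turn, the $O(\log^{2} n)$ multiplication bound.
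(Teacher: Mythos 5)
Your reduction hinges on the ``parse-depth lemma'' — that every $\Dyck_1$-word $w$ admits a parse tree of depth $O(\log|w|)$ under concatenation and power-of-two enclosure — but this claim is false, and with it the $O(\log n)$ bound on the number of outer rounds collapses. Consider the family $w_0 = \&\,*$ and $w_j = \&\, w_{j-1}\, \&\, *\, *$, realized as a simple path graph whose $4j{+}2$ edges spell out $w_j$. The word $w_j$ is a single primitive (its stack height touches $0$ only at the two endpoints), and its interior minimum stack height is exactly $1$; hence the only legal power-of-two enclosure at the top level is $\&^1 \cdot u\cdot *^1$ with $u = w_{j-1}\cdot(\&\,*)$, so $k=1$ is unavailable. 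Tracking your iteration, $T_i$ contains $u$ only if $T_{i-1}$ contains $w_{j-1}$, and unrolling gives $\Delta(w_j)=j{+}1$: your iteration needs $\Theta(j)=\Theta(n)$ outer rounds on this instance, not $O(\log n)$. The flaw is structural, not in bookkeeping: once the inner Dyck word accumulates an appendage like $\&\,*$ whose balance forces the interior minimum to $1$, a fixed $P_k,Q_k$ can strip at most one parenthesis layer per round, and no arithmetic on the exponent $k$ helps.

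What is missing is precisely the ingredient the paper adds: the monotone $\&$- and $*$-segments used for enclosure must be allowed to \emph{shortcut over Dyck-reachable pairs discovered in earlier rounds}, which your precomputed $P_k$ and $Q_k$ (built once from $(I+E)^*$) cannot do. The paper's $\DOneAlgo$ inserts each newly discovered D-reachable pair back into the graph as a fresh $\epsilon$-edge, and then reruns the bell-shape computation on the \emph{augmented} graph; consequently, a monotone increasing segment in round $i{+}1$ can hop across entire summarized subwords, which is what collapses the recursion depth in the $w_j$ example to $O(1)$ rounds and, in general, is what proves \cref{lem:max_distance_halved} (the number of local maxima of the summarized witness at least halves per round). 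Your outer $(\cdot)^*$ handles concatenation correctly, and the power-of-two doubling is the right idea for the \emph{inner} phase (it is analogous to $\BellReachAlgo$'s doubling of $\MaxStackHeight$), but without re-deriving the monotone closures on the $\epsilon$-augmented graph in each outer round, the halving argument does not apply and the $O(\log^2 n)$ multiplication budget is not sufficient.
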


Observe the different regimes of $\DReachability{k}$ for various values of $k$.
When $k=0$, the problem becomes standard graph reachability, which is solvable in $O(n^{\omega})$ time~\cite{Munro71}.
When $k\geq 2$, the problem is solvable in $O(n^3)$ time, and this bound is believed to be tight (wrt polynomial improvements)~\cite{Heintze97}.
The case of $k=1$ was recently solved independently in~\cite{Bradford2018}.
However, our algorithm behind~\cref{them:d1} is more straightforward:~it establishes a purely combinatorial reduction of the problem to $O(\log^2 n)$ many transitive-closure operations.
From there, it relies on algebraic, fast-matrix multiplication for performing each transitive closure in $O(n^{\omega})$ time.
In contrast, the algorithm in~\cite{Bradford2018} is considerably longer and relies on intricate algebraic transformations.
In addition, our algorithm is a $\log n$-factor faster.
We refer to \cref{subsec:d1_subcubic} for a more detailed comparison.

\Paragraph{Parallelizability of $\APA$.}
Parallelization is an important aspect of the complexity of a problem.
In the case of $\APA$, this is further motivated by a multitude of parallel implementations~\cite{MendezLojo2010,Mendez12,Su14,Wang17,Liu19,Blass19}.
The question is thus whether $\APA$ is effectively parallelizable.
We answer this question in negative.
\smallskip
\begin{restatable}{theorem}{thempcomplete}\label{them:pcomplete}
$\SPAPA$ is P-complete.
\end{restatable}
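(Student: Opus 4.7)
The proof plan is to show both containment in P and P-hardness. Membership in P is immediate from Theorem \ref{them:cubic_upper_bound}, which gives an $O(n^3)$-time algorithm for $\APAPA$ and therefore for $\SPAPA$ as a special case. For hardness, I would give a logspace reduction from the Monotone Circuit Value Problem ($\MonotoneCVP$), a canonical P-complete problem. Given a monotone Boolean circuit $C$ with $0/1$ inputs and AND/OR gates together with a target gate $o$, the reduction produces an $\APA$ instance and a pointer pair $(a,b)$ such that $b \in \PointsToSet{a}$ iff $o$ evaluates to $1$ in $C$.

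The encoding introduces a single sentinel memory location $\star$ and, for each wire $w$, a pointer $p_w$, with the intended invariant that $\star \in \PointsToSet{p_w}$ iff $w$ evaluates to $1$. Input wires with value $1$ are encoded by a type-$2$ statement $p_w=\&\star$, while inputs with value $0$ contribute nothing. An OR gate $g = g_1 \vee g_2$ is encoded by the two type-$1$ statements $p_g = p_{g_1}$ and $p_g = p_{g_2}$, which force $\PointsToSet{p_g} \supseteq \PointsToSet{p_{g_1}} \cup \PointsToSet{p_{g_2}}$ and thus faithfully simulate disjunction of the invariant. For an AND gate $g = g_1 \wedge g_2$ the plan is to exploit the conditional propagation power of type-$4$ statements: introduce a fresh gate-local helper $h_g$ and add statements whose combined effect is first to load $p_g$ into the points-to set of a node reachable from $p_{g_1}$ only when the latter already ``sees'' $\star$, and then, via a type-$4$ indirect write of $p_{g_2}$ through $h_g$, to cause $\PointsToSet{p_g}$ to receive $\PointsToSet{p_{g_2}}$ precisely when $g_1$ is true. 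The final query is whether $\star \in \PointsToSet{p_o}$, which by the invariant coincides with the value of the output gate.

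The main obstacle I anticipate is designing the AND-gadget so that its conditional semantics is faithful in both directions and, crucially, so that distinct AND gadgets do not cross-contaminate through their shared use of $\star$. Addressing this requires that every helper pointer is private to a single gate and that no statement inserts gate-specific auxiliary tokens into $\PointsToSet{\star}$ except along the intended conditional path; a clean way to enforce this is to keep all type-$4$ writes and type-$2$ references in the gadget local to the fresh helpers $h_g$. Once the gadget is locked down, correctness is a routine induction on the topological order of $C$: using the least-fixpoint semantics of $\APA$ and Lemma \ref{lem:resolved_graph} for the Dyck-graph representation, one shows by induction on gate depth that $\star \in \PointsToSet{p_w}$ iff $w$ evaluates to $1$. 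Since the reduction emits $O(1)$ statements per gate and requires only counters to name wires and helpers, it runs in logspace, completing the proof of P-completeness of $\SPAPA$.
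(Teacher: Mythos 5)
Your containment-in-P argument and the choice of $\MonotoneCVP$ as the source problem match the paper. The hardness direction, however, has a genuine gap: you never actually construct the AND gadget, and the single global sentinel $\star$ cannot carry gate-local conditional information in a monotone, flow-insensitive setting. Concretely, the only conditional mechanism $\APA$ offers is a type-3 statement $a=*b$ (or a type-4 statement $*a=b$) that fires based on which locations currently lie in $\PointsToSet{b}$ (resp.\ $\PointsToSet{a}$). With a single $\star$, the only distinguishing fact about a true wire $g_1$ is $\star \in \PointsToSet{p_{g_1}}$, so to conditionally add $p_g$ to some helper's points-to set you must route through $\PointsToSet{\star}$ (e.g.\ via $a_g = *p_{g_1}$, which pulls in $\PointsToSet{\star}$). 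But $\PointsToSet{\star}$ is global: either it contains nothing gate-specific (then the type-3 read cannot conditionally deliver $p_g$), or it contains $p_{g'}$ for every gate $g'$ (then one true $g_1$ lets $*a_g = p_{g_2}$ write $\PointsToSet{p_{g_2}}$ into every $\PointsToSet{p_{g'}}$), which is exactly the cross-contamination you flag. Flagging the obstacle is not the same as overcoming it, and ``keep all type-4 writes local to $h_g$'' is a design principle, not a gadget; to make AND work you need a per-gate isolation mechanism that a single shared sentinel does not provide.

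The paper's reduction is also from $\MonotoneCVP$, but the isolation comes from somewhere else entirely: each gate $A_i$ is given a private chain of $\Theta(i)$ pointers $x_i^1,\dots,x_i^i$ with $\&$-labelled edges and $y_i^1,\dots,y_i^{i-1}$ with $*$-labelled edges, so that the flow path certifying that gate $i$ is true must open and close a gate-specific number of parentheses in the Dyck graph. Different gates thus have incompatible parenthesis signatures, and the structural Lemmas~\ref{lem:p_complete_lemma} and~\ref{lem:pcomplete_construction} are needed precisely to show that the ``unwanted'' $\epsilon$-edges created while resolving type-4 statements cannot produce spurious balanced paths. This is a non-trivial argument, not the ``routine induction on topological order'' you anticipate once the gadget is fixed. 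Note also that the paper's construction has size $\Theta(n'^2)$ (each gate contributes $\Theta(i)$ auxiliary pointers), not $O(1)$ per gate; the reduction is still logspace because it needs only counters, but your stated $O(1)$-per-gate bound signals that the intended gadget differs substantially from what the problem actually requires.
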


\cref{them:pcomplete} implies that all efforts on complete parallelization must stay at the current, heuristic level, while their  improvements vanish on hard instances.
On the other hand, in spirit similar to \cref{them:bounded_apa_subcubic}, we can seek for mild restrictions to the problem that make it parallelizable.
Recall that, given some $i\in \Nats$, the complexity class NC$^i$ contains the problems that are solvable in parallel time $O(\log^i n)$ with polynomially many processors.
We show the following theorem.

\smallskip
\begin{restatable}{theorem}{themboundedparallelizable}\label{them:bounded_parallelizable}
$(4,\log^i n)$-bounded $\APAPA$ is in NC$^{i+2}$.
\end{restatable}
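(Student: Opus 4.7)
My plan is to parallelize the sequential algorithm from \cref{them:bounded_apa_subcubic} for $(4,j)$-bounded $\APAPA$ with $j = \log^i n$, by showing that each of its $j$ outer iterations can be carried out in $\mathsf{NC}^2$ parallel depth on polynomially many processors.

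First, I would represent the input by its Dyck-graph $G_0$ from \cref{subsec:dyck_reachability}, and view the $(4,j)$-bounded solution as the flows-into relation in a graph $G_j$ obtained by $j$ rounds of type-$4$ resolution. In round $k$, given $G_{k-1}$ together with its flows-into relation, for every type-$4$ statement $*a = b$ and every node $c$ that currently flows into $a$ I insert an $\epsilon$-edge $b \DEdge{\epsilon} c$ to form $G_k$. By the operational semantics in \cref{subsec:apa}, after $j = \log^i n$ rounds the flows-into relation of $G_j$ captures precisely the $(4,j)$-bounded points-to information, and every round is itself a purely relational computation over an $n$-node Dyck graph.

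Second, I would parallelize a single round. A round performs (i)~a flows-into (equivalently, $\DReachability{1}$) query over $G_{k-1}$, and (ii)~a batch insertion of new $\epsilon$-edges. Step~(ii) is a componentwise predicate over triples $(a,b,c)$ and immediately lies in $\mathsf{NC}^1$. For step~(i), I would use the standard fact that Boolean matrix multiplication is in $\mathsf{NC}^1$ and hence transitive closure, computed by $O(\log n)$ repeated squarings, lies in $\mathsf{NC}^2$. Combined with the Dyck-graph representation of \cref{subsec:dyck_reachability}, the per-round reachability update reduces to a constant number of transitive closures over a Boolean matrix that encodes the current $\epsilon$-edges augmented by precomputed $\&$--$*$ compositions, keeping each round within $\mathsf{NC}^2$.

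Third, I would compose the $j$ rounds sequentially. Since each round has depth $O(\log^2 n)$ and there are $j = \log^i n$ rounds, the total parallel depth is $O(\log^{i+2} n)$ on polynomially many processors, placing $(4,\log^i n)$-bounded $\APAPA$ in $\mathsf{NC}^{i+2}$. The main obstacle is justifying that the per-round reachability truly fits within $\mathsf{NC}^2$, rather than in the $\mathsf{NC}^4$ depth that a blackbox application of the $O(\log^2 n)$-transitive-closure $\DReachability{1}$ algorithm of \cref{them:d1} would suggest. The key is to exploit the incremental structure of $G_{k-1} \to G_k$: between rounds only $\epsilon$-edges are added, so the fresh $\DReachability{1}$ information required for the next round is captured by a single transitive closure of the current $\epsilon$-subgraph together with a once-and-for-all precomputed Boolean matrix encoding feasible $\&$--$*$ matchings, rather than by re-running the full $\DReachability{1}$ pipeline from scratch.
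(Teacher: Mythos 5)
Your high-level plan matches the paper's: run the $j=\log^i n$ resolution rounds sequentially, argue each round is in $\mathsf{NC}^2$, and compose to get $\mathsf{NC}^{i+2}$. You also correctly identify the central obstacle — that a black-box parallelization of $\DOneAlgo$ (with its $O(\log n)$ inherently sequential invocations of $\BellReachAlgo$, each with $O(\log n)$ transitive closures) would give $\mathsf{NC}^4$ depth per round, which is too much.

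However, the fix you propose does not work, and this is a genuine gap. You claim that per-round $\DReachability{1}$ reduces to ``a constant number of transitive closures over a Boolean matrix that encodes the current $\epsilon$-edges augmented by precomputed $\&$--$*$ compositions.'' There is no such once-and-for-all precomputation: which $\&$ can legally pair with which $*$ depends on what $\epsilon$-reachability holds \emph{between} them, which in turn depends on lower-level $\&$--$*$ pairings, and so on recursively. The stack height of a witnessing Dyck path can be as large as $\Theta(n^2)$ (this is exactly the content of \cref{lem:d1_distance}), so there is $\Omega(n^2)$ nesting depth to resolve — a constant number of Boolean transitive closures cannot account for it. The paper's proof instead exploits the $O(n^2)$ stack-height bound \emph{directly}: it builds, in one shot, a layered product graph with nodes of the form $(u,h)$ where $u\in V$ and $h\in[O(n^2)]$ encodes the running stack height, with edges respecting the $\epsilon/\&/*$ labels. $\DReachability{1}(x,y)$ then becomes plain reachability from $(x,0)$ to $(y,0)$ in this $O(n^3)$-node digraph, which is $\mathsf{NC}^2$ in one transitive closure (note $\log^2(n^3)=\Theta(\log^2 n)$). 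This is the missing idea: you need the explicit stack-height encoding, not an incremental argument about $\epsilon$-edges. Once you replace your step~(i) with this product-graph reduction, the rest of your argument goes through as stated.
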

Thus, $\APA$ with poly-logarithmically many applications of type~4 statements is highly parallelizable. 
Together, \cref{them:pcomplete} and \cref{them:bounded_parallelizable} expose the core hardness in the parallelization of $\APA$ as stemming from handling statements of type~4.


In the following sections we present details of the above theorems.
To improve readability, in the main paper we present algorithms, examples, and proofs of all theorems.
To highlight the main steps of the proofs, we also present all intermediate lemmas;  many lemma proofs, however, are relegated to the appendix.

\section{A Baseline Algorithm For Andersen's Pointer Analysis}\label{sec:cubic}

In this section, we present a baseline algorithm for $\APA$.
It has similar flavor as other algorithms in the literature, but allows us to more easily establish \cref{them:cubic_upper_bound}.



\smallskip
\begin{algorithm}[H]
\small
\DontPrintSemicolon
\SetInd{0.3em}{0.3em}
\caption{$\AlgoCubic$}\label{algo:cubic}
\SetKwFunction{ProcessEps}{$\ProcessEpsilon$}
\SetKwProg{Fn}{Function}{:}{}
\KwIn{
An instance $(A,S)$ of $\APA$.
}
\KwOut{A set $(a,b)$ of all points-to relationships $b\in \PointsToSet{a}$.}
\BlankLine
\begin{multicols}{3}
\tcp{Initialization}
Let $\Worklist, \Done\gets \emptyset$\\
\ForEach{$a=\&b$}{\label{line_cubic_init1}
Insert $(b,a,\&)$ in $\Worklist$\\
Insert $(b,a,\&)$ in $\Done$
}
\ForEach{$a=b$}{\label{line_cubic_init2}
Insert $(b,a,\epsilon)$ in $\Worklist$\\
Insert $(b,a,\epsilon)$ in $\Done$
}
\tcp{Computation}
\While{$\Worklist\neq \emptyset$}{\label{line:cubic_main_loop}
Extract $(a,b,t)$ from $\Worklist$\\
\eIf{$t=\epsilon$}{
$\ProcessEpsilon(a,b)$
}
{
$\ProcessRef(a,b)$
}
}
\Return{$\{(a,b)\colon (b,a,\&) \in \Done \}$}
\pagebreak

\tcp{Process inclusion edges}
\Fn{\ProcessEps{$a$, $b$}}{
\ForEach{$(c,a,\&)\in\Done$}{\label{line:proceps3}
$\Establish(c,b,\&)$
}
}

\BlankLine
\BlankLine
\BlankLine
\tcp{Establish new inclusion and reference edges}
\SetKwFunction{ProcessEstablish}{$\Establish$}
\Fn{\ProcessEstablish{$a,b,t$}}{
\uIf{$(a,b,t)\not \in\Done$}{
Insert $(a,b,t)$ in $\Worklist$\\
Insert $(a,b,t)$ in $\Done$\\
}
}

\pagebreak
\tcp{Process reference edges}
\SetKwFunction{ProcessEps}{$\ProcessRef$}
\Fn{\ProcessEps{$a$, $b$}}{
\ForEach{$(b,c,\epsilon)\in\Done$}{\label{line:procref1}
$\Establish(a, c, \&)$
}
\ForEach{$c=*b$ in $S$}{\label{line:procref2}
$\Establish(a, c, \epsilon)$
}
\ForEach{$*b=c$ in $S$}{\label{line:procref3}
$\Establish(c, a, \epsilon)$
}
}
\end{multicols}
\end{algorithm}

\Paragraph{Algorithm $\boldsymbol{{\AlgoCubic}}$.}
The algorithm performs a form of dynamic transitive closure of a Dyck graph, in similar spirit to existing algorithms in the literature.
The key difference is that instead of just maintaining inclusion relationships $a\DEdge{\epsilon}b$, the flow graph also explicitly captures points-to relationships $a\DEdge{\&}b$ in its edges.
In each iteration, the algorithm processes a newly inserted edge $a\DEdge{t}b$, where $t\in \{\epsilon, \& \}$, and inserts new edges that represent flows-into and inclusion relationships that are implied by $a\DEdge{t}b$, the current state of the flow graph, and the statements in $S$.
The algorithm can be seen as an on-the-fly version of the difference-propagation technique~\cite{Pearce04,Sridharan09}, without the need to store difference-sets explicitly.
See \cref{algo:cubic} for a detailed description.

\begin{proof}[Proof of \cref{them:cubic_upper_bound}]
The correctness of the algorithm follows by a straightforward induction.
Here we argue about the complexity.
The initialization clearly runs in time $O(m)=O(n^2)$.
For every pair of pointers $(a,b)$, the worklist $\Worklist$ can have at most one element of the form $(a,b,\epsilon)$ and at most one element of the form $(a,b,\&)$, due to the set $\Done$.
Hence, the main loop in \cref{line:cubic_main_loop} is executed at most twice for every pair of pointers $(a,b)$, and thus $O(n^2)$ times in total.
For every such pair, the loops in \cref{line:proceps3,line:procref1} are executed once for each pointer $c$.
Hence, these loops are executed in $O(n^3)$ time in total.
Finally, each of the loops in \cref{line:procref2,line:procref3} is executed once for every pointer $a$.
Summing over all statements of type~3 and type~4, which are $O(m)$ many, we obtain that these loops are executed $O(n\cdot m)$ times in total.
Since $m=O(n^2)$, we obtain the desired bound $O(n^3)$.
\end{proof}

\section{The Cubic Hardness of Andersen's Pointer Analysis}\label{sec:triangle_hard}

In this section we prove \cref{them:bmm_hard}, i.e., that even $\SPAPA$ (given pointers $a,b$, is it that $b\in \PointsToSet{a}$?) does not have a combinatorial sub-cubic algorithm under the BMM hypothesis.

\Paragraph{Exhaustive vs On-Demand $\APA$.}
Recall the difference between $\APAPA$ and $\SPAPA$.
The former problem asks for the points-to set $\PointsToSet{a}$ of \emph{every} pointer $a$, while the latter focuses on a \emph{specific} pair of pointers $a,b$, and asks whether $b\in \PointsToSet{a}$.
Thus, $\SPAPA$ is a special version of $\APAPA$.
The (combinatorial) cubic hardness of $\APAPA$ follows straightforwardly via a reduction from the graph transitive closure, which has the same combinatorial cubic lower bound.
Indeed, given a directed graph $H=(V,E)$, we simply create an $\APA$ instance $(A,S)$, where $A=\{a_1, a_2\colon a\in V \}$ contains two copies of each node in $V$, and $S=S_1\cup S_2$, where $S_1=\{a_1=b_1\colon (b,a)\in E \}$ and $S_2=\{a_1=\& a_2\colon a\in V \}$.
That is, the type-1 statements directly correspond to the edges in $H$, and the type-2 statements are dummy statements that initialize the points-to sets. It follows immediately that after solving the $\APA$ instance $(A,S)$, for every node $a\in V$, the points-to set $\PointsToSet{a_1}$ contains all $b_2$ such that $a$ is reachable by $b$ in $H$.
Notably, the $\APA$ instance does not even make use of type-3 and type-4 statements.

We remark that the reduction does not apply for the on-demand version of $\APA$.
In the following we establish the proof of \cref{them:bmm_hard}.
In this direction, we establish a fine-grained reduction~\cite{Williams19} from the problem of deciding whether a graph contains a triangle to $\SPAPA$.

\Paragraph{Reduction from finding triangles.}
Consider an undirected graph $H=(V,E)$, of $n'$ nodes, where the task is to determine  if $H$ contains a triangle.
For notational convenience, we take the node set of $H$ to be the set of integers $[n']$.
Hence, the task is to determine if there exist distinct $i,j,k\in [n']$ such that $(i,j), (j,k), (k,i)\in E$.
Our reduction constructs four pointers $a_i, b_i, c_i, d_i$ for every node $i'\in [n']$, 
and uses one additional pointer $s$ such that $s\in \PointsToSet{c_1}$ iff $H$ has a triangle.

\SubParagraph{Intuition.}
The search for a triangle $(i,j,k)$ of $H$ can be seen as a search for two nodes $i$ and $k$ such that $k$ is both a distance-1 and distance-2 neighbor of $i$.
In our reduction, the two pointers $c_k$ and $d_k$ are such that $c_k\DEdge{\epsilon} d_k$ in the resolved Dyck graph $\ov{G}$ iff $k$ is both distance-1 and distance-2 neighbor of some node $i$.
This is achieved in two steps.
\begin{compactenum}
\item In the initial Dyck graph $G$ of the $\APA$ instance, $c_k$ flows into $a_i$, by introducing two statements $b_j=\& c_k$ and $a_i=b_j$, where $j$ is a neighbor of both $i$ and $k$.
\item We have a statement $*a_i=d_k$.
\end{compactenum}
We also introduce some additional statements between all $d_i$ and between all $c_i$ such that $c_1$ is D-reachable from $d_1$ in $\ov{G}$ iff there exists some $k$ such that $c_k\DEdge{\epsilon} d_k$ in $\ov{G}$
(i.e., by the above, $k$ is a distance-1 and distance-2 neighbor of some node $i$).
Finally, we have a statement $d_1=\& s$, so that $s\in \PointsToSet{c_1}$ iff the above condition holds.
\cref{fig:triangle_hard} provides an illustration.

\SubParagraph{Formal construction.}
We now proceed with the formal construction.
We construct an instance of $\SPAPA$ as follows.
\begin{compactenum}
\item We introduce a distinguished pointer $s$.
\item For every node $i\in [n']$, we introduce four pointers $a_i, b_i, c_i, d_i$.
\item For every $(i,j)\in E$ with $j<i$, we have
(i)~$a_i=b_j$, and
(ii)~$b_i=\& c_j$.
\item For every $(i,j)\in E$ with $j>i$, we have
$*a_i=d_j$.
\item Finally, we have the following sets of assignments.
\begin{align*}
d_1=\& s \quad d_2=\&d_1   \quad  \cdots \quad d_n=\&d_{n-1}  \qquad \text{and}  \qquad
c_1=*c_2 \quad c_2=*c_3\quad \cdots \quad c_{n-1}=*c_n\ .
\end{align*}
\end{compactenum}
The on-demand question is whether $s\in \PointsToSet{c_1}$.
Observe that the number of pointers of our $\APA$ instance is $O(n')$, and the above construction can be easily carried out in time proportional to the size of $H$.

\Paragraph{Correctness.}
We now establish the correctness of the above construction.
The key idea is as follows.
Recall our definition of the resolved Dyck graph $\ov{G}$ from \cref{subsec:dyck_reachability}.
An edge $d_k\DEdge{\epsilon} c_k$ is inserted in $\ov{G}$ iff $k$ is both a distance1- and distance-2 neighbor of some node $i$.
In turn, this implies the existence of a triangle in $H$ that contains $i$ and $k$.
We have the following lemma.
\begin{figure}
\newcommand{\xdisposition}{0}
\newcommand{\ydisposition}{0}
\newcommand{\xtstep}{0.9}
\newcommand{\ytstep}{0.5}
\newcommand{\xstep}{1.2}
\newcommand{\ystep}{1.5}
\newcommand{\gatex}{0.8}
\newcommand{\gatey}{0.4}

\newcommand{\drawGate}[3]{
\node[] (T#1) at (#2 + \gatex, #3 + \gatey/2)  {$T_{#1}$};
\node[] (#1bar) at (#2 + \gatex, #3 - \gatey/2)  {$\bar{#1}$};
\node[] (#1) at (#2,#3)  {$#1$};
}
\newcommand{\drawInput}[3]{
\node[] (T#1) at (#2, #3 )  {$T_{#1}$};
\node[] (#1) at (#2+ 1.25*\gatex,#3)  {$#1$};
}
\centering
\begin{tikzpicture}[thick,
pre/.style={<-,shorten >= 2pt, shorten <=2pt,  thick},
post/.style={->, thick},
seqtrace/.style={->, line width=2},
postpath/.style={->, thick, decorate, decoration={zigzag,amplitude=1pt,segment length=2mm,pre=lineto,pre length=2pt, post=lineto, post length=4pt}},
node/.style={very thick, draw=black, circle, inner sep = 2, minimum size=2mm},
]

\foreach \x in {1,...,5}{
\node[] at (\x*\xstep, 0*\ystep) (a\x) {$a_{\x}$};
\node[] at (\x*\xstep, -1*\ystep) (b\x) {$b_{\x}$};
\node[] at (\x*\xstep, -2*\ystep) (c\x) {$c_{\x}$};
\node[] at (\x*\xstep, -3*\ystep) (d\x) {$d_{\x}$};
}
\node[] at  (0*\xstep, -3*\ystep) (s) {$s$};

\node[] at(6.5*\xstep, 0*\ystep)	{
$\begin{aligned}
*a_3=&d_2\\
\end{aligned}$
};

\node[] at(6.5*\xstep, -1*\ystep)	{
$\begin{aligned}
*a_4=&d_2\\
*a_4=&d_3\\
\end{aligned}$
};

\node[] at(6.5*\xstep, -2*\ystep)	{
$\begin{aligned}
*a_5=&d_1\\
*a_5=&d_3
\end{aligned}$
};

\draw[post] (b2) to (a3);
\draw[post] (b2) to (a4);
\draw[post] (b3) to (a4);
\draw[post] (b3) to (a5);

\draw[post] (c2) to node[midway, above, sloped]{$\&$} (b3);
\draw[post] (c2) to node[midway, above, sloped]{$\&$} (b4);
\draw[post] (c3) to node[midway, above, sloped]{$\&$} (b4);
\draw[post] (c3) to node[midway, above, sloped]{$\&$} (b5);

\draw[post, smooth, looseness=0.5, in=-125] (c1) to node[midway, below, sloped]{$\&$} (b5);
\draw[post, smooth, looseness=0.5, in=-125] (b1)   to  (a5);

\draw[post] (s) to node[midway, above, sloped]{$\&$} (d1);
\draw[post] (d1) to node[midway, above, sloped]{$\&$} (d2);
\draw[post] (d2) to node[midway, above, sloped]{$\&$} (d3);
\draw[post] (d3) to node[midway, above, sloped]{$\&$} (d4);
\draw[post] (d4) to node[midway, above, sloped]{$\&$} (d5);

\draw[post] (c2) to node[midway, below, sloped]{$*$} (c1);
\draw[post] (c3) to node[midway, below, sloped]{$*$} (c2);
\draw[post] (c4) to node[midway, below, sloped]{$*$} (c3);
\draw[post] (c5) to node[midway, below, sloped]{$*$} (c4);

\begin{scope}[shift={(-4.4*\xstep,-1.45*\ystep)}]
\renewcommand{\xstep}{1.3}
\renewcommand{\ystep}{0.65}

\node[node] (1) at (0*\xstep, 0*\ystep) {$1$};
\node[node] (2) at (1*\xstep, 1*\ystep) {$2$};
\node[node] (3) at (2*\xstep, 1*\ystep) {$3$};
\node[node] (4) at (2*\xstep, -1*\ystep) {$4$};
\node[node] (5) at (1*\xstep, -1*\ystep) {$5$};

\draw[very thick] (1) to (5);
\draw[very thick] (2) to (3);
\draw[very thick] (2) to (4);
\draw[very thick] (3) to (4);
\draw[very thick] (3) to (5);

\end{scope}

\end{tikzpicture}
\caption{
An undirected graph (left) and the corresponding $\SPAPA$ instance (right).
}
\label{fig:triangle_hard}
\end{figure}

\begin{restatable}{lemma}{lemtrianglecorrectness}\label{lem:triangle_correctness}
We have that $s$ flows into $c_1$ in $\ov{G}$ iff $H$ has a triangle.
\end{restatable}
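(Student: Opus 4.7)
The plan is to prove both directions by leveraging the very restrictive structure of $\ov{G}$. The key structural observation is that most nodes in $\ov{G}$ behave as ``one-way streets'': every $a$-node is a sink (no outgoing edges); every $b$-node has outgoing edges only to $a$-nodes (via $\epsilon$); every $c$-node has outgoing edges only to $b$-nodes (via $\&$) or to the next $c$-node down the chain (via $*$); and every $d$-node has outgoing edges only to the next $d$-node up the chain (via $\&$) or, via resolved $\epsilon$-edges, to certain $c$-nodes. Crucially, no edge ever points from an $a$-, $b$-, or $c$-node back into a $d$-node, so once a path leaves the $d$-chain it cannot return.

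First I would establish a structural claim: any path witnessing $c_1$ being D-reachable from $d_1$ in $\ov{G}$ must have the form $d_1 \DEdge{\&} d_2 \DEdge{\&} \cdots \DEdge{\&} d_k \DEdge{\epsilon} c_m \DEdge{*} c_{m-1} \DEdge{*} \cdots \DEdge{*} c_1$ for some $k,m \geq 1$ with the middle edge actually in $\ov{E}$. Indeed, any detour into a $b$- or $a$-node is a dead end that never reaches $c_1$, and after landing in a $c$-node one cannot return to the $d$-chain. Imposing that the label $\&^{k-1}*^{m-1}$ lies in $\Dyck_1$ then forces $k=m$. Thus $s$ flows into $c_1$ (via $d_1$, the unique $\&$-successor of $s$) iff some edge of the form $d_k \DEdge{\epsilon} c_k$ appears in $\ov{G}$.

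Second, I would characterise exactly when such an edge is created. By definition of $\ov{G}$, $d_k \DEdge{\epsilon} c_k$ is added iff there is a type-4 statement $*a_q = d_k$ (equivalently $\{k,q\} \in E$ with $k<q$) and $c_k$ flows into $a_q$ in $\ov{G}$. Another structural argument --- now using that $a_q$'s only incoming edges are the $\epsilon$-edges $b_l \DEdge{\epsilon} a_q$ for $l<q$ with $\{l,q\}\in E$, and that from any direct $\&$-successor $b_l$ of $c_k$ the only balanced paths lead into an $a$-node via a single further $\epsilon$-hop --- shows that $c_k$ flows into $a_q$ iff there exists $l$ with $\{k,l\},\{l,q\} \in E$ and $k<l<q$. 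Combined with $\{k,q\} \in E$, this is exactly a triangle $\{k,l,q\}$ of $H$ whose minimum vertex is $k$.

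Putting the two steps together yields the lemma. For the forward direction, any triangle of $H$ has a minimum vertex $k$, and the resulting resolved edge $d_k \DEdge{\epsilon} c_k$ yields a witness path from $s$ through the $d$-chain, across the resolved edge, and down the $c$-chain to $c_1$ with balanced label. For the converse, any flow of $s$ into $c_1$ produces some $d_k \DEdge{\epsilon} c_k \in \ov{E}$, and by the characterisation above this certifies a triangle of $H$. The main obstacle will be the structural step: I need to carefully rule out all ``creative'' walks --- those using multiple resolved hops, revisiting the $d$- or $c$-chain in non-monotone ways, or taking $\&$-excursions via $c$-nodes into $b$-nodes --- which I would do by induction on the path length, exploiting non-negativity of prefix counts for strings in $\Dyck_1$ together with the absence of any edge from the $\{a,b,c\}$-layer back into the $\{d\}$-layer.
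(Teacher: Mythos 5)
Your proposal is correct and follows essentially the same route as the paper's proof: observe that all edges added when resolving $G$ to $\ov{G}$ go from $d$-nodes to $c$-nodes, then argue that $s$ flowing into $c_1$ forces an edge $d_k \DEdge{\epsilon} c_k$ with \emph{matching} index $k$, which in turn encodes a triangle of $H$ with minimum vertex $k$. Where you add value is in making explicit the two steps the paper leaves implicit: (i) the structural argument that any balanced path $d_1 \rightsquigarrow c_1$ must run up the $d$-chain, cross a single resolved edge, and come down the $c$-chain (so that $\Dyck_1$-balance of $\&^{k-1}*^{m-1}$ pins $k=m$), and (ii) the unwinding of ``$c_k$ flows into $a_q$'' to an explicit intermediate index $l$ with $k<l<q$, which is the third vertex of the triangle. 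Your sketch of the dead-end argument (no edges from the $\{a,b,c\}$-layer back to the $d$-layer; $b$- and $a$-nodes are terminal) is exactly what is needed to make step (i) rigorous, so the plan is sound and the remaining work you flag is routine. One small note: the paper's formal construction item for type-4 statements is written with the opposite index ordering from its own figure and proof (the figure uses $*a_{\text{larger}}=d_{\text{smaller}}$); you sensibly followed the figure/proof convention, which is the intended one.
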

\begin{proof}
We prove each direction separately.

\noindent{\em $(\Rightarrow)$.}
Assume that $H$ has a triangle $(i,j,k)$, with $i>j>k$.
Then we have $a_i=b_j$ and $b_j=\&c_k$ and thus $c_k\DPath{\&}a_i$ in $\ov{G}$.
In addition, we have $*a_i=d_k$, and thus $d_k\DEdge{\epsilon} c_k$ in $\ov{G}$.
Observe that this creates a path $s\DPath{k \&} d_k \DPath{\epsilon} c_k \DPath{(k-1) *} c_1$, which witnesses that $s$ flows into $c_1$ in $\ov{G}$.

\noindent{\em $(\Leftarrow)$.}
Assume that $s$ flows into $c_1$.
Observe that for all $a_i$, if some node $x$ flows into $a_i$ then $x$ is a $c$ node.
It follows that $\ov{G}$ is identical to $G$ with some additional edges from $d$ nodes to $c$ nodes.
Hence, since $s$ flows into $c_1$, there exists some $k\in[n']$ such that $\ov{G}$ has an edge $d_k\DEdge{\epsilon}c_k$.
This means that (i)~there exists an $i$ such that $c_k$ flows into $a_i$ (thus $k$ is a distance-2 neighbor of $i$), and 
(ii)~there is a statement $*a_i=d_k$ (thus $k$ is a distance-1 neighbor of $i$).
Hence $H$ has a triangle containing $i$ and $k$.
The desired result follows.
\end{proof}

We conclude this section with the proof of \cref{them:bmm_hard}.

\begin{proof}[Proof of \cref{them:bmm_hard}.]
Due to \cref{lem:triangle_correctness}, we have that $s$ flows into $c_1$ iff $H$ contains a triangle.
By \cref{lem:resolved_graph} we have that $s\in \PointsToSet{c_1}$ iff $H$ contains a triangle.
By~\cite{Williams18b}, triangle detection has no sub-cubic combinatorial algorithm under the combinatorial BMM-hypothesis.

The desired result follows.
\end{proof}
\section{A Sub-cubic Algorithm for Bounded Andersen's Pointer Analysis}\label{sec:bounded}

In this section, we first show \cref{them:bounded_apa_subcubic}, i.e., that computing points-to relationships when bounding the number of applications of type~4 statements admits a sub-cubic algorithm.
To this end, we first prove in \cref{subsec:d1_subcubic} \cref{them:d1}, i.e., that $\DReachability{1}$ can be solved in nearly matrix-multiplication time.
Afterwards, we use this result to prove \cref{them:bounded_apa_subcubic} in \cref{subsec:bounded_subcubic}.

\subsection{A Sub-cubic Algorithm for $\DReachability{1}$}\label{subsec:d1_subcubic}

In this section we establish a combinatorial reduction of all-pairs $\DReachability{1}$ to $O(\log^2 n)$ matrix multiplications,
establishing that the problem is solvable in nearly matrix-multiplication time.
We first set up some helpful notation, and then present the main algorithm.
Consider a Dyck graph $G=(V,E)$.

\Paragraph{Path indexing.}
Consider a path $P=x_1,\dots, x_{l}$.
Given some $i\in [l]$, we denote by $P[i]=x_i$.
Given $i,j\in [l]$, with $i\leq j$, we denote by $P[i:j]=x_i,\dots, x_j$.
For simplicity, we let $P[:j]=P[1:j]$ and $P[i:]=P[i:l]$.

\Paragraph{Stack heights.}
Consider a path $P$ of length $l$.
We denote by $\NumOpen(P)$ (resp., $\NumClose(P)$) the number of $\&$ (resp., $*$) symbols that appear in the label $\Label(P)$.
The \emph{stack height} of $P$ is defined as $\StackHeight(P)=\NumOpen(P) - \NumClose(P)$ (note that we can have $\StackHeight(P)<0$).
The \emph{maximum stack height} of a path is defined as $\MaxStackHeight(P)=\max_{P'}\StackHeight(P')$, where $P'$ ranges over prefixes of $P$.

\Paragraph{Monotonicity and local maxima.}
Consider a path $P$ of length $l$.
We say that $P$ is \emph{monotonically increasing} (resp., \emph{monotonically decreasing}) if for all $i$ with $1\leq i<l$, we have $\StackHeight(P[:i])\leq \StackHeight(P[:i+1])$
(resp., $\StackHeight(P[:i])\geq \StackHeight(P[:i+1])$).
Given some $i$ with $1\leq i\leq l$, we say that $P$ has a \emph{local maxima} in $i$ if the following conditions hold.
\begin{compactenum}
\item Either $i=1$ or $\StackHeight(P[:i-1])< \StackHeight(P[:i])$.
\item For every $j>i$ such that $\StackHeight(P[:i]) < \StackHeight(P[:j])$, there exists some $l$ with $i<l<j$ such that $\StackHeight(P[:l])<\StackHeight(P[:i])$.
\end{compactenum}

\Paragraph{Bell-shape-reachability.}
We call a path $P$ \emph{bell-shaped} if it has exactly one local maxima.
If $P$ is bell-shaped, it can be decomposed as $P\colon P_1\circ P_2$ where
$P_1$ (resp., $P_2$) is a monotonically increasing (resp., monotonically decreasing) path.
Consider two nodes $x$, $y$.
We say that  $y$ is \emph{$i\&$-reachable} (resp., $i*$-reachable) from $x$, for some $i\in \Nats$,
if there is a path $x\DPath{i\&} y$ (resp., $x\DPath{i*} y$).
We say that $y$ is \emph{bell-shape-reachable} from $x$ if there exists a bell-shaped path $x\DPath{} y$.

\Paragraph{Node distances.}
Given two nodes $x,y$, we define the \emph{distance} $\Distance(x,y)$ from $x$ to $y$ as 
the length of the shortest path $P\colon x\DPath{}y$ with $\Label(P)\in \Dyck_1$, if such a path exists, otherwise $\Distance(x,y)=\infty$.
The \emph{maxima-distance} $\MaximaDistance(x,y)$ is the smallest number of local maxima among all shortest paths $x\DPath{}y$.
The following known lemma states that the distances between two reachable nodes is at most quadratic.
Note that $\MaximaDistance(x,y)\leq \Distance(x,y)$, hence the same bound holds for the maxima-distance.

\smallskip
\begin{lemma}[\cite{Deleage86}]\label{lem:d1_distance}
For every $x,y\in V$, if $\Distance(x,y)<\infty$ then $\Distance(x,y)=O(n^2)$.
\end{lemma}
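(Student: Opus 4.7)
The plan is to analyse a shortest Dyck path $P\colon x\Path y$ with $\Label(P)\in\Dyck_1$ through its stack-height profile. For each prefix $P[:i]$, associate the configuration $(P[i], \StackHeight(P[:i]))\in V\times\Nats$. I will then bound (a)~how many times any configuration can appear on $P$, and (b)~the maximum stack height $H=\MaxStackHeight(P)$.

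For step (a), I would prove a cycle-elimination lemma: if a configuration $(v,h)$ appears at two positions $i<j$ of $P$, then $P[:i]\circ P[j:]$ is still a Dyck path from $x$ to $y$. Indeed, since $P[i]=P[j]=v$, the edge taken after position $j$ in $P$ is available after position $i$ in the spliced path. Moreover, the heights along the resulting path at an index $i+r$ equal the heights of $P$ at $j+r$ (since $h(i)=h(j)$ and the label suffix is unchanged), so they are non-negative and the final height is $0$. Hence a strictly shorter Dyck path exists, contradicting minimality of $P$. It follows that on $P$ each configuration appears at most once, so at each fixed height $h$ the number of visits is bounded by $n$.

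The main obstacle is step (b), namely establishing $H\le n$. Let $i_1<\cdots<i_H$ be the positions at which heights $1,2,\dots,H$ are first attained; each such $i_k$ is entered by an $\&$-edge, and the bracket-matching structure of $\Label(P)\in\Dyck_1$ pairs $i_k$ with a later last-descent position $j_k$, with the pairs nested as $i_1<\cdots<i_H<j_H<\cdots<j_1$. If $H>n$, pigeonhole forces $P[i_a]=P[i_b]=v^\star$ for some $a<b$. I would then argue that one can construct a strictly shorter Dyck path by splicing: after reaching $(v^\star,a)$ at position $i_a$, continue by following the suffix $P[i_b:]$ but starting from height $a$ rather than $b$, after excising the $b-a$ matched bracket pairs whose $*$-descents correspond to the surplus height in the segment $[i_b,j_b]$. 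The delicate part is performing this excision so that the resulting label remains in $\Dyck_1$ and every spliced transition traces a legitimate edge of $G$; this requires tracking the ascent/descent pairs of the Dyck tree across the boundary at $i_a$ and verifying that the removed brackets are precisely those whose matched partners sit in the unused segment $[i_b,j_b]$.

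Combining the two steps, a shortest Dyck path uses at most $n+1$ distinct stack heights, and for each such height at most $n$ distinct visits, giving $\Distance(x,y)\le n(n+1)=O(n^2)$ whenever $\Distance(x,y)<\infty$.
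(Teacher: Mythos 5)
The paper cites this lemma from Deléage (1986) and does not reprove it, so I am reviewing your argument on its own. Your step~(a) is sound: if the configuration $(P[i],\StackHeight(P[:i]))$ repeats at positions $i<j$, then $P[:i]\circ P[j:]$ is a strictly shorter walk whose stack-height at each index equals that of $P$ at the corresponding shifted index, hence stays non-negative and ends at $0$; so each (node, height) pair is visited at most once, and each height level is visited at most $n$ times. Step~(b), however, is not merely incomplete in its ``delicate'' excision---the claim $H\le n$ is false for shortest Dyck paths. Take nodes $x,a_1,\dots,a_p,b_1,\dots,b_q,y$ with $p=q+1$, edges $x\DEdge{\&}a_1$, the cycle $a_1\DEdge{\&}a_2\DEdge{\&}\cdots\DEdge{\&}a_p\DEdge{\&}a_1$, an edge $a_1\DEdge{\&}b_1$, the cycle $b_1\DEdge{*}b_2\DEdge{*}\cdots\DEdge{*}b_q\DEdge{*}b_1$, and $b_1\DEdge{*}y$. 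Every $x\DPath{}y$ walk does $k$ laps of the $a$-cycle then $m$ laps of the $b$-cycle, has label $\&^{kp+2}*^{mq+1}$, and is Dyck iff $mq-kp=1$; since $\gcd(p,q)=1$, the least feasible $k$ is $q-1$, and the (unique) shortest Dyck path has length $2q^2+2$ and maximum stack height $q^2+1$ while $n=2q+3$, so $H=\Theta(n^2)$. In this graph $P[i_1],P[i_2],\dots$ revisit $a_1$ at heights $1,p+1,2p+1,\dots$, yet no shorter Dyck path exists: after the splice $P[:i_a]\circ P[i_b:]$, the $*$-edges your excision would have to delete all lie on the $b$-cycle, which admits no shortcut, so no repair turns the spliced walk into a Dyck path.

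Because step~(b) fails, your concluding bound $\Distance(x,y)\le n(n+1)$ does not follow, and step~(a) alone only gives $|P|\le n\cdot(H+1)=O(n^3)$ since $H$ can be $\Theta(n^2)$. The lemma is true, but its proof must rest on a different count than ``at most $n$ heights, at most $n$ nodes per height.'' One route worth pursuing instead: show that on a shortest Dyck path the ordered node pair $(P[i],P[j])$ is distinct across all matched bracket pairs $(i,j)$, which caps the number of $\&$/$*$ edges at $n^2$ directly with no bound on $H$ needed.
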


\Paragraph{Routine $\BellReachAlgo$.}
The main component of our algorithm for $\DReachability{1}$ is a routine $\BellReachAlgo$ that computes bell-shape-reachability.
Given an input Dyck graph $G=(V,E)$, $\BellReachAlgo$ computes all pairs of nodes $(x,y)$ such that $y$ is bell-shape-reachable from $x$.
The algorithm constructs a sequence of $O(\log L)$ plain (i.e., not Dyck) digraphs $(G_i=(K, R_i ))_i$, where $L$ is an upper bound on the distance $\Distance(x,y)$ of every pair of nodes $x,y\in V$, given by \cref{lem:d1_distance}.
The node set $K$ is common to all $G_i$ and consists of three copies $x_1, x_2, x_3$ for every node $x\in V$.

Intuitively, the algorithm performs a form of successive doubling on the length of the bell-shaped paths that witness reachability.
In iteration $i$, the algorithm performs all-pairs reachability in $G_i$, and using this reachability information, constructs the edge set $R_{i+1}$.
In high level, $G_i$ consists of three copies of the graph $G$, where bell-shaped paths of maximum stack height at most $2^i-1$ are summarized as $\epsilon$-labeled edges in the first and second copy.
Paths between the nodes in the first and third copy are used to summarize monotonically increasing and (resp., decreasing) paths in $G$ with labels of the form $2^{i} \&$ (resp., $2^{i}*)$.
We refer to \cref{algo:bell_reach} for a detailed description and to \cref{fig:phaseA} for an illustration.

\smallskip
\begin{algorithm}
\small
\DontPrintSemicolon
\SetInd{0.3em}{0.3em}
\caption{$\BellReachAlgo$}\label{algo:bell_reach}
\SetKwFunction{ProcessEps}{$\ProcessEpsilon$}
\SetKwProg{Fn}{Function}{:}{}
\KwIn{
A Dyck graph $G=(V,E)$
}
\KwOut{A set $\{(x,y)\}_{x,y\in V}$ such that $y$ is bell-shape-reachable from $x$.}
\BlankLine
\begin{multicols}{2}
\tcp{Initialization}
Construct a node set $K=\{ x_1, x_2, x_3\colon x\in V \}$\\
Construct an edge set $R_1$, initially $R_1\gets \emptyset$\\
\ForEach{$j\in[2]$}{
Insert $(x_j,y_j)\in R_1$ iff  $(x,y,\epsilon)\in E$\\
Insert $(x_j,y_{j+1})\in R_1$ iff $(x,y,\&)\in E$\\
Insert $(y_{j+1}, x_j)\in R_1$ iff $(x,y,*)\in E$\\
}
Construct the graph $G_1=(K, R_1)$\\
Let $L\gets$ an upper bound on $\Distance(x,y)$ for all $x,y\in V$\\

\pagebreak
\tcp{Computation}
\ForEach{$i\in [\ceil{\log L}]$}{
Compute all-pairs reachability in $G_i$\\
Construct an edge set $R_{i+1}$, initially $R_{i+1}\gets\emptyset$\\
\ForEach{$j\in[2]$}{
Insert $(x_j,y_i)\in R_{i+1}$ iff  $x_1\Path y_1$ in $G_i$\\
Insert $(x_j,y_{j+1})\in R_{i+1}$ iff $x_1\Path y_3$ in $G_i$\\
Insert $(y_{j+1}, x_j)\in R_{i+1}$ iff $y_3\Path x_1$ in $G_i$\\
}
Construct the graph $G_{i+1}=(K, R_{i+1})$\\
}
\Return{$R_{\ceil{\log L} +1}$}
\end{multicols}
\end{algorithm}

\begin{figure}
\newcommand{\xdisposition}{3.6}
\newcommand{\ydisposition}{0}
\newcommand{\xtstep}{0.9}
\newcommand{\ytstep}{0.5}
\newcommand{\xstep}{0.65}
\newcommand{\ystep}{1.5}
\newcommand{\gatex}{0.8}
\newcommand{\gatey}{0.4}

\centering
\begin{tikzpicture}[thick,
pre/.style={<-, thick, shorten >=-0.1cm,shorten <=-0.1cm},
post/.style={->, thick, shorten >=-0.1cm,shorten <=-0.1cm},
seqtrace/.style={->, line width=2},
postpath/.style={->, thick, decorate, decoration={zigzag,amplitude=1pt,segment length=2mm,pre=lineto,pre length=2pt, post=lineto, post length=4pt}},
node/.style={very thick, draw=black, circle, inner sep = 0, minimum size=5mm},
]

\node[] at (3*\xstep, 2.5*\ystep) {$G$};

\newcommand{\namesarray}{a, b, c, d, e}

\foreach \z in {1,2,3}{
\begin{scope}[shift={(\z*\xdisposition,0)}]
\pgfmathtruncatemacro\G{\z}
\node[] at (3*\xstep, 2.5*\ystep) {$G_{\G}$};
\foreach \x in {1,...,5}{
\node[] at (\x*\xstep, 0*\ystep) (\z0\x) {$\names[\x]_{1}$};
\node[] at (\x*\xstep, 1*\ystep) (\z1\x) {$\names[\x]_{2}$};
\node[] at (\x*\xstep, 2*\ystep) (\z2\x) {$\names[\x]_{3}$};
}
\end{scope}
}

\foreach \x in {0,1}{
\pgfmathtruncatemacro\xinc{\x+1}
\draw[post] (1\x1) to (1\xinc2);
\draw[post] (1\x2) to (1\xinc3);
\draw[post] (1\x3) to (1\xinc1);
\draw[pre] (1\x5) to (1\xinc4);
\draw[pre] (1\x4) to (1\xinc5);
\draw[post, bend left=30] (1\x1) to (1\x4);
\draw[post, bend left=30] (2\x1) to (2\x4);
\draw[post, bend left=30] (3\x1) to (3\x4);
\draw[post, bend left=30] (2\x3) to (2\x5);
\draw[post, bend left=30] (3\x3) to (3\x5);

\draw[post] (2\x1) to (2\xinc3);
\draw[post] (2\x2) to (2\xinc1);
\draw[post] (2\x3) to (2\xinc2);
\draw[pre] (2\x5) to (2\xinc5);
\draw[pre] (2\x4) to (2\xinc4);

\draw[post] (3\x1) to (3\xinc2);
\draw[post] (3\x2) to (3\xinc3);
\draw[post] (3\x3) to (3\xinc1);
\draw[pre] (3\x5) to (3\xinc5);
\draw[pre] (3\x4) to (3\xinc4);

\draw[post, bend left=30,] (3\x1) to (3\x5);
\draw[post, bend left=30] (3\x2) to (3\x4);

} 

\draw[post, bend right=30, opacity=0.5, dashed] (103) to (105);

\draw[post, bend right=30, opacity=0.5, dashed] (201) to (205);
\draw[post, bend right=30, opacity=0.5, dashed] (202) to (204);

\draw[post, bend right=30, opacity=0.5, dashed] (302) to (305);
\draw[post, bend right=0, opacity=0.5, dashed] (303) to (304);

\begin{scope}[shift={(2*\xstep,1.35*\ystep)}]
\renewcommand{\xstep}{1}
\renewcommand{\ystep}{1}

\node[node] (a) at (0*\xstep, 0*\ystep){$a$};
\node[node] (b) at (+1.44/2*\xstep, -1*\ystep){$b$};
\node[node] (c) at (-1.44/2*\xstep, -1*\ystep){$c$};
\node[node] (d) at (1.5*\xstep, 0*\ystep){$d$};
\node[node] (e) at (1.5*\xstep, -1*\ystep){$e$};

\draw[post, shorten >=-0cm,shorten <=-0cm] (a) to node[above, midway, sloped]{$\&$} (b);
\draw[post, shorten >=-0cm,shorten <=-0cm] (b) to node[below, midway, sloped]{$\&$} (c);
\draw[post, shorten >=-0cm,shorten <=-0cm] (c) to node[above, midway, sloped]{$\&$} (a);

\draw[post, shorten >=-0cm,shorten <=-0cm] (a) to node[above, midway, sloped]{} (d);

\draw[post, bend left=20, shorten >=-0cm,shorten <=-0cm] (d) to node[above, midway, sloped]{$*$} (e);
\draw[post, bend left=20, shorten >=-0cm,shorten <=-0cm] (e) to node[above, midway, sloped]{$*$} (d);

\end{scope}

\end{tikzpicture}
\caption{
Illustration of $\BellReachAlgo$ on the Dyck graph $G$ (left).
Bell-shape-reachability in $G$ as witnessed by paths $P\colon x\DPath{}y$ with $\MaxStackHeight(P)\leq 2^{i}-1$ is captured in graph $G_i$ (right) by the path $x_1\Path y_1$.
Dashed edges in $G_i$ represent the summarization of the path, which is carried over to $G_{i+1}$ as a single edge.
}
\label{fig:phaseA}
\end{figure}

\SubParagraph{Correctness of $\BellReachAlgo$.}
It is straightforward that for each iteration $i$, if $x_1\Path y_1$ in $G_i$, then $y$ is D-reachable from $x$ in $G$.
The following lemma captures the inverse direction restricted to bell-shaped paths,
i.e., if $x\DPath{\epsilon}y$ via a bell-shaped path $P$ in $G$ with $\MaxStackHeight(P)\leq 2^i-1$, then $x_1\Path y_1$  in $G_i$.
The key invariants are stated in the following lemma.

\smallskip
\begin{restatable}{lemma}{lembellshapecorrectness}\label{lem:bell_shape_induction}
Consider an execution of the routine $\BellReachAlgo$.
For each $i\in [\ceil{\log L}]$, the following assertions hold.
\begin{compactenum}
\item\label{item:ind_epsilon}  If $y$ is D-reachable from $x$ via a bell-shaped path $P$ in $G$ with $\MaxStackHeight(P)\leq 2^i-1$, then $x_1\Path y_1$  in $G_i$.
\item\label{item:ind_increase} If $x\DPath{2^{i}\&} y$ via a monotonically increasing path in $G$ where the last edge is $\&$-labeled, then $x_1\Path y_3$ in $G_i$.
\item\label{item:ind_decrease} If $y\DPath{2^{i}*} x$ via a monotonically decreasing path in $G$ where the first edge is $*$-labeled, then $y_3\Path x_1$ in $G_i$.
\end{compactenum}
\end{restatable}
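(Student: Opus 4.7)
The plan is to prove the lemma by induction on $i$. The base case $i = 1$ is verified directly by inspecting the structure of $G_1$: a bell-shaped path $P$ in $G$ with $\MaxStackHeight(P) \leq 1$ consists of at most one $\&$-labeled edge followed by at most one $*$-labeled edge, interspersed with $\epsilon$-edges, and can be traced in $G_1$ by traveling through copy~$1$ on $\epsilon$-edges, jumping to copy~$2$ via the $\&$-edge, continuing on $\epsilon$-edges in copy~$2$, and returning to copy~$1$ via the $*$-edge. For Part~2 at $i = 1$, a monotonically increasing path of two $\&$-labels takes us from copy~$1$ through copy~$2$ to copy~$3$; Part~3 is symmetric.

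For the inductive step, assume the three assertions for $i$ and prove them for $i+1$. The main case is Part~1: fix a bell-shaped path $P\colon x\DPath{} y$ with $\MaxStackHeight(P) \leq 2^{i+1}-1$. If $\MaxStackHeight(P) \leq 2^i - 1$, the inductive hypothesis directly gives $x_1 \Path y_1$ in $G_i$, and the construction of $R_{i+1}$ inserts the edge $(x_1, y_1)$. Otherwise, since $P$ is bell-shaped and each non-$\epsilon$ edge changes the stack height by exactly one, there exist positions $t_1$ and $t_2$ where $t_1$ is the first position with $\StackHeight(P[:t_1]) = 2^i$ (necessarily reached via a $\&$-edge) and $t_2$ is the last such position (with the next edge being $*$). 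This decomposes $P$ into three pieces: a monotonically increasing prefix $P[:t_1]$ with exactly $2^i$ many $\&$-labels ending with $\&$; a middle bell-shaped segment $P[t_1:t_2]$ whose stack height relative to its start is at most $2^i - 1$; and a monotonically decreasing suffix $P[t_2:]$ with exactly $2^i$ many $*$-labels starting with $*$. Setting $z = P[t_1]$ and $w = P[t_2]$ and applying Parts~2, 1, 3 of the inductive hypothesis respectively yields $x_1 \Path z_3$, $z_1 \Path w_1$, and $w_3 \Path y_1$ in $G_i$. The construction of $R_{i+1}$ then inserts the edges $(x_1, z_2)$, $(z_2, w_2)$, and $(w_2, y_1)$, which chain through copy~$2$ to certify $x_1 \Path y_1$ in $G_{i+1}$.

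For Parts~2 and~3 in the inductive step, the decomposition is simpler: split the monotonically increasing (resp.\ decreasing) path into two halves at the position right after the $2^i$-th $\&$-edge (resp.\ right before the $(2^i+1)$-th $*$-edge). Both halves inherit monotonicity, each contains exactly $2^i$ many $\&$-labels (resp.\ $*$-labels), and the endpoint conditions on the first/last edge are preserved. Applying Part~2 (resp.\ Part~3) of the inductive hypothesis to each half and chaining the two summarized edges through copy~$2$ yields the required reachability in $G_{i+1}$. The principal obstacle in the argument is justifying that the decomposition in Part~1 is always valid—specifically, that $t_1$ and $t_2$ exist and that the edges at these boundaries have the correct labels; this follows because in a bell-shaped path, the monotonically increasing prefix and monotonically decreasing suffix visit every intermediate stack height between $0$ and the peak.
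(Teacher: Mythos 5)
Your proof is correct and takes essentially the same approach as the paper's: the same three-way decomposition of the bell-shaped path at the first and last indices where the stack height reaches $2^i$, the same application of the three inductive hypotheses to the prefix, middle, and suffix, and the same chaining of the resulting $R_{i+1}$-edges through copy~$2$; Parts~2 and~3 are handled by the same midpoint split.
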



\Paragraph{Algorithm $\DOneAlgo$.}
We are now ready to describe our algorithm $\DOneAlgo$ for $\DReachability{1}$.
The algorithm performs $\ceil{\log L}$ iterations of $\BellReachAlgo$, where $L$ is an upper bound on the distance between any two reachable nodes in $G$.
See \cref{algo:done} for a detailed description.

\smallskip
\begin{algorithm}
\small
\DontPrintSemicolon
\SetInd{0.3em}{0.3em}
\caption{$\DOneAlgo$}\label{algo:done}
\SetKwProg{Fn}{Function}{:}{}
\KwIn{
A Dyck graph $G=(V,E)$
}
\KwOut{A set $\{(x,y)\}_{x,y\in V}$ such that $y$ is D-reachable from $x$.}
\BlankLine
\begin{multicols}{2}
\tcp{Initialization}
Let $G_1=(V, E_1)$ be a Dyck graph with $E_1=E$\\

\pagebreak
\tcp{Computation}
\ForEach{$i\in [\ceil{\log L} + 1]$}{
Let $X=\BellReachAlgo$ on input $G_i$\\
Let $E_{i+1}=E_{i} \cup \{ (x,y,\epsilon)\colon (x,y)\in X \}$\\
Construct the graph $G_{i+1}=(V, E_{i+1})$
}
\Return{$E_{\ceil{\log L} +1}$}
\end{multicols}
\end{algorithm}

\Paragraph{Correctness of $\DOneAlgo$.}
We now establish the correctness of $\DOneAlgo$.
We start with an intuitive description of the correctness, and afterwards we make the argument formal (see \cref{fig:path_summarization} for an illustration).

\begin{figure}
\def \scale{1}
\newcommand{\xstep}{1}
\newcommand{\ystep}{1.2}
\newcommand{\ybias}{-0.3}
\centering
\begin{tikzpicture}
\begin{axis}[scale=\scale,
axis line style = ultra thick,
compat=newest,
unit vector ratio*=1 1.2 1,
height=4.5cm,
width=\textwidth,
axis lines=left,
ymax=4.5,
ymin=0,
xmax=28,
xmin=0,
grid style={line width=0.5pt, draw=gray!50},
axis line style={-latex},
extra x ticks={3, 7, 10, 11, 15, 18, 22, 25},
extra x tick labels={$j_1$,{$l_1$, $j_2$},$l_2$, $j_3$, $l_3$, $j_4$, {$l_4$, $j_5$}, $l_5$},
xtick=\empty,
ytick=\empty,
yticklabels={},
xticklabels={},
xlabel={ Path Index},
ylabel={ Stack Height}
]

\addplot[color=black,mark=.,line width=1.5] coordinates {
(0,0)
(1,1)
(2,2)
(3,3)
(4,4)
(5,4)
(6,3)
(7,3)
(8,4)
(9,4)
(10,3)
(11,2)
(12,2)
(13,3)
(14,3)
(15,2)
(16,1)
(17,1)
(18,2)
(19,3)
(20,3)
(21,2)
(22,2)
(23,3)
(24,3)
(25,2)
(26,1)
(27,0)
};
\addplot[shift={(0.0,\ybias)}, color=gray,mark=.,line width=1.5] coordinates {
(0,0)
(1,1)
(2,2)
(3,3)
};
\addplot[shift={(0.0,\ybias)}, color=gray,mark=.,line width=1.5] coordinates {
(10,3)
(11,2)
};
\addplot[shift={(0.0,\ybias)}, color=gray,mark=.,line width=1.5] coordinates {
(15,2)
(16,1)
(17,1)
(18,2)
};
\addplot[shift={(0.0,\ybias)}, color=gray,mark=.,line width=1.5] coordinates {
(25,2)
(26,1)
(27,0)
};
\addplot[-latex,shift={(0.0,\ybias)}, , color=gray,dashed,line width=1.5] coordinates {
(3,3)
(7,3)
};
\addplot[-latex,shift={(0.0,\ybias)}, , color=gray,dashed,line width=1.5] coordinates {
(7,3)
(10,3)
};
\addplot[-latex,shift={(0.0,\ybias)}, , color=gray,dashed,line width=1.5] coordinates {
(11,2)
(15,2)
};
\addplot[-latex,shift={(0.0,\ybias)}, , color=gray,dashed,line width=1.5] coordinates {
(18,2)
(22,2)
};
\addplot[-latex,shift={(0.0,\ybias)}, , color=gray,dashed,line width=1.5] coordinates {
(22,2)
(25,2)
};
\end{axis}
\end{tikzpicture}
\caption{
Illustration of a path $P$ in graph $G_i$ (black) and its summarization path $P'$ in graph $G_{i+1}$ (gray).
The number of local maxima in $P'$ is at most half of that in $P$.
}\label{fig:path_summarization}
\end{figure}

\SubParagraph{Intuitive argument of correctness.}
Consider an iteration $i$, and let $x,y\in V$ such that $y$ is D-reachable from $x$ via a path $P\colon x\Path y$.
At the end of the iteration, due to the execution of routine $\BellReachAlgo$, all bell-shaped paths $u\DPath{}v$ in $G_i$ are summarized as $\epsilon$-labeled edges $u\DEdge{\epsilon} v$ in $G_{i+1}$.
Hence, $P$ is summarized by a path $P'$ in $G_{i+1}$, where the bell-shaped sub-paths of $P$ are replaced by $\epsilon$-edges in $P'$.
How many times do we need to perform this iteration until the whole of $P$ is summarized by a single $\epsilon$-labeled edge?
The key insight is that the number of local maxima in $P'$ is \emph{at most half} of that in $P$.
Hence, it suffices to compute bell-shape-reachability a number of times that is logarithmic in the maxima-distance $\MaximaDistance(x,y)$.
Since $\MaximaDistance(x,y)\leq \Distance(x,y)$ and $\Distance(x,y)\leq L=O(n^2)$ (by \cref{lem:d1_distance}), $\ceil{\log L}=O(\log n)$ iterations suffice.

\SubParagraph{Formal correctness.}
We now proceed to make the above argument formal.
Given some iteration $i$ of $\DOneAlgo$, consider the graphs $G_i$ and $G_{i+1}$.
Let $P\colon x\DPath{}y$ be any path that witnesses D-reachability of $y$ from $x$ in $G_i$.
Let $(j_{\ell}, l_{\ell})_{\ell}$ be the index pairs that mark bell-shaped sub-paths in $P$.
We require that each $(j_{\ell}, l_{\ell})_{\ell}$ is maximal, in the following way.
\begin{compactenum}
\item None of $P[j_{1}-1, l_{1}]$, $P[j_{1}, l_{1}+1]$ and $P[j_{1}-1, l_{1}+1]$ is bell-shaped.
\item If $\ell>1$,  then $P[j_{\ell}, l_{\ell}+1]$ is not bell-shaped, and if $P[j_{\ell}-1, l_{\ell}]$ or $P[j_{\ell}-1, l_{\ell}+1]$ is bell-shaped, then $j_{\ell}-1\leq l_{\ell-1}$.
\end{compactenum}
Intuitively, the first bell-shaped sub-path of $P$ is as long as possible, and every following bell-shaped sub-path is as long as possible provided that it does not overlap with the previous bell-shaped sub-path.
We decompose $P$ as 
\begin{align*}
P=P_{1}^{\downarrow}\circ P_{1}^{\uparrow} \circ P[j_{1}:l_1] \circ P_{2}^{\downarrow}\circ P_{2}^{\uparrow} \circ P[j_{2}:l_2] \circ \cdots \circ P[j_{k}, l_{k}] \circ P_{k+1}^{\downarrow}\ ,
\end{align*}
where each $P_{\ell}^{\downarrow}$ (resp., $P_{\ell}^{\uparrow}$) is a monotonically decreasing (resp., monotonically increasing) path.
Note that $P_{1}^{\downarrow}=\epsilon$.
Observe that $P$ has $k$ local maxima, one in each bell-shaped sub-path $P[j_{\ell}:l_{\ell}]$.
In $G_{i+1}$, the path $P$ is summarized by a path $P'$ identical to $P$, but with all the bell-shaped sub-paths $P[j_{\ell}:l_{\ell}]$ replaced by edges $x_{j_{\ell}} \xrightarrow{\epsilon}y_{j_{\ell}}$ (see \cref{fig:path_summarization} for an illustration).
Given some index $1\leq h \leq |P|$ with $h \not \in [j_{\ell}+1, l_{\ell}-1]$ for each $\ell \in [k]$, we denote by $f(h)$ the corresponding index in $P'$.

\smallskip
\begin{remark}\label{rem:same_stack_height}
For every index $h$ of $P'$, we have $\StackHeight(P'[:h])=\StackHeight(P[:f^{-1}(h)])$.
\end{remark}

We first have two technical lemmas.
The first lemma states that all local maxima in $P'$ appear on the first node of bell-shaped sub-paths of $P$.

\smallskip
\begin{restatable}{lemma}{lempmaxima}\label{lem:p_maxima}
Assume that $P'$ has a local maxima at some $h$.
Then $f^{-1}(h)=j_{\ell}$ for some $\ell\in [k]$.
\end{restatable}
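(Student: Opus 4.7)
The plan is to prove by contradiction. Assume $P'$ has a local maxima at $h$ with $f^{-1}(h)\neq j_\ell$ for every $\ell\in[k]$, and derive a contradiction by a case analysis on where $f^{-1}(h)$ sits in the decomposition
\[
P = P_1^{\downarrow}\circ P_1^{\uparrow} \circ P[j_1:l_1] \circ P_2^{\downarrow}\circ P_2^{\uparrow}\circ \cdots \circ P[j_k:l_k]\circ P_{k+1}^{\downarrow}.
\]
By \cref{rem:same_stack_height} we may reason about $P'$ via the stack heights of $P$, and both local-maxima conditions reduce to conditions on the labels of the edges of $P'$ incident to $h$. The key structural observation is that, since each $P[j_\ell:l_\ell]$ is a bell-shaped Dyck sub-path, its endpoints have the same stack height and this sub-path is replaced in $P'$ by a single $\epsilon$-labeled edge $P[j_\ell]\to P[l_\ell]$ of $G_{i+1}$; every other edge of $P'$ is inherited directly from $P$.

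With this setup the case analysis is short. If $f^{-1}(h) = l_\ell$, the edge of $P'$ entering $h$ is the summarization $\epsilon$-edge, so $\StackHeight(P'[:h]) = \StackHeight(P'[:h-1])$, violating the strict-increase condition of the local-maxima definition (and $h>1$ since $j_\ell<l_\ell$). If $f^{-1}(h)$ is strictly before the last vertex of some $P_\ell^{\uparrow}$, the edge of $P'$ leaving $h$ is $\&$-labeled, so $\StackHeight(P'[:h+1]) > \StackHeight(P'[:h])$ with no intervening index, violating the second condition of the local-maxima definition at $j = h+1$. If $f^{-1}(h)$ lies in $P_\ell^{\downarrow}$ (excluding its shared first vertex with the preceding bell-shaped sub-path, which is absorbed into the first case), the edge of $P'$ entering $h$ is $*$-labeled, so $\StackHeight(P'[:h]) < \StackHeight(P'[:h-1])$, again violating the strict-increase condition. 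Since by assumption $f^{-1}(h)\neq j_\ell$, these cases exhaust every kept index of $P$ and yield the required contradiction.

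The main obstacle will be handling the shared endpoints between adjacent segments and the boundaries of $P$ carefully: the junction of $P_\ell^{\downarrow}$ and $P_\ell^{\uparrow}$ is absorbed into the third case; the junction of $P_\ell^{\uparrow}$ with $P[j_\ell:l_\ell]$ is exactly the index $j_\ell$, matching the conclusion; and the first and last indices of $P$ need a separate check. In particular, the index $h=1$ vacuously satisfies the first local-maxima condition, but since $P$ is a Dyck path its prefix stack heights are non-negative while $\StackHeight(P'[:1])=0$, so the second condition fails as soon as $P'$ has any index of strictly positive stack height; the degenerate situation where no such index exists forces $k=0$ and leaves nothing to prove. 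Once these boundary checks are in place, the case analysis is exhaustive and the lemma follows.
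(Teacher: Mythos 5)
Your case analysis follows the paper's structure, and most of it is sound. The cases where the edge of $P'$ entering $h$ is the summarization $\epsilon$-edge (i.e.\ $f^{-1}(h)=l_\ell$) or an edge inherited from some $P_\ell^{\downarrow}$ are handled correctly; in the latter case the entering edge could equally be $\epsilon$-labeled rather than $*$-labeled (monotone decreasing is defined with $\geq$, not $>$), but that still fails the strict-increase clause of condition~1 of the local-maxima definition, so the conclusion holds. The $h=1$ boundary check is also the right observation.

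The gap is the $P_\ell^{\uparrow}$ case. You assert that when $f^{-1}(h)$ is strictly before the last vertex of $P_\ell^{\uparrow}$, ``the edge of $P'$ leaving $h$ is $\&$-labeled.'' This is false in general: monotonically increasing is defined with $\leq$, so $P_\ell^{\uparrow}$ may contain $\epsilon$-labeled edges, and if the outgoing edge at $h$ is $\epsilon$-labeled then $\StackHeight(P'[:h+1])=\StackHeight(P'[:h])$ and your appeal to condition~2 at $j=h+1$ says nothing. Closing the gap requires looking past $h+1$: by monotonicity, $\StackHeight(P'[:m])\ge\StackHeight(P'[:h])$ for all $h\le m\le f(j_\ell)$, so if the height ever strictly increases in that range condition~2 fails there; if instead the segment from $f^{-1}(h)$ to $j_\ell$ is flat, then the edge $j_\ell-1\to j_\ell$ is $\epsilon$-labeled, so $P[j_\ell-1:l_\ell]$ is still bell-shaped, contradicting the maximality clauses of the decomposition since $j_\ell-1\ge f^{-1}(h)>l_{\ell-1}$ (resp.\ $j_\ell-1\ge 1$ when $\ell=1$). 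This invocation of the maximality conditions is the missing ingredient; as written, the claim that the outgoing edge must be $\&$-labeled does not hold, so the contradiction does not follow. (The paper's own proof also dispatches this case with a one-line appeal to monotonicity, but it does not make the concrete false claim about the edge label.)
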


The following lemma formalizes the following observation:~if the beginning of a bell-shaped sub-path of $P$ marks a local maxima for $P'$, then the beginning of the next bell-shaped sub-path of $P$ cannot mark a local maxima for $P'$.
This is shown by arguing that the two bell-shaped sub-paths of $P$ are next to each other, i.e., there are no monotonically decreasing and increasing paths separating them.

\smallskip
\begin{restatable}{lemma}{lemtwomaxima}\label{lem:two_maxima}
Assume that $P'$ has a local maxima at some $h$.
Then $P_{j_{\ell}+1}^{\downarrow}= P_{j_{\ell}+1}^{\uparrow}=\epsilon$, where $j_\ell=f^{-1}(h)$.
\end{restatable}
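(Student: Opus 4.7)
The plan is to establish the two equalities separately: $P_{\ell+1}^{\downarrow}=\epsilon$ will follow purely from the maximality of $B_\ell=P[j_\ell:l_\ell]$ and is independent of the local-maxima hypothesis, whereas $P_{\ell+1}^{\uparrow}=\epsilon$ is exactly where the assumption that $h$ is a local maxima of $P'$ is used. I read the subscript in the statement as $\ell+1$, consistent with the indexing of the monotonic segments in the decomposition of $P$.

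For the first equality I argue by contradiction: suppose $P_{\ell+1}^{\downarrow}\neq\epsilon$. Being a monotonically decreasing path, its first edge is labeled $*$ or $\epsilon$. Appending this edge to $B_\ell$ yields $P[j_\ell:l_\ell+1]$, which I claim is still bell-shaped. The new tail position has stack height no greater than that of its predecessor, so it cannot itself be a local maxima (condition~(1) of the local-maxima definition fails), and the pre-existing local maxima inside $B_\ell$ is unaffected, since its condition~(2) is preserved by a new position that does not exceed its stack height. Hence $P[j_\ell:l_\ell+1]$ is bell-shaped, contradicting the maximality clause that explicitly forbids this.

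Knowing $P_{\ell+1}^{\downarrow}=\epsilon$, I turn to the second equality. Assume for contradiction $P_{\ell+1}^{\uparrow}\neq\epsilon$; its first edge is $\&$ or $\epsilon$, and the $\epsilon$ case is ruled out by the same appending-preserves-bell-shape argument, so the first edge is $\&$-labeled. Passing to $P'$ and using \cref{rem:same_stack_height}, $\StackHeight(P'[:h])=\StackHeight(P[:j_\ell])$ and $\StackHeight(P'[:h+1])=\StackHeight(P[:l_\ell])$; since the summary edge of $B_\ell$ is $\epsilon$-labeled, these two quantities coincide. The edge of $P'$ from $h+1$ to $h+2$ is the first, $\&$-labeled, edge of $P_{\ell+1}^{\uparrow}$, whence $\StackHeight(P'[:h+2])=\StackHeight(P'[:h])+1$. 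The only index strictly between $h$ and $h+2$ is $h+1$, whose stack height equals, but does not fall strictly below, $\StackHeight(P'[:h])$, directly contradicting condition~(2) of $h$ being a local maxima of $P'$.

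The main obstacle is the ``appending preserves bell-shape'' stability step used in both parts: one must verify carefully that tacking on an $\epsilon$- or $*$-edge at the tail of $B_\ell$ neither creates a spurious new local maxima at the tail nor invalidates the conditions at the pre-existing local maxima inside $B_\ell$. Once this stability is in hand, both parts reduce to a one-line contradiction from the maximality clause and the definition of a local maxima, respectively.
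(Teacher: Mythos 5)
Your proof is correct, and it rests on the same two ingredients as the paper's argument (the maximality of the bell-shaped decomposition, and the definition of a local maxima, each invoked by contradiction), but your handling of the first equality $P_{\ell+1}^{\downarrow}=\epsilon$ is noticeably more direct and, in fact, cleaner than what the paper does. You derive it from the single observation that appending a $*$- or $\epsilon$-labeled edge to the tail of a bell-shaped path leaves it bell-shaped, so a nonempty $P_{\ell+1}^{\downarrow}$ immediately contradicts the maximality clause ``$P[j_\ell, l_\ell+1]$ is not bell-shaped''; this makes no use of the hypothesis that $P'$ has a local maxima at $h$ and in fact shows $P_{\ell+1}^\downarrow=\epsilon$ for \emph{every} $\ell$. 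The paper instead reaches the same conclusion by a more roundabout case split on whether $P_\ell^\uparrow$ is empty, invoking the local-maxima-at-$h$ hypothesis to rule out the empty case; that detour is unnecessary given your observation. For the second equality the two proofs coincide in substance: given $P_{\ell+1}^\downarrow=\epsilon$, a $\&$-labeled first edge of $P_{\ell+1}^\uparrow$ yields, via \cref{rem:same_stack_height}, a position $h+2$ in $P'$ strictly higher than $h$ with $h+1$ as the only intermediate position and at equal height, violating condition~(2); you additionally dispatch the $\epsilon$-labeled first-edge subcase explicitly by the same tail-appending argument, which the paper leaves implicit. One place to tighten your stability step: besides checking that the new tail is not a local maxima and that the original peak survives, you should note that appending a node at the tail can only add new witnesses $j$ to condition~(2) at earlier indices and thus can never \emph{create} a local maxima elsewhere, so the count of local maxima truly stays at one.
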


With \cref{lem:p_maxima} and \cref{lem:two_maxima}, we can now formalize the insight that the maxima-distance between any two nodes halves in each iteration of $\DOneAlgo$.
Given some iteration $i$ of the algorithm, we denote by $\MaximaDistance_{i}(x,y)$ the maxima-distance from $x$ to $y$ in the graph $G_i$.
We have the following lemma.

\smallskip
\begin{restatable}{lemma}{lemmaxdistancehalved}\label{lem:max_distance_halved}
For each $i\in [\ceil{\log L}]$, for any two nodes $x,y\in V$ such that $y$ is reachable from $x$ in $G$,
we have that $\MaximaDistance_{i+1}(x,y)\leq \MaximaDistance_{i}(x,y)/2$.
\end{restatable}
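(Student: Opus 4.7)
The plan is to realize the halving by explicitly summarizing a witness path in $G_i$ into a witness path in $G_{i+1}$, and then invoking \cref{lem:p_maxima,lem:two_maxima} to control the number of local maxima that survive the summarization.

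First, I would fix a path $P\colon x \DPath{} y$ in $G_i$ attaining $\MaximaDistance_i(x,y)$ local maxima and adopt the decomposition from the main text, namely
\[
P = P_1^{\downarrow}\circ P_1^{\uparrow} \circ P[j_1{:}l_1]\circ P_2^{\downarrow}\circ P_2^{\uparrow} \circ P[j_2{:}l_2]\circ \cdots \circ P[j_k{:}l_k] \circ P_{k+1}^{\downarrow},
\]
where the $P[j_\ell{:}l_\ell]$ are the greedy maximal bell-shaped sub-paths. Each such sub-path contributes exactly one local maxima, and by the maximality of the decomposition all local maxima of $P$ lie inside these sub-paths, so $k = \MaximaDistance_i(x,y)$. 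Because $\DOneAlgo$ has already executed $\BellReachAlgo$ on $G_i$, every bell-shaped sub-path $P[j_\ell{:}l_\ell]$ is summarized as a single $\epsilon$-labeled edge in $G_{i+1}$, yielding a path $P'\colon x \DPath{} y$ in $G_{i+1}$ with a well-defined index map $f$.

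Next, I would bound the number of local maxima of $P'$. By \cref{lem:p_maxima}, any local maxima of $P'$ lies at an index of the form $f(j_\ell)$, so it suffices to show that two consecutive indices $\ell, \ell{+}1 \in [k]$ cannot both contribute local maxima to $P'$. Assume $f(j_\ell)$ is a local maxima of $P'$. By \cref{lem:two_maxima}, the monotone segments $P_{\ell+1}^{\downarrow}$ and $P_{\ell+1}^{\uparrow}$ between the $\ell$-th and $(\ell{+}1)$-th bell-shaped sub-paths are empty, so the two $\epsilon$-edges summarizing $P[j_\ell{:}l_\ell]$ and $P[j_{\ell+1}{:}l_{\ell+1}]$ occupy consecutive positions in $P'$, i.e.\ $f(j_{\ell+1}) = f(j_\ell)+1$. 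But two consecutive indices of any path cannot both be local maxima: condition (1) at $f(j_{\ell+1})$ would require $\StackHeight(P'[{:}f(j_\ell)]) < \StackHeight(P'[{:}f(j_{\ell+1})])$, and there is no intermediate index at which the stack height can dip below $\StackHeight(P'[{:}f(j_\ell)])$, directly violating condition (2) at $f(j_\ell)$. Hence the set of local-maxima indices of $P'$ is an independent set in the path graph on $[k]$, and so has size at most $\lceil k/2 \rceil \leq \MaximaDistance_i(x,y)/2$.

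Combining the two steps, $P'$ is a path in $G_{i+1}$ from $x$ to $y$ with at most $\MaximaDistance_i(x,y)/2$ local maxima, which yields $\MaximaDistance_{i+1}(x,y)\leq \MaximaDistance_i(x,y)/2$. The main obstacle is the gluing of \cref{lem:p_maxima,lem:two_maxima}: one must verify carefully that a local maxima at $f(j_\ell)$ forces the adjacency $f(j_{\ell+1})=f(j_\ell)+1$ in $P'$, and that the definition of local maxima genuinely forbids two adjacent peaks. Once this is in hand, the halving follows from an elementary independent-set count and the observation (\cref{rem:same_stack_height}) that stack heights are preserved under the summarization.
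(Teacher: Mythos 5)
Your proposal is correct and follows essentially the same route as the paper's own proof: fix a witness path $P$, decompose it into greedy-maximal bell-shaped sub-paths, summarize to $P'$, apply \cref{lem:p_maxima} to locate all local maxima of $P'$ at indices $f(j_\ell)$, apply \cref{lem:two_maxima} to show $P_{\ell+1}^{\downarrow}=P_{\ell+1}^{\uparrow}=\epsilon$ when $f(j_\ell)$ is a local maxima, and conclude that $f(j_{\ell+1})$ cannot also be one. The only cosmetic divergence is at the final counting step: the paper injectively charges each local maxima of $P'$ to the \emph{pair} of local maxima of $P$ in $P[j_\ell:l_\ell]$ and $P[j_{\ell+1}:l_{\ell+1}]$, directly yielding $2\,\MaximaDistance_{i+1}\leq \MaximaDistance_i$, while you phrase it as an independent-set bound on a path graph, which gives $\lceil k/2\rceil$; your last inequality $\lceil k/2\rceil\leq k/2$ is false when $k$ is odd, so you should instead use the paper's two-to-one charging (or note, as the paper implicitly does, that it gives the strictly stronger floor bound). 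Apart from that slip at the very end, the argument and its reliance on the two supporting lemmas is identical to the paper's.
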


Finally, we prove \cref{them:d1}, i.e., that all-pairs $\DReachability{1}$ is solvable in $\Otilde(n^{\omega})$ time.

\begin{proof}[Proof of \cref{them:d1}]
We first argue about the correctness of $\DOneAlgo$.
It follows immediately from the correctness of the routine $\BellReachAlgo$ that if $\DOneAlgo$ returns that $y$ is D-reachable from $x$ then there is a path $y$ is D-reachable from $x$ in $G$.
Here we focus on the inverse direction, i.e., assume that there is a path $P\colon x\DPath{}y$ in $G$ with $\Label(P)\in \Dyck_1$,  and we argue that $\DOneAlgo$ returns that $y$ is D-reachable from $x$. 
Recall that $ \MaximaDistance_i(x,y)$ is the maxima-distance from $x$ to $y$ in the graph $G_i$ constructed by the algorithm in the $i$-th iteration.
We have 
\begin{align*}
 \MaximaDistance_1(x,y)=\MaximaDistance(x,y)\leq \Distance(x,y) \leq L\ ,
\end{align*}
where the last inequality follows from our choice of $L$ as an upper-bound on $\Distance(x,y)$.
By \cref{lem:max_distance_halved}, we have $\MaximaDistance_{i+1}(x,y)\leq \MaximaDistance_{i}(x,y)/2$ for each $i\in [\ceil{\log L}]$,
hence after $i=\ceil{\log L}$ iterations, we have $\MaximaDistance_{i}(x,y)=1$.
Thus, in the last iteration of the algorithm, $y$ is bell-shape-reachable from $x$, and by the correctness of the routine $\BellReachAlgo$  (\cref{lem:bell_shape_induction}),
$\BellReachAlgo$ will return that $y$ is D-reachable from $x$.
Thus $\DOneAlgo$ will return that $y$ is D-reachable from $x$ in $G$, as desired.

We now turn our attention to the complexity of $\BellReachAlgo$.
The algorithm performs $O(\log L)$ invocations to the routine $\BellReachAlgo$.
In each invocation, $\BellReachAlgo$ performs $O(\log L)=O(\log n)$ transitive closure operations on graphs with $O(n)$ nodes.
Using fast BMM~\cite{Munro71}, each transitive closure takes $O(n^{\omega})$ time. 
The total running time of $\DOneAlgo$ is $O(n^{\omega}\cdot \log^2L)=\Otilde(n^{\omega})$, as by \cref{lem:d1_distance}, we have $L=O(n^2)$.

The desired result follows.
\end{proof}

\Paragraph{A comparison note with~\cite{Bradford2018}.}
A sub-cubic bound for $\DReachability{1}$ was recently established independently in~\cite{Bradford2018}.
The crux of that algorithm is an elegant algebraic matrix encoding of \emph{flat} $\DReachability{1}$ to $O(\log n)$ AGMY matrix multiplications, each performed in $O(n^{\omega}\cdot \log n)$ time~\cite{Alon1997}.
Intuitively, flat $\DReachability{1}$ concerns reachability witnessed by sequentially composing bell-shaped paths, and the above reduction gives a $O(n^{\omega}\cdot \log^2 n)$ bound for the problem.
A second step solves flat $\DReachability{1}$ for $O(\log n)$ iterations, with some special treatment needed in each iteration for ensuring correct AGMY representation.
Composing the two steps yields a $O(n^{\omega}\cdot \log^3 n)$ bound for $\DReachability{1}$~\cite[Theorem~2]{Bradford2018}.
In comparison, the algorithm presented here is a $\log n$-factor faster, and relies on a purely combinatorial reduction to $O(\log n^2)$ BMMs, which are then performed in $O(n^{\omega})$ time using algebraic techniques.

\subsection{Bounded Andersen's Pointer Analysis in Sub-cubic Time}\label{subsec:bounded_subcubic}

In the previous section we saw that $\DReachability{1}$ can be solved in nearly BMM time, i.e., $\Otilde(n^{\omega})$.
In this section we show how we can use this result to speed-up bounded $\APAPA$, towards \cref{them:bounded_apa_subcubic}.
Recall that, given some $i\in [4]$ and $j\in \Nats$, the $(i,j)$-bounded $\APA$ asks to compute all memory locations $b$ that a pointer $a$ may point to,
as witnessed by straight-line programs (under the operational semantics of \cref{tab:apa_rules}) that use statements of type~$i$ at most $j$ times.
We start with a simple lemma that allows us to consider instances of $\APA$ which contain only linearly many statements of type~4.

\smallskip
\begin{restatable}{lemma}{lemsfourbound}\label{lem:sfour_bound}
Wlog, we have $|S_4|\leq  n$.
\end{restatable}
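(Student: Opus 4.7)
The plan is to exhibit a semantics-preserving transformation that consolidates all type-4 statements sharing the same left-hand-side pointer into a single type-4 statement, at the cost of introducing one fresh pointer per such group. Since the left-hand side of a type-4 statement $*a=b$ is a pointer, and there are only $n$ pointers, this immediately caps $|S_4|$ at $n$ after the transformation, while only doubling the number of pointers (so asymptotic bounds on $n$ are unchanged).

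Concretely, for each pointer $a\in A$ that appears as the LHS of at least one statement in $S_4$, let $*a=b_1,\ldots,*a=b_{k_a}$ be all type-4 statements of $S$ with $a$ on the LHS. I would introduce a fresh pointer $\hat b_a$ and replace these $k_a$ statements with the single type-4 statement $*a=\hat b_a$ together with the $k_a$ type-1 statements $\hat b_a=b_1,\ldots,\hat b_a=b_{k_a}$. Call the resulting instance $(A',S')$, where $A'=A\cup\{\hat b_a\}$ and $|A'|\le 2n$, while $|S'_4|$ equals the number of distinct LHS pointers appearing in $S_4$, which is at most $n\le|A'|$.

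The main step is correctness: I need to verify that for all original pointers $a,b\in A$, we have $b\in\PointsToSet{a}$ in $(A,S)$ iff $b\in\PointsToSet{a}$ in $(A',S')$. In one direction, the inclusion constraint generated by $\hat b_a=b_i$ forces $\PointsToSet{b_i}\subseteq\PointsToSet{\hat b_a}$ in any solution of $(A',S')$, so the constraint $\forall c\in\PointsToSet{a}\colon\PointsToSet{\hat b_a}\subseteq\PointsToSet{c}$ from $*a=\hat b_a$ implies $\PointsToSet{b_i}\subseteq\PointsToSet{c}$ for every $i$, reproducing all original type-4 constraints. In the other direction, the least fixpoint of $(A,S)$ can be extended to a solution of $(A',S')$ by simply setting $\PointsToSet{\hat b_a}=\bigcup_i\PointsToSet{b_i}$, so no new points-to facts over the original pointers are introduced. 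By monotonicity of the inclusion-based semantics, the least fixpoints agree on $A$.

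The only subtle point I anticipate is ensuring that the argument extends to the bounded variants (so that the transformation can be applied freely throughout the paper): a witness program for $b\in\PointsToSet{a}$ in $(A,S)$ using a type-4 statement $*a=b_i$ translates to a witness in $(A',S')$ that executes $\hat b_a=b_i$ (type-1) immediately followed by $*a=\hat b_a$ (type-4), producing the same store update on $A$. Hence each original type-4 execution is simulated by exactly one type-4 execution in the new instance, so $(4,j)$-boundedness is preserved exactly and $(\All,j)$-boundedness is preserved up to a multiplicative constant, which is absorbed by the $\Otilde(\cdot)$ notation used elsewhere. With correctness and the witness-length preservation in hand, the bound $|S_4|\le n$ (up to renaming $n$) follows.
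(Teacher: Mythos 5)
Your construction is exactly the paper's: for each pointer $a$ that is the LHS of one or more type-4 statements, introduce one fresh pointer (the paper calls it $c$, you call it $\hat b_a$), replace every $*a=b_i$ by the type-1 statement $c=b_i$, and add the single type-4 statement $*a=c$, yielding $|A'|\le 2|A|$ and at most one type-4 statement per LHS pointer. Your extra discussion of the least-fixpoint correspondence and of how witness lengths and boundedness transfer is a sound and somewhat more detailed elaboration of what the paper leaves implicit, but the approach is the same.
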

\begin{proof}
Consider any pointer $a\in A$ such that we have statements $\{*a=b_i\}_i$ in $S_4$.
We introduce a new pointer $c$, and 
(i)~we insert a new type-4 statement $*a=c$, and
(ii)~we replace each $*a=b_i$ statement with $c=b_i$.
Performing the above process for each $a$, we create a new instance $(A', S')$ such that for every $a,b\in A$,
we have $b\in \PointsToSet{a}$ in $(A,S)$ iff the same holds in $(A', S')$.
Finally note that $|A'|\leq 2\cdot |A|$ and $|S'|\leq |S|+|A|$, while $|S'_4|\leq |A'|$ as now every pointer appears in the left-hand side of a type~4 statement at most once.

The desired result follows.
\end{proof}

\Paragraph{Algorithm $\BoundedAPAAlgo$.}
We now present our algorithm $\BoundedAPAAlgo$ which solves$(4, j)$ $\APAPA$ for an instance $(A,S)$ and some given $j\geq 0$.
The algorithm performs $j+1$ iterations of $\DReachability{1}$ on graphs $G_{i}=(A, E_i)$, for $i\in[j+1]$, where initially $G_1$ is the Dyck graph in the representation $(G_1, S_4)$ of the $\APA$ instance $(A,S)$.
In iteration $i$, the algorithm solves $\DReachability{1}$ in $G_i$, and then computes all pointers $c$ that flows into some pointer $a$ in $G_i$ for which
there is a statement $*a=b$ in $S_4$.
Then the algorithm resolves the statement by inserting an edge $(b,c,\epsilon)$ in $G_{i+1}$.
See \cref{algo:bounded} for a detailed description.
We conclude this section with the proof of \cref{them:bounded_apa_subcubic}.

\smallskip
\begin{algorithm}
\small
\DontPrintSemicolon
\SetInd{0.3em}{0.3em}
\caption{$\BoundedAPAAlgo$}\label{algo:bounded}
\SetKwProg{Fn}{Function}{:}{}
\KwIn{
An instance $(A,S)$ of $\APA$, a bound $j$ on statements of type~4
}
\KwOut{A set $(a,b)$ of all points-to relationships $b\in \PointsToSet{a}$ witnessed by $(4, j)$-bounded programs.}
\BlankLine
\begin{multicols}{2}
\tcp{Initialization}
\tcp{$|S_4|\leq n$ wlog}
Let $(G_1=(A, E_1), S_4)$ be the Dyck-graph representation of $(A,S)$ \\
Let $V=\{ a_1, a_2\colon a\in A \}$ be a node set\\

\pagebreak
\tcp{Computation}
\ForEach{$i\in [j+1]$}{\label{line:boundeapadalgo_mainloop}
Solve $\DReachability{1}$ in $G_i$ using $\DOneAlgo$\label{line:boundedapaalgo_done}\\
Let $Z_i^{1}=\{ (a_1, b_2)\colon b=\&a \text{ is a statement in }S \}$\\
Let $Z_i^{2}=\{ (a_2, b_2)\colon b \text{ is D-reachable from } a \text{ in } G_i \}$\\
Solve all-pairs reachability in $H_i=(V, Z_i^{1}\cup Z_{i}^{2})$\label{line:boundedapaalgo_tr}\\
Let $E_{i+1}=E_i$\\
\ForEach{statement $*a=b$ in $S_4$}{\label{line:boundedapaalgo_s4}
\ForEach{$c\in A$ with $c_1\Path a_2 $ in $H_i$}{\label{line:boundedapaalgo_path}
Insert $(b,c,\epsilon)$ in $E_{i+1}$\\
}
}
Construct the graph $G_{i+1}=(A, E_{i+1})$
}
\Return{$\{ a,b\colon b_1\Path a_2 \text{ in } H_{j+1} \}$}
\end{multicols}
\end{algorithm}

\begin{proof}[Proof of \cref{them:bounded_apa_subcubic}]
The correctness follows directly from the correctness of $\DOneAlgo$ (\cref{them:d1}).
By induction, at the end of iteration $i$, $\BoundedAPAAlgo$ has solved $(4,i-1)$-bounded $\APAPA$, hence at the end of iteration $j+1$ the algorithm has solved $(4,j)$-bounded $\APAPA$.

We now turn our attention to complexity.
In each iteration of the main loop in \cref{line:boundeapadalgo_mainloop},
we have an invocation to $\DOneAlgo$ in \cref{line:boundedapaalgo_done}, which, by \cref{them:d1}, takes $\Otilde(n^{\omega})$ time.
In addition, the all-pairs reachability in \cref{line:boundedapaalgo_tr}  can be performed using fast BMM~\cite{Munro71} in $O(n^{\omega})$ time.
Finally, by \cref{lem:sfour_bound}, the loop in \cref{line:boundedapaalgo_s4} is executed at most $n$ times, 
while the inner loop in \cref{line:boundedapaalgo_path} is clearly executed at most $n$ times as well. 
Hence, in each iteration, the running time is dominated by the invocation to $\DOneAlgo$, and thus the total time for all iterations is $\Otilde(n^{\omega}\cdot j)$.

The desired result follows.
\end{proof}

\Paragraph{Impact of bounding type-4 statements.}
\cref{them:bounded_apa_subcubic} targets $(4,j)$-bounded $\APAPA$ which limits the number of applications of type-4 statements.
This bounding might miss points-to relationships created by repeatedly nested aliasing.
In practice, the level of indirection is typically small, and thus we expect the above algorithm to be relatively complete.
Given the algorithmic benefits of this approach, an experimental evaluation of its precision is interesting future work.

\section{The Quadratic Hardness of Bounded Andersen's Pointer Analysis}\label{sec:ovhard}

In this section we continue to study bounded $\APA$ and prove \cref{them:ovhardness}, i.e.,
if we restrict our attention to points-to relationships as witnessed by programs of length $\Otilde(1)$,
even the on-demand problem has a quadratic (conditional) lower bound.
Our reduction is from the problem of Orthogonal Vectors~\cite{Williams19}.

\Paragraph{Orthogonal Vectors ($\OV$).}
The input to the problem is two sets $X,Y$, each containing $n'$ vectors in $\{0,1\}^{D}$, for some dimension $D=\omega(\log n')$.
The task is to determine if there exists a pair $(x,y)\in (X\times Y)$ that is orthogonal, i.e., for each $j\in D$, we have $x[j]\cdot y[j] = 0$.
The respective hypothesis states that the problem cannot be solved in time $O(n'^{2-\epsilon})$, for any fixed $\epsilon>0$.

\begin{figure}
\newcommand{\xdisposition}{3}
\newcommand{\ydisposition}{-3.5}
\newcommand{\xtstep}{0.9}
\newcommand{\ytstep}{0.5}
\newcommand{\xstep}{1.2}
\newcommand{\ystep}{-1.5}
\newcommand{\gatex}{0.8}
\newcommand{\gatey}{0.4}

\centering
\begin{tikzpicture}[thick,
pre/.style={<-,shorten >= 2pt, shorten <=2pt,  thick},
post/.style={->, thick},
seqtrace/.style={->, line width=2},
node/.style={circle},
postpath/.style={->, thick},
und/.style={very thick, draw=gray},
event/.style={rectangle, minimum height=0.8mm, minimum width=3mm, fill=black!100,  line width=1pt, inner sep=0},
virt/.style={circle,draw=black!50,fill=black!20, opacity=0}
]

\begin{scope}[shift={(4.75*\xstep, 1*\ydisposition + 1*\ystep)}]
\node[] at (2*\xstep, 0*\ystep){
$y_{1}=
\begin{bmatrix}
1 \\
0
\end{bmatrix}
$
};

\node[] at(-1.5*\xstep, 0*\ystep)	{
$\begin{aligned}
*u_1^{1}=& v_1^{1}
\end{aligned}$
};

\end{scope}

\begin{scope}[shift={(4.75*\xstep, 2*\ydisposition + 1*\ystep)}]
\node[] at (2*\xstep, 0*\ystep){
$y_{2}=
\begin{bmatrix}
0\\
1
\end{bmatrix}
$
};

\node[] at(-1.5*\xstep, 0*\ystep)	{
$\begin{aligned}
*u_1^{2}=& v_1^{2}
\end{aligned}$
};

\end{scope}

\begin{scope}[shift={(-0.75*\xstep, 1*\ydisposition + 1*\ystep)}]
\node[] at (-2*\xstep, 0*\ystep){
$x_{1}=
\begin{bmatrix}
1 \\
1
\end{bmatrix}
$
};

\node[] at(2*\xstep, 0*\ystep)	{
$\begin{aligned}
*a_2^{1}=& b_2^{1}
\end{aligned}$
};

\end{scope}

\begin{scope}[shift={(-0.75*\xstep, 2*\ydisposition + 1*\ystep)}]
\node[] at (-2*\xstep, 0*\ystep){
$x_{2}=
\begin{bmatrix}
0\\
1
\end{bmatrix}
$
};

\node[] at(2*\xstep, 0*\ystep)	{
$\begin{aligned}
*a_2^{2}=& b_2^{2}
\end{aligned}$
};

\end{scope}

\node[] (z) at (2*\xstep, 2*\ydisposition - 0.8*\ystep) {$z$};
\node[] (s) at (-1.5*\xstep, 2*\ydisposition - 0.8*\ystep) {$s$};
\node[] (t) at (5.5*\xstep, 2*\ydisposition - 0.8*\ystep) {$t$};
\foreach \v in {1,2}{
\begin{scope}[shift={(0*\xdisposition, \v*\ydisposition)}]
\node[] (x\v1) at (0*\xstep, 0*\ystep) {$a^{\v}_{1}$};
\node[] (x\v2) at (0*\xstep, 1*\ystep) {$a^{\v}_{2}$};
\node[] (y\v2) at (0*\xstep, 1.75*\ystep) {$b^{\v}_{2}$};

\node[] (u\v1) at (4*\xstep, 0*\ystep) {$u^{\v}_{1}$};
\node[] (u\v2) at (4*\xstep, 1.75*\ystep) {$u^{\v}_{2}$};
\node[] (v\v1) at (4*\xstep, 1*\ystep) {$v^{\v}_{1}$};

\draw[postpath, bend left=0] (x\v1) to node[left]{$*$} (x\v2);
\end{scope}
}

\draw[postpath, out=0, in=90, looseness=1.5] (x11) to node[above, pos=0.2, sloped]{$\&$} (z);
\draw[postpath, bend left=10] (x21) to node[above, midway, sloped]{$2\&$} (z);
\draw[postpath, bend right=10] (x21) to node[below, midway, sloped]{$\&$} (z);

\draw[postpath, out=90, in=180, looseness=1.5] (z) to node[above, pos=0.8, sloped]{$*$} (u11);
\draw[postpath, bend right=10] (z) to node[below, midway, sloped]{$*$} (u21);
\draw[postpath, bend left=10] (z) to node[below, midway, sloped]{} (u21);

\draw[postpath, bend left=0] (u22) to node[right]{$\&$} (v21);

\draw[postpath, bend left=10] (u12) to node[left]{$\&$} (v11);
\draw[postpath, bend right=10] (u12) to node[right]{$2\&$} (v11);

\draw[postpath, bend left=10] (s) to node[above, midway, sloped]{$\&$} (y12);
\draw[postpath, bend right=30] (s) to node[below, midway, sloped]{$\&$} (y22);
\draw[postpath, bend left=10] (u12) to  (t);
\draw[postpath, bend right=30] (u22) to  (t);

\end{tikzpicture}
\caption{
Reduction from $\OV$ with vector sets $A=\{x_1, x_2\}$ and $B=\{y_1, y_2\}$ to $(\All, O(\log n))$-bounded $\APA$
on the pair $s\in\PointsToSet{t}$.
}
\label{fig:ov_hard}
\end{figure}
\Paragraph{Reduction from $\OV$.}
Consider an instance $X,Y$ of $\OV$, and assume wlog that $D$ is even.
Here we show how to construct a $\SPAPA$ instance with two distinguished pointers $s$ and $t$ such that $s\in\PointsToSet{t}$ iff there exists an orthogonal pair of vectors in $X\times Y$.

\SubParagraph{Intuition.}
We start with a high-level intuition, while \cref{fig:ov_hard} provides an illustration.
Consider the first two vectors $x_1\in Y$ and $y_1\in Y$, and focus on the first coordinate.
Recall the representation of $\APA$ as a Dyck graph $G$ and a set of type-4 statements $S_4$, and let $\ov{G}$ be the resolved Dyck graph.
We introduce two pointers $a_1^1$ and $u_1^1$, with the goal that $a_1^1$ flows into $u_1^1$ in the resolved Dyck graph $\ov{G}$ iff $x_1[1]\cdot y_1[1]=0$, i.e., $x_1$ and $y_1$ are orthogonal as far as the first coordinate is concerned.
We achieve this by introducing a distinguished node $z$, and having two edges $a_1^1\DEdge{\&}z$ and $z\DEdge{*}u_1^1$. 
Moreover, if $x_1[1]=0$, we also create a path $P_1^x\colon a_1^1\DPath{\&\&} z$.
Similarly, if $y_1[1]=0$, we also create a path $P_1^y\colon z\DPath{\epsilon}u_1^1$ (i.e., $P_1^y$ is simply an $\epsilon$-labeled edge).
Observe that $a_1^1$ flows into $u_1^1$ iff $x_1[1]\cdot y_1[1]=0$.
If $x_1[1]\cdot y_1[1]=1$ then nothing happens and the process stops here.
Otherwise, $x_1$ and $y_1$ are potentially orthogonal, so we proceed with the second coordinate.
We create a node $v_1^1$ and a type-4 statement $*u_1^1=v_1^1$; since $a_1^1$ flows into $u_1^1$, the resolved graph $\ov{G}$ also has an edge $v_1^1\DEdge{\epsilon} u_1^1$.
The contents of $x_1$ and $y_1$ on the second coordinate are encoded via paths to new pointers $a_2^1$ and $u_2^1$, respectively.
In particular, we have two edges $a_1^1\DEdge{*}a_2^1$ and $u_2^1\DEdge{\&} v_1^1$.
Moreover, if $x_1[2]=0$, we also create a path $P_2^x\colon a_1^1\DPath{\epsilon} a_2^1$ (i.e., $P_2^x$ is simply an $\epsilon$-labeled edge).
Similarly, if $y_1[2]=0$, we also create a path $P_2^y\colon u_2^1\DPath{\&
\&} v_1^1$.
Observe that $u_1^2$ flows into $a_2^1$ iff $v_1^1\DEdge{\epsilon} a_1^1$ (which is established by the previous step, as $x_1[1]\cdot y_1[1]=0$) and also $x_1[2]\cdot y_1[2]=0$, thereby establishing that $x_1$ and $y_1$ appear orthogonal on the first two coordinates.
In that case, we have another pointer $b_2^1$ and type-4 statement $*a_2^1=b_2^1$, which inserts an edge $b_2^1\DEdge{\epsilon}u_2^1$ in $\ov{G}$.
From here on, the process repeats as before, with $b_2^1$ playing the role of pointer $z$ initially, and the contents of $x_1[3]$ captured in paths from $a_3^1\DPath{} b_2^1$, and the contents of $y_1[3]$ captured in paths $u_2^1\DPath{}u_3^1$.
In the end, we have that $u_D^1$ is D-reachable from $b_D^1$ iff $x_1[j]\cdot y_1[j]=0$ for all $j\in [D]$.

Finally, to capture all potential pairs of vectors $x_{i_1}, y_{i_2}\in X\times Y$, 
we connect the pointer $z$ to all pointers $a_1^{i_1}$ and $u_1^{i_2}$, as above.
To complete the reduction, we make $s\DEdge{\&} b_D^{i_1}$ and $u_D^{i_2}\DEdge{\epsilon} t$, and thus $s$ flows into $t$ via the D-reachable path $u_D^{i_1}\DPath{} b_D^{i_2}$ iff $x_{i_1}$ and $y_{i_2}$ are orthogonal.

\SubParagraph{Formal construction}
We now proceed with the formal construction, as follows.
First, we introduce a pointer $z$.

For every vector $x^{i}\in X$, we introduce pointers $a_1^{i},\dots, a_{D}^{i}$ and $b_2^{i},b_4^{i},\dots, b_{D}^{i}$.
\begin{compactenum}
\item We have $z=\&a_1^i$. If $x^i[1]=0$, we also introduce a new pointer $\hat{a}_1^{i}$ and two assignments
$z=\& \hat{a}_1^{i}$ and $\hat{a}_1^1=\&a_1^{i}$.
\item For every even $j\in [D]$, we have $*a_j^{i}=b_j^{i}$ and $a_j^{i}=*a_{j-1}^{i}$.
If $x^i[j]=0$, we also have $a_{j}^{i}=a_{j-1}^{i}$.
\item For every odd $j\in [D]$ with $j>1$, we have $b_{j-1}^{i}=\& a_j^{i}$.
If $x^i[j]=0$, we also introduce a new pointer $\hat{a}_j^{i}$ and two assignments
$b_{j-1}^{i}=\& \hat{a}_j^{i}$ and $\hat{a}_j^i=\&a_j^{i}$.
\end{compactenum}
For every vector $y^{i}\in Y$, we introduce pointers $u_1^i,\dots, u_{D}^i$ and $v_1^{i}, v_3^{i},\dots, v_{D-1}^{i}$.
\begin{compactenum}
\item We have $u_1^{i}=*z$. If $y^i[1]=0$, we also have $u_1^i=z$.
\item For every odd $j\in [D]$, we have $*u_{j}^{i}=v_j^i$ and $u_j^i=*u_{j-1}^i$.
If $y^i[j]=0$, we also have $u_j^i=u_{j-1}^i$.
\item For every even $j\in[D]$, we have $v_{j-1}^i=\& u_j^i$.
If $y^i[j]=0$, we also introduce a new pointer $\hat{u}_j^i$ and two assignments
$v_{j-1}^i=\& \hat{u}_j^i$ and $\hat{u}_j^i=\& u_j^i$.
\end{compactenum}
Finally, we introduce two pointers $s$ and $t$.
For every $i\in [n']$, we have $b_{D}^i=\& s$ and $t=u_{D}^i$.
The on-demand question is whether $s\in \PointsToSet{t}$.
Observe that we have used $n=O(n'\cdot D)$ pointers, and the above construction can be easily carried out in $O(n)$ time.

\Paragraph{Correctness.}
We now establish the correctness of the above construction.
The key idea is as follows.
Recall our definition of the Dyck-graph representation $(G=(A,E),S_4)$ of the $\APA$ instance, and the resolved Dyck graph $\ov{G}$ (see \cref{subsec:dyck_reachability}).
The resolved graph $\ov{G}$ is constructed from $G$ by iteratively
(i)~finding three pointers $a,b,c$ such that $a$ flows into $b$ and we have a type 4 statement $*b=c$, and
(ii)~inserting an edge $c\DEdge{\epsilon} a$ in $G$.
The above construction guarantees that, for two integers $i_1, i_2\in [n']$, the following hold by induction on $j\in [D]$.
\begin{compactenum}
\item If $j$ is odd,
we have $v^{i_2}_{j}\DEdge{\epsilon} a^{i_1}_{j}$ iff $\sum_{j'\leq j}x^{i_1}[j']\cdot y^{i_2}[j']=0$.
\item If $j$ is even,
we have $b^{i_1}_{j}\DEdge{\epsilon} u^{i_2}_{j}$ iff $\sum_{j'\leq j}x^{i_1}[j']\cdot y^{i_2}[j']=0$.
\end{compactenum}
Once such an $\epsilon$-labeled edge is inserted for some $j$, it creates a path that leads to a flows-into relationship that leads to inserting the next $\epsilon$-labeled edge for $j+1$ iff $x^{i_1}[j+1]\cdot y^{i_2}[j+1]=0$.
Note that $s$ flows into $t$ iff there exist $i_1, i_2\in [n']$ such that $b_{i_1}^{D}\DEdge{\epsilon} u_{i_2}^{D}$, which, by the above, holds iff $x^{i_1}$ and $y^{i_2}$ are orthogonal.
Finally, since $D=\Theta(\log n)$, the witness program for $s\in \PointsToSet{t}$ has length $\Otilde(1)$.

The above idea is formally captured in the following two lemmas

\smallskip
\begin{restatable}{lemma}{lemovcompleteness}\label{lem:ov_completeness}
If there exist $i_1, i_2\in [n']$ such that $x^{i_1}$ and $y^{i_2}$ are orthogonal, then $s$ flows into $t$ in $\ov{G}$.
Moreover, there exists a witness program $\Program$ of length $O(\log n)$ that results in $s\in \Store{t}$.
\end{restatable}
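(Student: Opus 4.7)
The plan is a forward induction on the coordinate index $j$, tracking how the type-4 statements get progressively resolved in $\ov{G}$ under the orthogonality assumption. Fix $i_1, i_2 \in [n']$ with $x^{i_1}$ orthogonal to $y^{i_2}$. I will establish the invariant that for each $j \in [D]$, the resolved graph $\ov{G}$ contains the edge $v^{i_2}_j \DEdge{\epsilon} a^{i_1}_j$ when $j$ is odd, and $b^{i_1}_j \DEdge{\epsilon} u^{i_2}_j$ when $j$ is even. Together with the tail edges $s \DEdge{\&} b^{i_1}_D$ and $u^{i_2}_D \DEdge{\epsilon} t$ coming from $b^{i_1}_D = \& s$ and $t = u^{i_2}_D$ respectively, the invariant at $j = D$ (even, by assumption) immediately yields the path $s \DEdge{\&} b^{i_1}_D \DEdge{\epsilon} u^{i_2}_D \DEdge{\epsilon} t$, which witnesses that $s$ flows into $t$ in $\ov{G}$. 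Applying \cref{lem:resolved_graph} then gives $s \in \PointsToSet{t}$.

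For the base case $j=1$, the goal is to show that $a^{i_1}_1$ flows into $u^{i_2}_1$ in $\ov{G}$; this triggers resolution of the type-4 statement $*u^{i_2}_1 = v^{i_2}_1$ and inserts the desired edge $v^{i_2}_1 \DEdge{\epsilon} a^{i_1}_1$. Since $x^{i_1}[1] \cdot y^{i_2}[1] = 0$, at least one of $x^{i_1}[1]=0$ or $y^{i_2}[1]=0$ holds; a short case analysis exhibits in each case an intermediate node $c$ with $a^{i_1}_1 \DEdge{\&} c$ and a Dyck-labelled path $c \DPath{} u^{i_2}_1$, obtained by gluing one of $z \DEdge{*} u^{i_2}_1$ or $z \DEdge{\epsilon} u^{i_2}_1$ to zero or one $\&$-step through the auxiliary pointer $\hat{a}^{i_1}_1$.

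For the inductive step at level $j$, the induction hypothesis supplies an $\epsilon$-edge at level $j-1$, which I concatenate with the direct construction edges at level $j$ (namely $a^{i_1}_j = *a^{i_1}_{j-1}$ and $b^{i_1}_{j-1} = \& a^{i_1}_j$ on the $X$-side, and $u^{i_2}_j = *u^{i_2}_{j-1}$ and $v^{i_2}_{j-1} = \& u^{i_2}_j$ on the $Y$-side), plus the optional equality and auxiliary-$\&$ edges present when the $j$-th coordinate vanishes on the respective side. Orthogonality at coordinate $j$ combined with the induction hypothesis then enables a case analysis parallel to the base case, producing a Dyck-balanced tail that establishes the flows-into relationship needed to resolve the relevant type-4 statement ($*u^{i_2}_j = v^{i_2}_j$ for odd $j$, $*a^{i_1}_j = b^{i_1}_j$ for even $j$) and to insert the next invariant edge.

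To bound the witness length, I will unroll the above construction into an explicit sequence of statement executions: each level $j$ contributes a constant number of executions, consisting of a single use of the type-4 statement at level $j$ preceded by $O(1)$ type-1/2/3 statements that propagate the newly introduced points-to pair through the auxiliary pointers $\hat{a}^{i_1}_j$ and $\hat{u}^{i_2}_j$. Summing over the $D$ levels and adding the constant number of final statements that funnel $s$ through $b^{i_1}_D$ to $t$ yields a witness of length $O(D) = O(\log n)$, since $D$ is taken to be $\Theta(\log n)$ in the $\OV$ regime and the total number of pointers $n$ is $\Theta(n' \cdot D)$. The main technical obstacle is the case analysis inside each induction step, which must verify that the precise interleaving of $\&$, $*$ and $\epsilon$ labels produces a Dyck-balanced path exactly when the partial orthogonality condition holds up to coordinate $j$; this is delicate but routine, reducing to enumerating the Boolean combinations of $x^{i_1}[j]$ and $y^{i_2}[j]$ and checking that the constructed path remains Dyck.
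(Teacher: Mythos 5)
Your proof is correct and follows essentially the same approach as the paper: the same induction on the coordinate index $j$, the same alternating invariant (odd $j$ gives $v^{i_2}_j \DEdge{\epsilon} a^{i_1}_j$, even $j$ gives $b^{i_1}_j \DEdge{\epsilon} u^{i_2}_j$), the same case analysis on which coordinate vanishes, and the same tail edges through $b^{i_1}_D$ and $u^{i_2}_D$ to finish. Your accounting for the witness length per level is somewhat more explicit than the paper's, but the underlying argument is identical.
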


\smallskip
\begin{restatable}{lemma}{lemovsoundness}\label{lem:ov_soundness}
If $s$ flows into $t$ in $\ov{G}$, there exist $i_1, i_2\in [n']$ such that $x^{i_1}$ and $y^{i_2}$ are orthogonal.
\end{restatable}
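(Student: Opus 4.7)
\medskip
\noindent\textbf{Proof plan for \cref{lem:ov_soundness}.}
The plan is to prove the claim by carefully tracing which $\epsilon$-edges can appear in $\ov{G}$ as a result of resolving the type-4 statements, and showing that every such edge certifies a partial orthogonality condition. Concretely, I will designate as \emph{checkpoint edges} of level $j$ those $\epsilon$-edges of $\ov{G}$ that can only arise by resolving a type-4 statement at ``depth $j$'' in the gadget: for odd $j$, these are the edges $v_j^{i_2} \DEdge{\epsilon} a_j^{i_1}$ (arising from $*u_j^{i_2}=v_j^{i_2}$); for even $j$, the edges $b_j^{i_1}\DEdge{\epsilon}u_j^{i_2}$ (arising from $*a_j^{i_1}=b_j^{i_1}$). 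The key structural claim I will prove is that a checkpoint edge at level $j$ for the pair $(i_1,i_2)$ belongs to $\ov{G}$ only if $x^{i_1}[j']\cdot y^{i_2}[j']=0$ for every $j'\leq j$.

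I will prove this claim by induction on $j$. For the base case $j=1$, the edge $v_1^{i_2}\DEdge{\epsilon}a_1^{i_1}$ exists iff $a_1^{i_1}$ flows into $u_1^{i_2}$ in $\ov{G}$, i.e.\ iff there is some $c$ with $a_1^{i_1}\DEdge{\&}c$ and a Dyck path $c\DPath{}u_1^{i_2}$. The only outgoing $\&$-edges from $a_1^{i_1}$ in the initial graph go to $z$ and (conditionally on $x^{i_1}[1]=0$) to $\hat{a}_1^{i_1}$, and from these nodes the only Dyck-reachable way to land at $u_1^{i_2}$ is either through the detour $\hat{a}_1^{i_1}\DEdge{\&}z\DEdge{*}u_1^{i_2}$ (available only when $x^{i_1}[1]=0$), or through the direct $\epsilon$-edge $z\DEdge{\epsilon}u_1^{i_2}$ created by $u_1^{i_2}=z$ (available only when $y^{i_2}[1]=0$); either case forces $x^{i_1}[1]\cdot y^{i_2}[1]=0$. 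For the inductive step at level $j>1$ (say $j$ odd; the even case is symmetric), the checkpoint edge requires $a_j^{i_1}$ to flow into $u_j^{i_2}$, and the only outgoing $\&$-edges from $a_j^{i_1}$ together with the shape of the gadget force the witnessing Dyck path to factor through the level-$(j-1)$ checkpoint edge $b_{j-1}^{i_1}\DEdge{\epsilon}u_{j-1}^{i_2}$; the inductive hypothesis yields orthogonality on coordinates $\{1,\ldots,j-1\}$, while the level-$j$ detour ($\hat{a}_j^{i_1}$ on the $x$-side, $\hat{u}_j^{i_2}$ or the direct $u_j^{i_2}=u_{j-1}^{i_2}$ on the $y$-side) supplies the missing orthogonality condition on coordinate $j$.

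To conclude, I will observe that $s$ flowing into $t$ in $\ov{G}$ forces, by the edge structure around $s$ and $t$, a Dyck-reachable path $b_D^{i_1}\DPath{}u_D^{i_2}$ in $\ov{G}$ for some $(i_1,i_2)\in[n']^2$. The node $b_D^{i_1}$ has no outgoing edges other than those produced by resolving $*a_D^{i_1}=b_D^{i_1}$, so any such Dyck path must begin with a level-$D$ checkpoint edge $b_D^{i_1}\DEdge{\epsilon}c$ and continue from $c$ to $u_D^{i_2}$ via a Dyck path; a short side-argument shows the only viable option is $c=u_D^{i_2}$, i.e.\ the checkpoint edge itself. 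Applying the structural claim at level $D$ yields orthogonality on all coordinates, establishing the desired orthogonal pair $(x^{i_1},y^{i_2})$.

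The main obstacle will be the inductive step, specifically ruling out ``creative'' Dyck paths in $\ov{G}$ that could bypass the intended checkpoint edges by exploiting the many $\epsilon$-edges introduced during resolution. This requires a careful global argument that the ``$x$-side'' nodes $\{a_j^{i_1},b_j^{i_1},\hat{a}_j^{i_1}\}$ and the ``$y$-side'' nodes $\{u_j^{i_2},v_j^{i_2},\hat{u}_j^{i_2}\}$ are connected by $\&/*$-labeled edges only through $z$ and the checkpoint edges, so that any Dyck-balanced crossing must traverse a checkpoint edge at the corresponding level.
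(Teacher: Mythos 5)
Your proposal matches the paper's proof in essence. Both proceed by induction on the coordinate index $j$: starting from the observation that $s$ flowing into $t$ forces a direct $\epsilon$-edge $b_D^{i_1}\DEdge{\epsilon}u_D^{i_2}$ (since $*$-labeled edges in $\ov{G}$ enter sinks), each level-$j$ checkpoint edge is traced back to the flows-into relationship that produced it, and the crucial structural fact that all $\&$-labeled outgoing edges from $a_j^{i_1}$ (resp.\ $u_j^{i_2}$) pass through $b_{j-1}^{i_1}$ (resp.\ $v_{j-1}^{i_2}$, or $z$ at level $1$) forces the witnessing path to factor through the level-$(j-1)$ checkpoint and to exhibit the detour certifying $x^{i_1}[j]\cdot y^{i_2}[j]=0$. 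The paper states the inductive invariant in terms of D-reachability (of $a_j^{l_1}$ from $v_j^{l_2}$, resp.\ $u_j^{l_2}$ from $b_j^{l_1}$) rather than in terms of the checkpoint $\epsilon$-edge itself, but these are interchangeable here since the $\epsilon$-edge is the only possible source of that D-reachability; the ``ruling out creative paths'' obstacle you flag is exactly what the paper's path-factoring observations resolve.
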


We conclude this section with the proof of \cref{them:ovhardness}.

\begin{proof}[Proof of \cref{them:ovhardness}.]
\cref{lem:ov_completeness} and \cref{lem:ov_soundness}, together with \cref{lem:resolved_graph}, state the correctness of the reduction.
Note that the $\APA$ instance we constructed has $n=O(n'\cdot D)$ pointers and $m=O(n'\cdot D)$ assignments,
hence it is a sparse instance.
Moreover the time for the construction is $O(n'\cdot D)$.
Assume that there exists some fixed $\epsilon >0$ such that $\SPAPA$ can be solved in $O(n^{2-\epsilon})$ time.
Then we have a solution for the $\OV$ instance in time $O((n'\cdot D)^{2-\epsilon })$ time, which violates the Orthogonal-Vectors hypothesis.

The desired result follows.
\end{proof}

\section{The Parallelizability  of Andersen's Pointer Analysis}\label{sec:pcomplete}

In this section we address the parallelizability of $\APA$, and show the following results.
In \cref{subsec:pcomplete} we prove \cref{them:pcomplete}, which shows that $\APA$ is not parallelizable under standard hypotheses in complexity theory.
In \cref{subsec:parallelizable} we prove \cref{them:bounded_parallelizable}, which shows that bounded $\APA$ is efficiently parallelizable as long as we focus on poly-logarithmically (i.e., $O(\log^c n)$, for some constant $c$) many applications of type~4 statements.

\subsection{Andersen's Pointer Analysis is not Parallelizable}\label{subsec:pcomplete}
In this section we prove \cref{them:pcomplete}, i.e., that $\SPAPA$ is P-complete.
This implies that the problem is unlikely to be parallelizable.
Our reduction is from $\MonotoneCVP$.

\Paragraph{The problem $\MonotoneCVP$.}
The input to the problem is a circuit represented as a sequence of assignments $(A_1, \dots, A_{n'})$, such that for all $i\in [n']$, $A_i$ has one of the following types, where $j<k<i$.
\begin{align*}
A_i=0 \qquad\qquad A_i=1 \qquad\qquad A_i=A_j \land A_k \qquad\qquad A_i=A_j \lor A_k
\end{align*}
The condition $j<k<i$ ensures acyclicity.
The assignments $A_i=0$ and $A_i=1$ are the \emph{inputs}, whereas all other assignments are the \emph{gates}.
The task is to compute whether $A_{n'}$ evaluates to $1$ under the standard Boolean algebra interpretation of the operators $\land$ and $\lor$.
It is known that $\MonotoneCVP$ is P-complete even when every assignment has fan-out 2 (except $A_{n'}$ and inputs $A_i$)~\cite{Greenlaw95}.

\begin{figure}
\newcommand{\xdisposition}{0}
\newcommand{\ydisposition}{0}
\newcommand{\xtstep}{0.9}
\newcommand{\ytstep}{0.6}
\newcommand{\xstep}{1.1}
\newcommand{\ystep}{1.2}
\newcommand{\gatex}{0.8}
\newcommand{\gatey}{0.4}

\newcommand{\drawGate}[3]{
\node[node] (T#1) at (#2 + \gatex, #3 + \gatey/2)  {$z_{#1}$};
\node[node] (#1bar) at (#2 + \gatex, #3 - \gatey/2)  {$y_{#1}$};
\node[node] (#1) at (#2,#3)  {$x_{#1}$};
}
\newcommand{\drawInput}[3]{
\node[node] (T#1) at (#2, #3 )  {$z_{#1}$};
\node[node] (#1) at (#2+ 1.25*\gatex,#3)  {$x_{#1}$};
}
\centering
\begin{tikzpicture}[thick,
pre/.style={<-,shorten >= 2pt, shorten <=2pt,  thick},
post/.style={->, thick},
node/.style={circle},
seqtrace/.style={->, line width=2},
postpath/.style={->, thick},
und/.style={very thick, draw=gray},
event/.style={rectangle, minimum height=0.8mm, minimum width=3mm, fill=black!100,  line width=1pt, inner sep=0},
virt/.style={circle,draw=black!50,fill=black!20, opacity=0}
]

\drawInput{1}{0*\xstep}{0*\ystep}
\drawInput{2}{2*\xstep}{0*\ystep}
\drawInput{3}{5*\xstep}{0*\ystep}
\drawGate{4}{1*\xstep}{1*\ystep}
\drawGate{5}{4*\xstep}{1*\ystep}
\drawGate{6}{2.5*\xstep-\gatex/2}{3*\ystep}
\node[node] (s) at (3*\xstep+ \gatex, 3*\ystep+\gatey/2) {$s$};
\draw[post] (s) tonode[midway, above, sloped]{$\&$} (T6);
\draw[postpath] (4) to node[midway, above, sloped]{$4\&$} (T1);
\draw[postpath] (2) to node[midway, above, sloped]{$3*$} (4bar);
\draw[post] (1) to (T2);
\draw[postpath] (5) to node[midway, above, sloped]{$5\&$} (T2);
\draw[postpath] (3) to node[midway, above, sloped]{$4*$} (5bar);
\draw[post] (2) to  (T3);
\draw[postpath, bend right=40] (6) to node[midway, above, sloped]{$6\&$} (T4);
\draw[postpath, smooth, looseness=1] (6) to ($ (s) + (+0.3,0.3) $)  to[bend left=10]  node[midway, above, sloped]{$6\&$} (T5);
\draw[postpath] (4) to node[midway, above, sloped]{$5*$} (6bar);
\draw[postpath] (5) to node[midway, above, sloped]{$5*$} (6bar);

\draw[post, ] (T1) to (1);
\draw[post, ] (T2) to (2);

\node[] at(5.75*\xstep, 2.75*\ystep)	{
$\begin{aligned}
*y_3=& z_3\\
*y_4=& z_4\\
*y_5=& z_5
\end{aligned}$
};

\begin{scope}[shift={(-6*\xstep,0)}]

\renewcommand{\xstep}{0.8}
\node[and port, draw, thick,  rotate=90, label={[label distance=-8mm]$A_4$}, logic gate inputs=nn] (A4) at (1*\xstep, 1.5*\ystep)  {};
\node[and port, draw, thick,  rotate=90, label={[label distance=-8mm]$A_5$}, logic gate inputs=nn] (A5) at (4*\xstep, 1.5*\ystep)  {};
\node[or port, draw, thick,  rotate=90, label={[label distance=-7.25mm]$A_6$}, logic gate inputs=nn] (A6) at (2.5*\xstep, 3.4*\ystep)  {};

\node[below= \ystep/2.5 of A4.in 1](A1)  {$A_1=1$};
\node[below right= \ystep/2.5 and \xstep/2 of A4.in 2](A2)  {$A_2=1$};
\node[below = \ystep/2.5 of A5.in 2](A3)  {$A_3=0$};

\draw[thick] (A1) |- (A4.in 1);
\draw[thick] (A2) |- (A4.in 2);
\draw[thick] (A2) |- (A5.in 1);
\draw[thick] (A3) |- (A5.in 2);

\draw[thick, -*] (A2) to ($ (A2) + (0,\ystep/1.35) $) ;

\draw[thick] (A4.out) -| (A6.in 1);
\draw[thick] (A5.out) -| (A6.in 2);

\end{scope}

\end{tikzpicture}
\caption{
A $\MonotoneCVP$ instance (left) and the corresponding $\SPAPA$ instance for $s\in\PointsToSet{x_6}?$ (right).
Squiggly arrows represent paths of unique nodes, where the path has the corresponding path label.
}
\label{fig:p_complete}
\end{figure}

\Paragraph{Reduction from $\MonotoneCVP$.}
Consider an instance $(A_1, \dots, A_{n'})$ of $\MonotoneCVP$, and we construct an instance $(A,S)$ of $\SPAPA$.
An illustration is given in \cref{fig:p_complete}.
\begin{compactenum}
\item For every input $A_i$, we introduce two pointers $x_i, z_i\in A$.
We have a statement $x_i=z_i$ in $S$ iff $A_i=1$.
\item For every gate $A_i=A_j\land A_k$, with $j<k<i$, we introduce
(i)~three pointers $x_i, y_i, z_i$,
(ii)~$i$ pointers $x_i^1,\dots, x_i^i$, and
(iii)~$i-1$ pointers $y_i^1,\dots, y_i^{i-1}$.
We have the following statements in $S$.
\begin{align*}
&x_i^1=\&x_i   \qquad x_i^2= \&x_i^1  \qquad \cdots \qquad x_i^i=\&x_i^{i-1} \qquad z_j=\&x_i^{i} \qquad\qquad \text{and}\\
&y_i^{i-1}=*y_k \qquad  y_i^{i-2}= *y_i^{i-1} \qquad \cdots \qquad    y_i^{1}= *y_i^{2} \qquad  y_i=*y_i^{1}
\end{align*}
For every gate $A_i=A_j\lor A_k$, with $j<k<i$, we introduce
(i)~three pointers $x_i, y_i, z_i$,
(ii)~$i$ pointers $x_i^1,\dots, x_i^i$ and $i$ pointers $\hat{x}_i^1,\dots, \hat{x}_i^i$, and
(iii)~$i-1$ pointers $y_i^1,\dots, y_i^{i-1}$ and $i-1$ pointers $\hat{y}_i^1,\dots, \hat{y}_i^{i-1}$.
We have the following statements in $S$.
\begin{align*}
&x_i^1=\&x_i   \qquad x_i^2= \&x_i^1  \qquad \cdots \qquad x_i^i=\&x_i^{i-1} \qquad z_j=\&x_i^{i} \qquad\qquad \text{and}\\
&\hat{x}_i^1=\&x_i   \qquad \hat{x}_i^2= \&\hat{x}_i^1  \qquad \cdots \qquad \hat{x}_i^i=\&\hat{x}_i^{i-1} \qquad z_k=\&\hat{x}_i^{i} \qquad\qquad \text{and}\\
&y_i^{i-1}=*y_j \qquad  y_i^{i-2}= *y_i^{i-1} \qquad \cdots \qquad    y_i^{1}= *y_i^{2} \qquad  y_i=*y_i^{1} \qquad\qquad \text{and}\\
&\hat{y}_i^{i-1}=*\hat{y}_k \qquad  \hat{y}_i^{i-2}= *\hat{y}_i^{i-1} \qquad \cdots \qquad    \hat{y}_i^{1}= *\hat{y}_i^{2} \qquad  y_i=*\hat{y}_i^{1}
\end{align*}
\item For every gate $A_i$, we add a statement $*y_i=z_i$.
\item Finally, we introduce a special pointer $s$, and a statement $z_{n'}=\&s$. The on-demand question is whether $s\in\PointsToSet{x_{n'}}$.
\end{compactenum}
Note that the size of our $\APA$ instance is $\Theta(n'^2)$ i.e., the construction leads to a quadratic blow-up.
Nevertheless, it is easy to carry out the above construction in logarithmic space, as required for a log-space reduction.

\Paragraph{Correctness.}
We now establish the correctness of the reduction.
We first present an intuitive insight and then the formal steps of the proof.

\SubParagraph{Intuitive argument of correctness.}
Let $(G=(A, E), S_4)$ be the Dyck-graph representation of the constructed $\APA$ instance $(A, S)$, and let $\ov{G}$ be the resolved Dyck graph (see \cref{subsec:dyck_reachability}).
Recall that the resolved graph $\ov{G}$ is constructed from $G$ by iteratively
(i)~finding three pointers $a,b,c$ such that $a$ flows into $b$ and we have a type 4 statement $*b=c$, and
(ii)~inserting an edge $c\DEdge{\epsilon} a$ in $G$.

The correctness of the construction relies on the following invariant.
For every $i$ such that $A_i$ is a gate with inputs $A_j$, $A_k$, we have that $x_i$ is D-reachable from $z_i$ iff
(i)~$z_j\DEdge{\epsilon} x_j$ and $z_k\DEdge{\epsilon}x_k$ (if $A_i$ is an AND gate), or
(ii)~$z_j\DEdge{\epsilon} x_j$ or $z_k\DEdge{\epsilon}x_k$ (if $A_i$ is an OR gate).
Observe that in $\ov{G}$ we have potentially other nodes $u$ that are D-reachable from $z_i$.
For example, in \cref{fig:p_complete}, we have $z_4\DEdge{\epsilon} x_5^1$ in $\ov{G}$,
where $x_5^1$ is the second node in the path $x_5\DPath{5\&}z_2$ (not explicitly shown in the figure).
This occurs because we have a statement $*y_4=z_4$, and $x_5^1$ flows into $y_4$.
The key insight is then that the invariant holds despite such ``unwanted'' edges.
Given the invariant, the correctness proof follows by an induction on the depth of the circuit.

\SubParagraph{Formal Correctness.}
We now make the above insights formal.
We start with a technical lemma, which shows that if $x_i$ is D-reachable from $z_i$ in $\ov{G}$, then, in fact, $z_i\to x_i$,
as $x_i$ flows into $y_i$ and we have a statement $*y_i=z_i$.

\smallskip
\begin{restatable}{lemma}{lempcompletelemma}\label{lem:p_complete_lemma}
For all $i\in [n']$, $x_i$ is D-reachable from $z_i$ in $\ov{G}$ iff $x_i$ flows into $y_i$ in $\ov{G}$.
\end{restatable}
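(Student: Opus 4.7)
The plan is to establish the iff in two directions. The backward direction is immediate from the construction of the resolved graph, while the forward direction demands a careful structural analysis of reachability in $\ov{G}$. I focus on the substantive case where $A_i$ is a gate, since for an input $A_i$ the construction introduces no $y_i$ and the claim reduces to a trivial check of the initial edge $z_i \DEdge{\epsilon} x_i$ produced by the statement $x_i = z_i$.

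For the $(\Leftarrow)$ direction, suppose $x_i$ flows into $y_i$ in $\ov{G}$. Since the construction includes the type-4 statement $*y_i = z_i$ for every gate (item~3 of the construction), the definition of the resolved graph $\ov{G}$ guarantees that the edge $z_i \DEdge{\epsilon} x_i$ belongs to $\ov{E}$. This single edge is itself a path from $z_i$ to $x_i$ whose label $\epsilon$ lies in $\Dyck_1$, so $x_i$ is D-reachable from $z_i$.

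For the $(\Rightarrow)$ direction, I would begin by inventorying the incoming edges of $x_i$ in $\ov{G}$. When $A_i$ is a gate, $x_i$ has no incoming edge in the initial Dyck graph $G$: the only statement mentioning $x_i$ is $x_i^1 = \&x_i$, which induces an \emph{outgoing} $\&$-edge $x_i \DEdge{\&} x_i^1$. Therefore the last edge of any Dyck path ending at $x_i$ in $\ov{G}$ must be an $\epsilon$-edge $z_l \DEdge{\epsilon} x_i$ introduced by resolving some type-4 statement $*y_l = z_l$, which by the definition of $\ov{G}$ requires $x_i$ to flow into $y_l$. The task then reduces to showing that whenever such a Dyck path originates at $z_i$, the index $l$ must equal $i$, which directly gives $x_i$ flowing into $y_i$.

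To force $l = i$, I would use a structural \emph{level} invariant: assign each node a level reflecting its position in its owning gate's $x$-stack (the chain $x_m^1, \ldots, x_m^m$) or $y$-chain (the chain $y_m^1, \ldots, y_m^{m-1}, y_m$), and verify that $\&$-edges strictly increase level, $*$-edges strictly decrease it, and every $\epsilon$-edge (whether initial or added by resolution) connects a source and target whose levels match in a prescribed way. Because a path with label in $\Dyck_1$ has zero net $\&/\!*$ imbalance, it must return to a level compatible with its origin; a case analysis on the shape of the final hop $z_l \DEdge{\epsilon} x_i$ then forces $l=i$. The main obstacle here is the circular definition of $\ov{G}$: a resolution edge $z_l \DEdge{\epsilon} c$ is present only because $c$ flows into $y_l$, which in turn depends on other resolution edges. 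I would handle this by a simultaneous induction on the round of the fixed-point construction of $\ov{G}$ and on the circuit depth of $A_i$, so that every reachability witness invoked inside the level-based case analysis uses only resolution edges added in strictly earlier rounds, to which the inductive hypothesis applies.
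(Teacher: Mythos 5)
Your backward direction is correct and matches the paper. For the forward direction, you correctly identify that $x_i$ has no incoming edges in $G$ (for a gate $A_i$), so any Dyck path from $z_i$ to $x_i$ in $\ov{G}$ must end with a resolution edge $z_l\DEdge{\epsilon}x_i$ coming from $*y_l=z_l$, and you correctly reduce the claim to forcing $l=i$. But at exactly that point the proposal stops being a proof: the ``level invariant,'' the ``prescribed way'' in which $\epsilon$-edge endpoints match, the case analysis, and the simultaneous induction on the fixed-point round and circuit depth are all named without being constructed or verified. None of the several load-bearing claims (that $\&$/$*$-edges change level monotonically, that resolution $\epsilon$-edges preserve level compatibility, that a zero-balance path forces $l=i$) is established, so the gap is precisely the step that needed proving.

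For contrast, the paper does not need a level invariant or an induction over resolution rounds. It observes directly that removing the final $\epsilon$-edge leaves a Dyck path $z_i\Path z_l$, and that no such path can be nonempty in $\ov{G}$: $z_i$ has no outgoing edges in $G$, so its outgoing edges in $\ov{G}$ are all resolution $\epsilon$-edges to nodes that flow into $y_i$, and those targets lie in the $x$-chains and have \emph{only} $\&$-labeled outgoing edges (even in $\ov{G}$, since new edges always have $z$-sources). Thus any nonempty path from $z_i$ immediately reaches positive stack height; to return to zero it must traverse a $*$-edge; but every path in $\ov{G}$ that traverses a $*$-edge terminates inside a $y$-chain (never at a $z$-node). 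Hence the prefix must be empty, giving $l=i$. If you want to salvage your approach, you would need to make these structural observations explicit --- your ``level invariant'' would essentially have to encode that $z$-nodes are only reachable from $z_i$ via the empty path --- at which point you have reconstructed the paper's direct argument; the extra machinery of a round-indexed induction is not needed because the claims about edge shapes hold in the final fixpoint $\ov{G}$ itself.
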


The following lemma establishes our main invariant, which relates the encoding of the output of a gate in $\ov{G}$ to the encoding of its inputs.
\smallskip
\begin{restatable}{lemma}{lempcompleteconstruction}\label{lem:pcomplete_construction}
Consider a gate $A_i$. We have that $x_i$ is D-reachable from $z_i$ in $\ov{G}$ iff
\begin{compactenum}
\item $A_i$ is an AND gate $A_i=A_j\land A_k$, and $x_j$ is D-reachable from $z_j$ and $x_k$ is D-reachable from $z_k$, or
\item $A_i$ is an OR gate $A_i=A_j\lor A_k$, and $x_j$ is D-reachable from $z_j$ or $x_k$ is D-reachable from $z_k$.
\end{compactenum}
\end{restatable}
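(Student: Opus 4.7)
The plan is to prove the statement by strong induction on $i$, coupled with the previously established base case for inputs: if $A_l = 1$ then the type-$1$ statement $x_l = z_l$ puts $z_l \DEdge{\epsilon} x_l$ in $G$, whereas if $A_l = 0$ no statement touches $x_l$ or $z_l$ and hence $x_l$ is not D-reachable from $z_l$. At each inductive step I will invoke \cref{lem:p_complete_lemma} to replace the predicate ``$x_i$ is D-reachable from $z_i$'' by ``$x_i$ flows into $y_i$''. A crucial structural observation I will rely on is that every $\epsilon$-edge of $\ov G$ beyond those already present in $G$ has the form $z_l \DEdge{\epsilon} c$, arising from a type-$4$ statement $*y_l = z_l$ together with some $c$ that flows into $y_l$ in $\ov G$; in particular, by the inductive hypothesis, the specific edge $z_l \DEdge{\epsilon} x_l$ exists in $\ov G$ iff $A_l$ evaluates to $1$.

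For the AND case $A_i = A_j \wedge A_k$, the unique $\&$-edge out of $x_i$ is $x_i \DEdge{\&} x_i^1$, so the question reduces to whether $y_i$ is D-reachable from $x_i^1$ in $\ov G$. The local construction for gate $i$ provides exactly one chain of $\&$-edges going $x_i^1 \Path z_j$ and exactly one chain of $*$-edges into $y_i$ from a fixed source node (namely $y_k$ if $A_k$ is a gate, or $x_k$ if $A_k$ is an input). For the $(\Leftarrow)$ direction, assuming $A_j$ and $A_k$ are both true, I will exhibit a concrete Dyck-balanced witness $x_i^1 \Path y_i$ by concatenating the $\&$-chain to $z_j$, the induced $\epsilon$-edge $z_j \DEdge{\epsilon} x_j$, a Dyck-balanced detour provided by the flow of $x_k$ into $y_k$ (which exists by \cref{lem:p_complete_lemma} and the inductive hypothesis), and the $*$-chain into $y_i$.

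For the $(\Rightarrow)$ direction of the AND case, I will use a locality argument: the auxiliary pointers $x_i^l$ and $y_i^l$ introduced for gate $i$ occur in no other gate's statements, so any Dyck-balanced $x_i^1 \Path y_i$ in $\ov G$ must traverse the full $\&$-chain to $z_j$ and end with the full $*$-chain into $y_i$. Such a path must therefore turn around at $z_j$ via an $\epsilon$-edge out of $z_j$, which by the structural observation requires $A_j$ to be true, and must enter the source of the $*$-chain via a Dyck-balanced sub-path, which by \cref{lem:p_complete_lemma} requires $A_k$ to be true. Possible spurious $\epsilon$-edges of the form $z_l \DEdge{\epsilon} c$ with $c \neq x_l$ that arise as side-effects of the fixpoint do not produce shortcuts, because they leave the local subgraph of gate $i$ without offering matched closes back into the $*$-chain of $y_i$. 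The OR case $A_i = A_j \vee A_k$ is handled by the same scheme but with two parallel pairs of chains (via the $x_i^l, y_i^l$ and $\hat{x}_i^l, \hat{y}_i^l$ pointers); either pair alone yields a Dyck witness, and conversely any Dyck witness decomposes along exactly one of them.

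The main obstacle will be the $(\Rightarrow)$ direction of the AND case: carefully ruling out spurious Dyck paths that combine $\epsilon$-edges from many gates through unintended routes. The intuition, also flagged in the excerpt, is that an edge such as $z_4 \DEdge{\epsilon} x_5^1$ can appear in $\ov G$ without $A_4$ feeding into $A_5$'s truth; I will need to argue that any such stray $\epsilon$-edge crosses out of gate $i$'s local subgraph in a way that cannot be closed back to $y_i$ without forcing the same inductive conditions on $A_j$ and $A_k$.
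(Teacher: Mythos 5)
Your proposal captures the right intuition about locality and the role of \cref{lem:p_complete_lemma}, but it has a genuine gap in the ``only if'' direction that you yourself flag as ``the main obstacle'' without then closing it. You argue that the middle portion of the path must use the \emph{specific} $\epsilon$-edge $z_j \DEdge{\epsilon} x_j$, justified by the structural observation that this edge exists iff $A_j$ evaluates to~$1$. But that observation only classifies the particular edge $z_j\to x_j$; it says nothing about the other $\epsilon$-edges $z_j \DEdge{\epsilon} c$ with $c\neq x_j$ that $\ov{G}$ contains (the paper's own running example is $z_4\DEdge{\epsilon}x_5^1$), and those exist independently of whether $A_j$ is true. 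Your closing remark that such spurious edges ``leave the local subgraph of gate $i$ without offering matched closes'' is an assertion, not an argument: it is exactly the thing that must be proved, and the locality of $x_i^h,y_i^h$ does not by itself rule out a Dyck-balanced detour through several other gates' subgraphs that eventually re-enters via the head of the $*$-chain.

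The paper's proof closes this gap with a structural fact you do not invoke: in $\ov{G}$, once a path traverses a $*$-labeled edge, every subsequent edge is also $*$-labeled. Combined with the forced decomposition $P=P_1\circ P_2\circ P_3$ and the Dyck-balance constraint on $P'=P[2:]$, this forces the middle piece $P_2$ (from $z_j$ to the head of the $*$-chain) to contain no $*$-edges, hence no $\&$-edges, hence only $\epsilon$-edges; from there the required sub-witnesses for $z_j\Path x_j$ and $z_k\Path x_k$ fall out. Without this (or an equally sharp substitute), the detour possibility is not eliminated. Separately, your framing by strong induction on $i$ with an IH about \emph{evaluations} blends the statement with \cref{lem:pcomplete_correctness}; the paper deliberately proves the present lemma as a purely structural, induction-free biconditional about D-reachability, and then derives the evaluation statement afterward. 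Your combined-induction scheme could in principle be made to work, but as written it proves a different (mixed) statement and risks circularity between the two lemmas unless the dependency is organized carefully.
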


The following lemma establishes the correctness of the construction.
Its proof follows by an induction on the depth of the circuit, and using \cref{lem:pcomplete_construction} on the respective gate.

\smallskip
\begin{restatable}{lemma}{lempcompletecorrectness}\label{lem:pcomplete_correctness}
We have that $x_{n'}$ is D-reachable from $z_{n'}$ iff $A_{n'}$ evaluates to $1$.
\end{restatable}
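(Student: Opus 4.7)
The plan is to prove a slightly strengthened statement by strong induction on $i \in [n']$: for every $i$, $x_i$ is D-reachable from $z_i$ in $\ov{G}$ if and only if $A_i$ evaluates to $1$. The desired lemma is then the special case $i = n'$. Throughout, I work with the resolved Dyck graph $\ov{G}$ defined from the instance $(A, S)$, and use \cref{lem:resolved_graph} to translate D-reachability into points-to facts when needed.

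For the base case where $A_i$ is an input, the construction introduces the assignment $x_i = z_i$ (equivalently, the $\epsilon$-edge $z_i \DEdge{\epsilon} x_i$ in $G$) exactly when $A_i = 1$. If this edge is present, it alone witnesses D-reachability and the forward direction is immediate. For the converse, with $A_i = 0$, I would argue that no D-path from $z_i$ to $x_i$ can arise in $\ov{G}$: the only outgoing edges of $z_i$ in $G$ are $\&$-labeled, into the chains $x_j^j$ or $\hat{x}_j^j$ of the gate gadgets that consume $A_i$; the only fresh $\epsilon$-edges that $\ov{G}$ adds over $G$ come from resolving type-4 statements $*y_k = z_k$ and hence have some $z_k$ as source and a node that flows into $y_k$ as target. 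A short case analysis, together with \cref{lem:p_complete_lemma}, shows that such edges cannot be assembled into a Dyck-balanced path ending at the pure input pointer $x_i$. For the inductive step, where $A_i$ is a gate with inputs $A_j, A_k$ (with $j, k < i$), I would invoke \cref{lem:pcomplete_construction} directly. In the AND case, $x_i$ is D-reachable from $z_i$ iff both $x_j$ is D-reachable from $z_j$ and $x_k$ is D-reachable from $z_k$; by the inductive hypothesis this is equivalent to $A_j = 1 \land A_k = 1$, i.e., $A_i$ evaluates to $1$. The OR case is symmetric, with the disjunction on the right-hand side.

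The main obstacle I anticipate is the base case: ruling out all spurious D-paths from $z_i$ to $x_i$ for inputs with value $0$. Once that is settled, the induction is essentially mechanical, because \cref{lem:pcomplete_construction} has already packaged the heavy graph-theoretic reasoning about how $\epsilon$-edges propagate through the $x_i^\bullet$ / $y_i^\bullet$ gadgets of the AND and OR gates. I expect the base-case argument to factor through the observation that any $\epsilon$-edge added during the construction of $\ov{G}$ necessarily targets a pointer that flows into some $y_k$, a property that the pure input pointer $x_i$ does not enjoy in the constructed instance; combined with the fact that $z_i$'s only non-$\epsilon$ outgoing edges open parentheses, this forecloses the existence of a returning Dyck path and closes the argument.
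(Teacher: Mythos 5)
Your overall induction and the inductive step match the paper's proof: both invoke \cref{lem:pcomplete_construction} for the AND/OR cases and apply the induction hypothesis to the inputs $A_j, A_k$. The gap is in the base case, where you have the edge direction backwards and, as a result, anticipate a difficulty that does not exist.

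You write that ``the only outgoing edges of $z_i$ in $G$ are $\&$-labeled, into the chains $x_j^j$ or $\hat{x}_j^j$ of the gate gadgets that consume $A_i$.'' Recheck the Dyck-graph encoding: a type-2 statement $a=\&b$ produces the edge $b\DEdge{\&}a$. The gadget statement $z_i=\&x_l^l$ (for a gate $A_l$ that reads input $A_i$) therefore yields $x_l^l\DEdge{\&}z_i$, an edge \emph{into} $z_i$, not out of it; the same holds for $z_i=\&\hat{x}_l^l$. Consequently, when $A_i=0$, the pointer $z_i$ has \emph{no} outgoing edges at all in $G$ --- the only statement that would create one is $x_i=z_i$, which is absent precisely because $A_i=0$. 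Moreover, resolution only adds $\epsilon$-edges whose source is the right-hand side of a type-4 statement, and those type-4 statements have the form $*y_l=z_l$ exclusively for \emph{gates} $A_l$, never for inputs. Hence $z_i$ is a sink in $\ov{G}$ as well, and $x_i$ is trivially not D-reachable from $z_i$. There is no ``short case analysis'' to carry out, no spurious D-paths to rule out, and no need to invoke \cref{lem:p_complete_lemma} in the base case; your proposal leaves that part unfinished and, because of the reversed edges, aims it at the wrong obstacle. The paper's argument for the base case is two sentences and uses exactly the two observations above. Once the base case is fixed this way, your inductive step stands as written and the proof is complete.
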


We conclude this section with the proof of \cref{them:pcomplete}.

\begin{proof}[Proof of \cref{them:pcomplete}]
Membership in P is known (e.g., \cref{them:cubic_upper_bound}), so we need to argue that the problem is P-hard.
Observe that $s$ flows into $x_{n'}$ iff $x_{n'}$ is D-reachable from $z_{n'}$.
By \cref{lem:pcomplete_correctness}, we have that $s$ flows into $x_{n'}$ iff $A_{n'}$ evaluates to $1$.
By \cref{lem:resolved_graph}, we have that $s\in \PointsToSet{x_{n'}}$ iff $A_{n'}$ evaluates to $1$.
The desired result follows.
\end{proof}

\subsection{Bounded Andersen's Pointer Analysis is Parallelizable}\label{subsec:parallelizable}

Finally, in this section we develop an algorithm for solving bounded $\APAPA$, and thus prove \cref{them:bounded_parallelizable}.

\Paragraph{Parallel bounded $\APA$.}
The algorithm is a parallelization of $\BoundedAPAAlgo$ (\cref{algo:bounded}) for sequential bounded $\APA$.
The parallel algorithm performs the iterations of the main loop of \cref{line:boundeapadalgo_mainloop} sequentially,
while the body of the loop is run in parallel.
We outline the steps of the parallelization.
\begin{compactenum}
\item In \cref{line:boundedapaalgo_done}, $\BoundedAPAAlgo$ invokes the routine $\DOneAlgo$ for computing $\DReachability{1}$ (\cref{algo:done}).
Recall, by \cref{lem:d1_distance}, that the distance between two reachable nodes is $O(n^2)$, and hence the same bound holds for the maximum stack height of the shortest path that witnesses reachability between such nodes.
It follows that $\DReachability{1}$ can be reduced to standard graph reachability on a graph with $O(n^3)$ nodes of the form $(u,i)$, where $u$ is a node of the Dyck graph and $i\in [O(n^2)]$ encodes the stack height. 
Finally, graph reachability is solved in parallel~\cite{Papadimitriou93}.
\item In \cref{line:boundedapaalgo_tr}, $\BoundedAPAAlgo$ constructs another graph $H$ and solves all-pairs reachability on $H$.
Now, we perform the construction of $H$ in parallel, while all-pairs reachability in $H$ is also computed in parallel, as in the previous item.
\item Finally, we execute the two nested loops in \cref{line:boundedapaalgo_s4} and \cref{line:boundedapaalgo_path} in parallel, by using one processor per triplet of pointers $(a,b, c)$.
\end{compactenum}

The correctness of the parallelization follows directly from the correctness of the sequential version (\cref{them:bounded_apa_subcubic}).
We conclude with the complexity analysis, which establishes \cref{them:bounded_parallelizable}.

\begin{proof}[Proof of \cref{them:bounded_parallelizable}]
In each iteration, the parallel running time is the time required to compute the transitive closure of a graph of $O(n^3)$ nodes.
Since graph reachability is in $\NC^2$~\cite{Papadimitriou93}, each iteration takes $O(\log^2 n)$ time.
Hence, executing the main loop sequentially for $j$ iterations  yields $O(j\cdot \log^2 n)$ time.
In particular, for $j=\log^{i} n$ iterations, the algorithm solves $O(4,\log^i n)$-bounded $\APAPA$ in $O(\log^{i+2} n)$ parallel time,
hence the problem is in NC$^{i+2}$.

The desired result follows.
\end{proof}

\section{Conclusion}\label{sec:conclusion}
Andersen's Pointer Analysis is a standard approach to static, flow-insensitive pointer analysis.
Despite its long history and practical importance, the complexity of the analysis had remained illusive.
In this work, we have drawn a rich fine-grained and parallel complexity landscape based on various aspects of the problem.
We have shown that even deciding whether a single pointer may point to a specific heap location is unlikely to have sub-cubic complexity,
and additionally, the problem is not parallelizable.
These results strongly characterize the hardness of the problem. 
On the positive side, we have presented a bounded version of the problem that becomes solvable in nearly matrix-multiplication time,
and have established a conditional quadratic lower bound for the bounded version of the problem.

Our positive results build some stable ground for further practical improvements for Andersen's pointer analysis.
We expect that our solution to the bounded version of the problem, which essentially reduces to a small number of standard transitive closure operations,
can provide the basis for faster practical approaches:~graph transitive closure solvers have been heavily optimized over the years in both software and hardware, while the full parallelizability of the bounded problem opens itself up to more efficient multithreaded implementations.
As our focus in this work has been on characterizing the tractability landscape of the problem, we have left the practical realizations of our results for interesting future work.

\begin{acks}
We wish to thank Phillip G. Bradford for bringing to our attention his result on $\DReachability{1}$ and for helpful comments in comparing his algorithm to ours, as well as anonymous reviewers for their constructive feedback in an earlier version of this manuscript.
\end{acks}


\bibliography{bibliography}

\clearpage
\appendix
\section{Proofs}\label{appsec:proofs}

\subsection{Proofs of \cref{sec:prelim}}\label{appsubsec:proof_prelim}

\smallskip
\lemapatodyck*
\begin{proof}
We prove each direction separately.

\noindent{$(\Rightarrow)$.}
Assume that $b\in \PointsToSet{a}$, and we argue that there exists a pointer $c$ such that
(i)~$b\DEdge{\&}c$ and
(ii)~$a$ is D-reachable from $c$,
and hence $b$ flows into $a$.
We employ the operational semantics of $\APA$.
Let $\Program$ be a minimal program that results in $b\in \Store{a}$.
The proof is by induction on the length of $\Program$.
For the base case, we have $|\Program|=1$. Then $\Program$ consists of a single statement $a=\&b$, hence the lemma holds for $a=c$.
Now let $|\Program|=\ell+1$, and by the induction hypothesis the statement holds for all points-to relationships witnessed by programs $\Program'$ with length $|\Program'|=\ell$.
We distinguish the last statement $s$ of $\Program$.
Note that $s$ is either a type~1 or a type~3 statement.
\begin{compactenum}
\item $s$ is of the form $a=d$. By the induction hypothesis on $\Program'$, we have a node $c'$ such that 
(i)~$b\DEdge{\&}c'$ and
(ii)~$d$ is D-reachable from $c'$.
But then $a$ is also D-reachable from $c$, and hence the lemma holds for $c=c'$.
\item $s$ is of the form $a=*d$. Then there exists a pointer $e$ such that $\Program'$ witnesses $e\in \Store{d}$ and $b\in \Store{e}$.
By the induction hypothesis on $e\in \Store{d}$, there exists a node $c_1$ such that
(i)~$e\DEdge{\&} c_1$ and
(ii)~$d$ is D-reachable from $c_1$.
Note that this implies that $a$ is D-reachable from $e$.
Similarly, by the induction hypothesis on $b\in \Store{e}$, there exists a node $c_2$ such that
(i)~$b\DEdge{\&} c_2$ and
(ii)~$e$ is D-reachable from $c_2$.
It follows that $a$ is D-reachable from $c_2$, hence the lemma holds for $c=c_2$.
\end{compactenum}

\noindent{$(\Leftarrow)$.}
Assume that $b$ flows into $a$, hence, there exists a pointer $c$ such that 
(i)~$b\DEdge{\&}c$, and
(ii)~$a$ is D-reachable from $c$,
and we argue that $b\in \PointsToSet{a}$.
The proof is by induction on the label $\Label(P)$ of the path $P\colon c\Path a$ that witnesses the Dyck-reachability.
Note that, by construction, we have a statement $c=\&b$, and the statement holds if $P$ has no edges.
\begin{compactenum}
\item $\Label(P)=\epsilon$.
Then we have a statement $a=c$, which, together with $c=\&b$, implies that $c\in \PointsToSet{a}$.
\item $\Label(P)=\&  \StartNonTerminal *$.
Then there exist  intermediate nodes $d_1$, $d_2$,  such that $P$ can be decomposed as 
$c\DEdge{\&} d_1\DPath{\StartNonTerminal}d_2  \DEdge{*} a$.
Let $P'$ be the intermediate $d_1\DPath{\StartNonTerminal}d_2$ path,
and by the induction hypothesis, we have $c\in \PointsToSet{d_2}$.
By construction, we have a statement $a=*d_2$, and thus $b\in \PointsToSet{b}$.
\item $\Label(P)=\StartNonTerminal \StartNonTerminal$.
Then there exists an intermediate node $d$ such that $P$ can be decomposed as $P\colon P_1\circ P_2$,
where $P_1\colon c\DPath{\StartNonTerminal} d$ and $P_2\colon d\DPath{\StartNonTerminal} a$.
By the induction hypothesis on $P_1$, we have $b\in \PointsToSet{d}$.
Then, by the induction hypothesis on $P_2$, we have $b\in \PointsToSet{a}$.
\end{compactenum}

The desired result follows.
\end{proof}

\smallskip
\lemresolvedgraph*
\begin{proof}
Let $G_1,\dots, G_{\ell}$ be a sequence of Dyck graphs where $G_1=G$ and $(G=(V,E),S_4)$ is the Dyck-graph representation of $(A,S)$, and $G_{i+1}$ is constructed from $G_i$ by 
\begin{compactenum}
\item identifying all nodes $c$ that flow into some node $a$ for which there is a statement $*a=b$, and
\item inserting an edge $(b,c,\epsilon)$ in $G_{i+1}$.
\end{compactenum}
Clearly this sequence is finite, and $G_{\ell}=\ov{G}$.
It is straightforward to establish by induction that, in any $G_i$, if $b$ flows into $a$ then $b\in \PointsToSet{a}$.
For the inverse direction, a similar induction establishes that for each $i$, 
if there is a $(4,i)$-bounded program $\Program$ that witnesses $b\in \PointsToSet{a}$
then $b$ flows into $a$ in $G_i$.

The desired result follows.
\end{proof}

\subsection{Proofs of \cref{sec:bounded}}\label{appsubsec:proof_bounded}

\smallskip
\lembellshapecorrectness*
\begin{proof}
The proof is by induction on $i$.
For $i=1$, the claim holds directly by construction.
Now assume that the claim holds for some $i$, and we show it holds for $i+1$.

We start with \cref{item:ind_epsilon}.
Consider any bell-shaped path $P\colon x\DPath{}y$ with $\MaxStackHeight(P)\leq 2^{i+1}-1$.
If $\MaxStackHeight(P)\leq 2^{i}-1$, the claim holds by the induction hypothesis and the edge $x_1\to y_1$ in $G_{i+1}$.
Otherwise, let $j_1, j_2$ be the first and last index of $P$ such that
$\StackHeight(P[:j_1])=\StackHeight(P[:j_2])=2^i$,
and $x'=P[j_1]$ and $y'=P[j_2]$. 
We have
\begin{align*}
\StackHeight(P[j_1:j_2])\leq  \MaxStackHeight(P) - 2^i = 2^{i+1} -1 -2^i = 2\cdot 2^{i} -1 -2^i = 2^i-1\ .
\end{align*}
Note that $P[:j_1]$ ends with a $\&$-labeled edge, while $P[j_2:]$ starts with a $*$-labeled edge.
In addition, since $P$ is bell-shaped, $P[j_1:j_2]$ is also bell-shaped.
By the induction hypothesis, we have $x'\Path y'$ in $G_i$, and by construction,
$x_2'\to y_2'$ in $G_{i+1}$.
Moreover, note that $P[:j_1]$ is monotonically increasing and $P[j_2:]$ is monotonically decreasing, thus, by the induction hypothesis,
we have $x_1\to x_3'$ and $y_3'\to y_1'$ in $G_{i}$.
By construction, we have $x_1\to x_2'$ and $y_2'\to y_1$ in $G_{i+1}$.
Thus, we have a path $x_1\to x_2'\to y_2' \to y_1$ in $G_{i+1}$, hence $x_1\Path y_1$, as desired.

We proceed with \cref{item:ind_increase}.
Consider any monotonically increasing path $P\colon x\DPath{2^{i}\&} y$ that ends with a $\&$-labeled edge.
Let $j$ be the first index of $P$ such that $\StackHeight(P[:j])=2^{i-1}$, and note that $P[:j]$ ends with a $\&$-labeled edge.
Note that $P[:j]$ and $P[j:]$ are monotonically increasing with $\StackHeight(P[:j])=\StackHeight(P[j:])=2^{i-1}$.
Let $z=P[j]$, and by the induction hypothesis, we have $x_1\Path z_3$ and $z_1\Path y_3$ in $G_i$.
By construction, we have $x_1\to z_2$ and $z_2\to y_3$ in $G_{i+1}$, and thus $x_1\Path y_3$, as desired.

Finally, \cref{item:ind_decrease} is similar to \cref{item:ind_increase} and is omitted for brevity. 
The desired result follows.
\end{proof}

\smallskip
\lempmaxima*
\begin{proof}
Clearly, $P'[h]$ cannot be a node of some monotonically decreasing path $P_{\ell}^{\downarrow}$.
This is because $\StackHeight(P[:l_{\ell-1}-1])$ is larger than the stack height of $P$ in all nodes of $P_{\ell}^{\downarrow}$.
Hence, $P'[h]$ is a node of some monotonically increasing path $P_{\ell}^{\uparrow}$.
Due to the monotonicity of $P_{\ell}^{\uparrow}$,  $P'[h]$ has to be the last node of $P_{\ell}^{\uparrow}$,
and thus the first node of $P[j_{\ell+1}:l_{\ell+1}]$. Hence, $f^{-1}(h)=j_{\ell+1}$.

The desired result follows.
\end{proof}

\smallskip
\lemtwomaxima*
\begin{proof}
We first argue that $P_{j_{\ell}+1}^{\downarrow}=\epsilon$.
Assume towards contradiction otherwise.
By maximality of the bell-shaped sub-paths of $P$, we have that $\StackHeight(P[:l_{\ell}+1])<\StackHeight(P[:l_{\ell}])$ (note that $P[:l_{\ell}+1]$ is the second node of $P_{j_{\ell}+1}^{\downarrow}$).
It suffices to argue that $P_{j_{\ell}}^{\uparrow}\neq\epsilon$, which will violate the maximality of the bell-shaped sub-path $P[j_{\ell}:l_{\ell}]$.
Indeed, if $P_{j_{\ell}}^{\uparrow}=\epsilon$, this would violate the fact that $P'$ has a maxima in $h$.
Hence $P_{j_{\ell}+1}^{\downarrow}=\epsilon$.
 
We now argue that $P_{j_{\ell}+1}^{\uparrow}=\epsilon$.
Indeed, given that $P_{j_{\ell}+1}^{\downarrow}=\epsilon$, if $P_{j_{\ell}+1}^{\uparrow}\neq \epsilon$
we would have $\StackHeight(P[:j_{\ell}])<\StackHeight(P[:j_{\ell}]+1)$, which contradicts the fact that $P'$ has a local maxima at $h$.

The desired result follows.
\end{proof}

\smallskip
\lemmaxdistancehalved*
\begin{proof}
Consider that $P'$ has a local maxima at some $h$.
By \cref{lem:p_maxima}, we have that $f^{-1}(h)=j_{\ell}$ for some $\ell\in [k]$.
By \cref{lem:two_maxima}, we have that  $P_{j_{\ell}+1}^{\downarrow}= P_{j_{\ell}+1}^{\uparrow}=\epsilon$.
Thus $\StackHeight(P[:l_{\ell}])=\StackHeight(P[:j_{\ell+1}])$.
Since $P[j_{\ell}, l_{\ell}]$ is bell-shaped, we have $\StackHeight(P[:j_{\ell}]) = \StackHeight(P[:j_{\ell+1}])$ and thus $P'$ does not have a local maxima in $f(j_{\ell}+1)$.
Hence, we can associate with the local maxima of $P'$ at $h$ the two unique local maxima of $P$ that appear in the bell-shaped paths $P[j_{\ell}, l_{\ell}]$ and $P[j_{\ell+1}, l_{\ell+1}]$.

The desired result follows.
\end{proof}

%

\subsection{Proofs of \cref{sec:ovhard}}\label{appsubsec:proof_ovhard}

\smallskip
\lemovcompleteness*
\begin{proof}
We first argue that in the solution graph $\ov{G}$, we have $b_D^{i_1}\DEdge{\epsilon} u_D^{i_2}$, which implies that $s$ flows into $t$.
We prove by induction the following statement.
Consider any $j\in [D]$.
\begin{compactenum}
\item If $j$ is odd, we have $v_{j}^{i_2}\DEdge{\epsilon} a_{j}^{i_1}$ in $\ov{G}$.
\item If $j$ is even, we have $b_j^{i_1}\DEdge{\epsilon} u_{j}^{i_2}$ in $\ov{G}$.
\end{compactenum}

For the base case, let $j=1$.
By construction, we have $a_{j}^{i_1}\DPath{\&}z$ and $z\DPath{\epsilon} u_{j}^{i_2}$.
In addition, since $x^{i_1}[j]=y^{i_2}[j]=0$, we have $a_{j}^{i_1}\DPath{2\&}z$ or $z\DPath{\epsilon} u_{j}^{i_2}$.
The above imply that $a_{j}^{i_1}$ flows into $u_{j}^{i_2}$, hence because of the statement $*u_{j}^{i_2}=v_{j}^{i_2}$,
we have $v_{j}^{i_2}\DEdge{\epsilon} u_{j}^{i_1}$ in $\ov{G}$, as required.

Now assume that the statement holds for $j-1$, and we argue that it holds for $j$.
First, assume that $j$ is odd.
By the induction hypothesis, we have $b_{j-1}^{i_1}\DEdge{\epsilon} u_{j-1}^{i_2}$ in $\ov{G}$.
By construction, we have $a_{j}^{i_1}\DPath{\&}b_{j-1}^{i_1}$ and $u_{j-1}^{i_2}\DPath{*}u_{j}^{i_2}$.
In addition, since $x^{i_1}[j]=y^{i_2}[j]=0$, we have $a_{j}^{i_1}\DPath{2\&}b_{j-1}^{i_1}$ or $u_{j-1}^{i_2}\DPath{\epsilon} u_{j}^{i_2}$.
The reasoning then is similar to the case of $j=1$.

Finally, assume that $j$ is even.
By the induction hypothesis, we have $v_{j-1}^{i_2}\DEdge{\epsilon} a_{j-1}^{i_1}$ in $\ov{G}$.
By construction, we have $u_{j}^{i_2}\DPath{\&}v_{j-1}^{i_2}$ and $a_{j-1}^{i_1}\DPath{*}a_{j}^{i_1}$.
In addition, since $x^{i_1}[j]=y{^i_2}[j]=0$, we have $u_{j}^{i_2}\DPath{2\&}v_{j-1}^{i_2}$ or $a_{j-1}^{i_1}\DPath{\epsilon}a_{j}^{i_1}$.
The above imply that $u_{j}^{i_2}$ flows into $a_{j}^{i_1}$, hence because of the statement $*a_{j}^{i_1}=b_{j}^{i_1}$, 
we have $b_{j}^{i_1}\DEdge{\epsilon} u_{j}^{i_2}$, as required.

Finally, note that our above analysis concerns $O(D)=O(\log n)$ nodes.
Hence there is a witness $\Program$ for $s\in \PointsToSet{t}$ that has length $\Otilde(1)$.

The desired result follows.
\end{proof}

\smallskip
\lemovsoundness*
\begin{proof}
First, observe that if $s\in \PointsToSet{t}$, there exist $i_1, i_2\in[n']$ such that $u_{D}^{i_2}$ is D-reachable from $b_{D}^{i_1}$.
Note that, in fact, $b_{D}^{i_1}\DEdge{\epsilon}u_{D}^{i_2}$, as $*$-labeled edges enter nodes that have no outgoing edges in $\ov{G}$.


We prove the following statement
For any $l_1, l_2\in [n']$, for any $j\in [D]$, the following hold.
\begin{compactenum}
\item If $j$ is odd and $a_{j}^{l_1}$ is D-reachable from $v_{j}^{l_2}$ in $\ov{G}$, then $\sum_{j'\leq j}x^{l_1}[j']\cdot y^{l_2}[j']=0$.
\item If $j$ is even and $u_{j}^{l_2}$ is D-reachable from $b_j^{l_1}$ in $\ov{G}$, then $\sum_{j'\leq j}x^{l_1}[j']\cdot y^{l_2}[j']=0$.
\end{compactenum}
For $j=D$, we have that $x^{l_1}$ and $y^{l_2}$ are orthogonal.
The proof is by induction on $j$.

For the base case, let $j=1$, and assume that $a_{1}^{l_1}$ is D-reachable from $v_{1}^{l_2}$, hence $a_{1}^{l_1}$ flows into $u_{1}^{l_2}$.
Note that all paths starting from $a_{1}^{l_1}$ with a $\&$-labeled edge go through $z$.
Hence, $a_{1}^{l_1}$ can only flow into $u_{1}^{l_2}$ via a path $P\colon P_1 \circ P_2$, where
$P_1\colon a_{1}^{l_1} \DPath{} z$ and $P_2\colon z \DPath{}u_{1}^{l_2}$.
Moreover, $\Label(P_1)=\&$ or $\Label(P_1)=\&\&$, and $\Label(P_2)=\epsilon$ or $\Label(P_2)=*$.
It follows easily by construction that $x^{l_1}[1]=0$ or $y^{l_2}[1]=0$, as otherwise $\Label(P_1)=\&$ and $\Label(P_2)=*$, which would contradict the fact that $a$ flows into $u_{1}^{l_2}$ via $P$.

Now assume that the statement holds for $j-1$, and we argue that it holds for $j$.
First assume that $j$ is odd.
By the induction hypothesis, we have that if $u_{j-1}^{l_2}$ is D-reachable from $b_{j-1}^{l_1}$ then $a_{j-1}^{l_1}\cdot b_{j-1}^{l_2}=0$.
Note that $a_{j}^{l_1}$ flows into $u_{j}^{l_2}$.
In addition, all paths starting from $a_{j}^{l_1}$ with a $\&$-labeled edge go through $b_{j-1}^{l_1}$, and thus we 
indeed have that $u_{j-1}^{l_2}$ is D-reachable from $b_{j-1}^{l_1}$.
The proof is similar to the base case, where $z$ is replaced by $b_{j-1}^{l_1}$.

Finally, assume that $j$ is even.
By the induction hypothesis, we have that if $a_{j-1}^{l_1}$ is D-reachable from $v_{j}^{l_2}$ then $a_{j}^{l_1}\cdot b_{j}^{l_2}=0$.
Note that $u_{j}^{l_2}$ flows into $a_{j}^{l_1}$.
In addition, all paths starting from $u_{j}^{l_2}$ with a $\&$-labeled edge go through $v_{l_2}^{j-1}$, and thus we
indeed have that $a_{j}^{l_1}$ is D-reachable from $v_{j}^{l_2}$.
The proof is similar to the previous case, where $a_{j}^{l_1}$ is replaced by $u_{j}^{l_2}$ and $b^{j-1}{l_1}$ is replaced by $v_{j}^{l_2}$.

The desired result follows.
\end{proof}

\subsection{Proofs of \cref{sec:pcomplete}}\label{appsubsec:proof_pcomplete}

\smallskip
\lempcompletelemma*
\begin{proof}
The ``if'' direction is straightforward, so we focus on the ``only if'' direction.
Assume towards contradiction that $x_i$ is D-reachable from $z_i$ but $x_i$ does not flow into $y_i$.
Observe that there are no incoming edges to $x_i$ in $G$.
Hence the only way to have $x_i$ D-reachable from $z_i$ in $\ov{G}$ is to have some $l\in[n]$ such that 
(i)~$x_i$ flows into some $y_l$, which leads to $z_l\DEdge{\epsilon} x_i$, and
(ii)~$z_l$ is D-reachable from $z_i$.
It suffices to argue that (ii) cannot hold.
Since $z_i$ does not have outgoing edges in $G$, all outgoing edges of $z_i$ in $\ov{G}$ are due to applications of type~4 statements on $*y_i=z_i$, and all such edges are to nodes that have an outgoing edge labeled with $\&$.
Observe that all such nodes only have outgoing edges labeled with $\&$ in $\ov{G}$.
Hence, every path $P$ that starts with $z_i$ has $\MaxStackHeight(P)>0$.
Thus, if $z_l$ is D-reachable from $z_i$, the witness path needs to traverse at least one edge  labeled with $*$.
Finally, observe that all paths in $\ov{G}$ that traverse such an edge end in a $y$ node.
Hence no such path exists in $\ov{G}$, a contradiction.

The desired result follows.
\end{proof}

\smallskip
\lempcompleteconstruction*
\begin{proof}
The ``if'' direction is straightforward, so we focus on the ``only if'' direction.
Assume that $x_i$ is D-reachable from $z_i$ in $\ov{G}$.
By \cref{lem:p_complete_lemma}, we have that $x_i$ flows into $y_i$ via a path $P$.
Note that for every path in $\ov{G}$ that traverses an edge labeled with $*$, all the following edges are also labeled with $*$.
We argue that the claim holds when $A_i$ is an AND gate, as the reasoning is similar for when $A_i$ is an OR gate.
Observe that $P$ has the form $P\colon P_1 \circ P_2 \circ P_3$, where
\begin{align*}
P_1\colon x_i \xrightarrow{\&} x_i^1 \xrightarrow{\&} x_i^2 \xrightarrow{\&} \cdots \xrightarrow{\&} x_i^i \xrightarrow{\&} z_j \qquad \text{and}\qquad
P_3\colon x_k \xrightarrow{*} y_i^{i-1} \xrightarrow{*} y_i^{i-2} \xrightarrow{*} \cdots \xrightarrow{*} y_i^1 \xrightarrow{*} y_i\ .
\end{align*}
In addition, $P_2$ witnesses the D-reachability of $x_k$ from $z_j$, and does not traverse any edges labeled with $*$, and thus no edges labeled with $\&$ either.
It follows that $P_2$ must contain the sub-paths $P_2^j$ and $P_2^k$ that witness the D-reachability of $x_j$ from $z_j$, and $x_k$ from $z_k$, respectively.

The desired result follows.
\end{proof}

\smallskip
\lempcompletecorrectness*
\begin{proof}
We prove by induction on the depth of the circuit that for every $i$, we have that $x_i$ is D-reachable from $z_i$ iff $A_i$ evaluates to $1$.

For the base case, we have that $A_i$ is an input.
If $A_i=1$, then, by construction we have $z_i\DEdge{\epsilon} x_i$ in $G$ and thus in $\ov{G}$.
On the other hand, if $A_i=0$, we have that $z_i$ does not have any outgoing edges in $G$.
Note that, since $A_i$ is an input, $z_i$ does not appear in type-4 statements, hence it will have as many outgoing edges in $\ov{G}$ as in $G$.
The claim then holds by the fact that $z_i$ does not have any outgoing edges in $G$.

We now proceed with the inductive case.
Let $A_j$ and $A_k$ be the inputs to $A_i$.
Assume that $A_i=A_j\land A_k$. 
By \cref{lem:pcomplete_construction}, we have that $x_i$ is D-reachable from $z_i$ iff $x_j$ and $x_k$ are D-reachable from $z_j$ and $z_k$, respectively.
By the induction hypothesis, we have that $x_j$ (resp., $x_k$) is D-reachable from $z_j$ (resp., $z_k$) iff $A_j$ (resp., $A_k$) evaluates to $1$.
Thus, $x_i$ is D-reachable from $z_i$ iff $A_i$ evaluates to $1$.
A similar analysis holds for $A_i=A_j\lor A_k$.

The desired result follows.
\end{proof}

\end{document}